\newcommand{\xkh}[1]{\left(#1\right)}
\newcommand{\dkh}[1]{\left\{#1\right\}}
\newcommand{\zkh}[1]{\left(#1\right)}
\newcommand{\nj}[1]{\langle {#1} \rangle}
\newcommand{\innerp}[1]{\langle {#1} \rangle}
\newcommand{\norm}[1]{\|{#1}\|_2}
\newcommand{\norms}[1]{\|{#1}\|}
\newcommand{\abs}[1]{\left\lvert#1\right\rvert}
\newcommand{\Abs}[1]{\lvert#1 \rvert}
\newcommand{\E}{{\mathbb E}}
\newcommand{\PP}{{\mathbb P}}
\newcommand{\1}{{\mathds 1}}
\newcommand{\R}{{\mathbb R}}
\newcommand{\T}{\top}
\newcommand{\C}{{\mathbb C}}
\newcommand{\cA}{{\mathcal A}}
\newcommand{\x}{{\widehat{\bm{x}}}}
\newcommand{\vx}{{\bm x}}
\newcommand{\vw}{{\bm w}}
\newcommand{\vy}{{ \bm{ y}}}
\newcommand{\vu}{{\bm u}}
\newcommand{\vv}{{\bm v}}
\newcommand{\vz}{{\bm z}}
\newcommand{\vb}{{\bm b}}
\newcommand{\ve}{{\bm e}}
\newcommand{\va}{{\bm a}}
\newcommand{\cN}{{\mathcal N}}
\renewcommand{\omega}{\eta}
\newcommand{\RNum}[1]{\uppercase\expandafter{\romannumeral #1\relax}}
\renewenvironment{proof}{\noindent {\it \textbf{Proof}~}}{\hfill \qed \par}
\begin{document}

\title{Strong convexity of affine phase retrieval}


\author{Meng Huang 
         \and
        Zhiqiang Xu 
}


\institute{Meng Huang \at
              School of Mathematical Sciences, Beihang University, Beijing, 100191, China \\
                \email{menghuang@buaa.edu.cn}           
           \and
           Zhiqiang Xu \at
           LSEC, ICMSEC, Academy of Mathematics and Systems Science, Chinese Academy of Sciences, Beijing 100190, China;\\
          School of Mathematical Sciences, University of Chinese Academy of Sciences, Beijing 100049, China\\
              \email{xuzq@lsec.cc.ac.cn}
}


\maketitle

\begin{abstract}
The recovery of a signal from the intensity measurements with some entries being known in advance is termed as {\em affine phase retrieval}.
In this paper, we prove  that   a natural least squares formulation for the affine phase retrieval is strongly convex on the entire space under some mild conditions, provided   the measurements  are complex Gaussian random vecotrs  and the measurement number $m \gtrsim d \log d$ where $d$ is the dimension of signals. Based on the result, we prove that  the simple gradient descent method for the affine phase retrieval converges linearly to the target solution with high probability  from an arbitrary initial point. These results show  an essential difference between the  affine  phase retrieval and the classical phase retrieval, where the least squares formulations for the classical phase retrieval are non-convex.

\keywords{Phase retrieval \and Strong convexity \and Random measurements \and Side information}
 \subclass{94A12 \and 65K05 \and 90C26 \and 60B20}
\end{abstract}

\section{Introduction}
\subsection{Problem setup}
The problem of recovering $\vx$ from the intensity-only measurements
\[
y_j=\abs{\nj{\va_j,\vx}+b_j}^2, \quad j=1,\ldots,m
\]
is termed as {\em affine phase retrieval}. Here,  $\vx\in \C^d$ is an arbitrary unknown vector,
  $\va_j \in \C^d, j=1,\ldots,m$ are known sampling vectors,  $\vb:=(b_1,\ldots,b_m)^\T \in \C^m$ is the bias vector and $y_j \in \R, j=1,\ldots,m$ are observed measurements.
The affine phase retrieval is of significant importance to a number of fields,  such as holography \cite{liebling2003local,latychevskaia,barmherzig,guizar} and Fourier phase retrieval problem \cite{beinert2015,beinert2018,huangK2016,bendory},
where a ``reference'' is situated  or a part of signal is a priori known before capturing the intensity-only measurements.
It has been show theoretically that $m\ge 4d-1$ generic measurements are sufficient to recover all the signals $\vx$ exactly \cite{gaoaffine,huang2021}.

A natural approach to recover the signal $\vx$ is to solve the following program:
\begin{equation} \label{eq:fz}
\min_{\vz\in \C^d} \quad  f(\vz):=\frac1{2m} \sum_{j=1}^m \xkh{\abs{\nj{\va_j,\vz}+b_j}^2-y_j}^2.
\end{equation}
If all $b_j$ are zeros then the above program becomes
\begin{equation}\label{eq:gz}
\min_{\vz\in \C^d} \quad  g(\vz):=\frac1{2m} \sum_{j=1}^m \xkh{\abs{\nj{\va_j,\vz}}^2-y_j}^2,
\end{equation}
which is the intensity-based model \cite{WF,turstregion,TAF,RWF} for solving the classical phase retrieval.
Due to the non-convexity of $g$, the  algorithms for solving \eqref{eq:gz} rely on the carefully-designed initialization heavily \cite{WF,Gaoxu,RWF} or require $g$ possesses the benign geometrical landscape \cite{turstregion,2020a,cai2019}. Our focus is the program \eqref{eq:fz}. So we are interested in the following questions:
{\em Could the program \eqref{eq:fz} be solved  from an arbitrary initial point via the simple gradient descent method? Can we establish the rate of convergence? }

\subsection{Related Work}
\subsubsection{ Phase Retrieval}
The classical phase retrieval problem aims to recover a signal $\vx\in\C^d$ from the intensity-only measurements
\begin{equation} \label{eq;phaR}
y_j=\abs{\nj{\va_j,\vx}}^2, \quad  j=1,\ldots,m.
\end{equation}
It arises in various disciplines and has been investigated recently due to its
 wide range of practical applications in fields of physical sciences and engineering, such as X-ray crystallography \cite{harrison1993phase,millane1990phase}, diffraction imaging \cite{shechtman2015phase,chai2010array}, microscopy \cite{miao2008extending}, astronomy \cite{fienup1987phase}, optics and acoustics \cite{walther1963question,balan2006signal} etc,
where the  detector can record only the diffracted intensity while losing the phase information.  Despite its simple mathematical form, it has been shown that to reconstruct a finite-dimensional discrete signal from its Fourier transform magnitudes is generally NP-complete \cite{Sahinoglou}.

Note that 
$\abs{\nj{\va_j,\vx}}^2=\abs{\nj{\va_j,e^{i\theta}\vx}}^2$ for any $\theta\in \R$. Therefore the recovery of  $\vx$ is up to a global phase for classical phase retrieval.
It was shown  that $m\ge 4d-4$  generic measurements  suffice to recover $\vx$  for the complex case \cite{conca2015algebraic,wangxu} and $m\ge 2d-1$ are sufficient for the real case \cite{balan2006signal}.
In the perspective of algorithms, some efficient gradient descent methods  have  been proposed to solve the classical phase retrieval problem based on  some natural least squares formulations.
Due to the non-convexity of those loss functions, the convergence of the algorithms usually require some sophisticated techniques, such as carefully-designed initialization\cite{huangwang,tan2019phase,WF,TAF,TWF}, benign geometric landscape \cite{turstregion,2020a,cai2019}.
For instance,   in \cite{turstregion}, Sun, Qu and  Wright study the global geometry structure of the following loss function
\[
F(\vz):=\frac{1}{m}\sum_{j=1}^m \xkh{\abs{\nj{\va_j,\vz}}^2-y_j}^2,
\]
and show $F$  does not have any spurious local minima under $m=O(d \log^3 d)$ complex Gaussian random measurements. In other words, all minimizers of $F$ are the target signal $\vx$ up to a global phase,  and there is a negative directional curvature around each saddle point. With this benign geometric landscape in place, the authors of \cite{turstregion} develop a trust-region  method to find a global solution of $\min_\vz F(\vz)$  with random initialization. In fact, armed with these two conditions, the vanilla gradient descent converges almost surely to the global solution with random initialization \cite{Leegradient}, but, to our knowledge,  there is no result about the convergence rate.  To understand the convergence properties of gradient descent with random initialization, Chen et al. \cite{chenrandom} use the ``leave-one-out'' arguments coupled with finer dynamics to prove that the gradient descent with random initialization enjoys nearly linear convergence.
We refer the reader to survey papers  \cite{shechtman2015phase,Chinonconvex,jaganathan2016phase} for accounts of recent developments in the theory, algorithms and applications of phase retrieval.

\subsubsection{Holographic phase retrieval}
The holography was introduced by Gabor in 1948 when he was working on improving the resolution of the invented electron microscope \cite{gabor1948}, and  he was awarded the Nobel Prize in Physics in 1971. In holographic optics,  a reference signal, whose structure is a prior known, is included in the diffraction patterns alongside the signal of interest\cite{latychevskaia,barmherzig,guizar}. Mathematically, when a known reference $\vx' \in \C^k$ is situated to the object $\vx \in \C^d$, it gives $\x:=\left( \begin{array}{l} \vx \\ \vx' \end{array} \right) \in \C^{d+k}$. The  intensity measurements we obtain is
\[
\hat{y}_j=\abs{\nj{\hat{\va}_j,\x}}^2=\abs{\nj{\va_j,\vx}+\nj{\va'_j,\vx'}}^2=\abs{\nj{\va_j,\vx}+b_j}^2, \quad j=1,\ldots,m.
\]
Here, $\hat{\va}_j \in \C^{d+k}$ are the vectors corresponding to the rows of discrete Fourier transform (DFT) matrix, and
we write $\hat{\va}_j:=\left( \begin{array}{l} \va_j \\ \va'_j\end{array} \right)$, $b_j:=\nj{\va'_j,\vx'}$. The recovery of $\vx$ from the measurements $\hat{y}_j$ is the famous {\em holographic phase retrieval} problem, which is an example of the affine phase retrieval.

\subsubsection{The connection  between the affine phase retrieval and the classical phase retrieval}
Recall that the classical phase retrieval aims to recover a signal $\vx\in\C^d$ from the intensity-only measurements
\begin{equation} \label{eq:reclapr}
y_j:=\abs{\nj{\va_j,\vx}}^2, \quad j=1,\ldots,m,
\end{equation}
where $\va_j \in \C^d$  for all $j=1,\ldots,m$. In some practical applications, some entries of $\vx$ might be known in advance, such as the reconstruction of signals in a shift-invariant space from their phaseless samples \cite{chen2020phase}. In such scenarios, if we assume  the first $k$-entries of $\vx$ are known, namely,
\[
\vx:=\left( \begin{array}{l} \vx_1 \\ \vx_2\end{array} \right) \in \C^d
\]
where  $\vx_2 \in \C^{d-k}$ and $\vx_1 \in \C^k$ is a  known vector. If we rewrite
\[
\va_j:=\left( \begin{array}{l} \va_{j,1} \\ \va_{j,2}\end{array} \right) \quad \mbox{with} \quad \va_{j,1}\in \C^k, \va_{j,2} \in \C^{d-k},
\]
then \eqref{eq:reclapr} can be formulated   as
\[
y_j=\abs{\nj{\va_j,\vx}}^2=\abs{\nj{\va_{j,2}, \vx_2}+b_j}^2,
\]
where $b_j:=\va_{j,1}^* \vx_1$ is known.  From the relationship above, we can see that the reconstruction of $\vx_2$ from the intensity-only measurements $y_j$ is exactly the affine phase retrieval problem in $\C^{d-k}$. Thus,  the affine phase retrieval can be viewed as the classical phase retrieval with some background information.

It is well-known that the reconstruction of signals from the intensity of the Fourier transform is  not uniquely solvable \cite{sanz1985,edidin2019}. There exist ambiguities which are  caused by translation, reflection and conjugation, or multiplication with an unimodular constant. These ambiguities are trivial and cannot be avoided. However, besides these trivial ambiguities, there are also $2^{d-2}$ nontrivial ambiguities \cite{beinert2015} for a signal $\vx\in \C^d$. In order to evaluate a meaningful solution of the Fourier phase retrieval, one needs to pose appropriate priori conditions to enforce uniqueness of solutions.  One way to achieve this goal is to use additionally known values of some entries \cite{beinert2018}, which can be recast as affine phase retrieval.

\subsection{Our contributions}
As stated before,  the classical phase retrieval is difficult  to solve due to the non-convexity. It usually requires some sophisticated techniques, such as carefully-designed initialization and benign geometric landscape.  Since the affine phase retrieval
has strong relationship to the classical phase retrieval, so we may ask: {\em Does the algorithms for solving the affine phase retrieval still require such techniques?}

In this paper, we give a negative answer to this problem by showing that
 the loss function $f$ given in \eqref{eq:fz} is strongly convex on the entire space under some mild conditions on $\vb$, and the simple gradient descent method converges linearly to the global solution with an arbitrary initial point,  as stated below.

 \begin{theorem}[Informal]
  Assume that $\vx \in \C^d$ is a fixed vector. Assume that  the vector $\vb\in \C^m$ satisfies $\norm{\vb} \ge  c_0 \sqrt{m}\norm{\vx} $, $ \sum_{j=1}^m |b_j|^4 \lesssim m \norm{\vx}^4 $ and $\norms{\vb}_\infty \lesssim \sqrt{\log m} \norm{\vx}$, where $c_0\geq 3/2$ is a fixed constant. Suppose that $\va_j \in \C^d,j=1,\ldots,m $, are complex Gaussian random vectors with $m\ge C d\log d$.  Then with high probability
 the function $f$ given in \eqref{eq:fz}  is strongly convex on the entire space $\C^d$. Moreover, the Wirtinger flow method with a fixed step size converges linearly to the global solution, from an arbitrary initialization which lies in the  complex ball with radius $R_0:=2\xkh{\frac 1m\sum_{j=1}^m y_j -\norm{\vb}^2/ m }^{1/2}$.  Here, $C>0$ is a universal constant.
\end{theorem}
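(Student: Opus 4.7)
The plan is to split the theorem into two parts: first, prove that $f$ is strongly convex on $\C^d$ with high probability under the stated hypotheses, and second, deduce linear convergence of Wirtinger flow from strong convexity plus a local smoothness estimate.

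For the strong convexity, I would write the Wirtinger Hessian $\nabla^2 f(\vz)$ explicitly. With $u_j(\vz):=\nj{\va_j,\vz}+b_j$, it has a Hermitian block structure whose diagonal blocks are $\frac{1}{m}\sum_j \bigl(2|u_j(\vz)|^2-y_j\bigr)\va_j\va_j^*$ and whose off-diagonal blocks are $\frac{1}{m}\sum_j u_j(\vz)^2 \va_j \va_j^\T$. Writing $\vz = \vx+\vh$ and expanding, one isolates a distinguished piece $\frac{1}{m}\sum_j |b_j|^2 \va_j\va_j^*$ whose expectation is $\frac{\|\vb\|^2}{m}I_d$ and which concentrates around it via matrix Bernstein, using $\|\vb\|_\infty\lesssim \sqrt{\log m}\|\vx\|$. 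The hypothesis $\|\vb\|\ge c_0\sqrt{m}\|\vx\|$ with $c_0\ge 3/2$ is designed so that this positive piece, of order $c_0^2\|\vx\|^2 I_d$, dominates the potentially negative contributions coming from the $-y_j$ term and from the real part of the off-diagonal $u_j(\vz)^2\va_j\va_j^\T$ blocks after contracting against real vectors in $\R^{2d}$.

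The main technical obstacle is obtaining a bound uniform in $\vz\in\C^d$ despite the quartic growth of $f$. I would handle this by splitting into two regimes. On a bounded ball $\|\vh\|\le R$, a standard $\epsilon$-net argument together with sub-exponential concentration for quadratic forms in complex Gaussians gives uniform control on all the $\vh$-dependent contributions, while the bias reservoir $\frac{1}{m}\sum_j|b_j|^2\va_j\va_j^*$ supplies the baseline positivity. In the tail region $\|\vh\|>R$, the quartic contribution $\frac{1}{m}\sum_j |\nj{\va_j,\vh}|^2 \va_j\va_j^*$ becomes dominant, and a Marchenko--Pastur-type estimate on $\frac{1}{m}\sum_j\va_j\va_j^*$ together with a lower bound on $\frac{1}{m}\sum_j|\nj{\va_j,\vh}|^2$ shows that the Hessian grows at least like $\|\vh\|^2 I_d$. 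The hypothesis $\sum_j|b_j|^4\lesssim m\|\vx\|^4$ is used to absorb cross terms mixing $b_j$ with $\nj{\va_j,\vh}$, and the sample complexity $m\gtrsim d\log d$ is exactly what allows the union bound over the net to succeed.

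For the convergence, once uniform strong convexity with constant $\alpha>0$ is established, I would show that on any sublevel set of $f$ containing the initial ball $\|\vz\|\le R_0$, the gradient of $f$ is Lipschitz with a finite constant $L=L(R_0,\|\vx\|)$. Since $f$ is monotone non-increasing along Wirtinger-flow iterates with any fixed step size $\mu\le 1/L$, all iterates remain in this sublevel set, and the standard descent argument combined with $\alpha$-strong convexity yields $\|\vz_{k+1}-\vx\|^2\le (1-\mu\alpha)\|\vz_k-\vx\|^2$, proving linear convergence. The choice $R_0=2\bigl(\frac{1}{m}\sum_j y_j - \|\vb\|^2/m\bigr)^{1/2}$ is natural because $\frac{1}{m}\sum_j y_j$ concentrates around $\|\vx\|^2+\|\vb\|^2/m$, so $R_0\approx 2\|\vx\|$ is a generous neighborhood of the true signal that can be verified a posteriori from the measurements alone.
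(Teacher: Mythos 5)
Your high-level plan has the right skeleton --- isolate the bias reservoir $\frac{1}{m}\sum_j|b_j|^2\va_j\va_j^*$ as the source of baseline positivity, split into a bounded and a tail regime, then run a descent argument --- and the convergence step is essentially the same as the paper's (the paper traps the iterates in $\{\|\vz\|\le 5\|\vx\|\}$ by showing $\|\vz_{k+1}-\vx\|$ shrinks recursively, rather than by sublevel-set monotonicity, but both work). However, there is a real gap in your strong-convexity argument: on the bounded ball $\|\vh\|\le R$, the $\vh$-dependent contributions to the Hessian quadratic form, such as $\frac{1}{m}\sum_j|\va_j^*\vh|^2|\va_j^*\vv|^2$ and the mixed third-order terms $\frac{1}{m}\sum_j \bar b_j(\va_j^*\vh)|\va_j^*\vv|^2$, are \emph{not} quadratic forms in Gaussians and are not sub-exponential --- they are products of two quadratic forms (or a linear form times a quadratic form), with heavier tails. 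A naive $\epsilon$-net plus Bernstein argument applied to these summands requires $m\gtrsim d^2$ to close the union bound, which defeats the whole point of the theorem. The paper flags this explicitly and works around it by truncating: each heavy-tailed term is split into a truncated part $|\va_j^*\vv|^2\phi(|\va_j^*\vv|/\beta)$, which is bounded and hence sub-Gaussian so the net argument goes through at $m\gtrsim d\log d$, and a tail part $|\va_j^*\vv|^2\1_{|\va_j^*\vv|\ge\beta}$, which is controlled not by two-sided concentration but by the fact that it is nonnegative with exponentially small expectation, so only its (well-behaved) lower tail matters. Your proposal does not contain this idea, and without it the sample complexity does not come out.

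A second, more minor, concern is your tail regime $\|\vh\|>R$. You assert that the quartic term $\frac{1}{m}\sum_j|\va_j^*\vh|^2\va_j\va_j^*$ dominates and the Hessian grows like $\|\vh\|^2 I_d$, but the off-diagonal block $\frac{1}{m}\sum_j(\va_j^*\vz+b_j)^2\va_j\va_j^\T$ contributes a term that also scales like $\|\vh\|^2$ and can be negative after contracting against $(\vv,\bar\vv)$. The paper handles this by using a uniform lower bound (Lemma 22 of Sun--Qu--Wright) on the \emph{combination} $\frac{1}{m}\sum_j|\va_j^*\vz|^2|\va_j^*\vv|^2 + 2\bigl[\frac{1}{m}\sum_j\Re((\va_j^*\vz)(\vv^*\va_j))^2\bigr]$, which is $\gtrsim \|\vz\|^2$ with the dangerous cross terms cancelling; it also uses a different regime split (on $\cA(\vz,\vv)=\frac{1}{m}\sum_j|\va_j^*\vz|^2|\va_j^*\vv|^2\ge 1$ rather than on $\|\vh\|$) so that the $\sqrt{\cA}$ error terms coming from the bias cross-products can be absorbed uniformly. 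Your $\|\vh\|>R$ regime would need a comparable cancellation argument to be made rigorous.
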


The theorem asserts that the loss function of the affine phase retrieval has excellent geometric landscape, namely, it possesses the strong convexity property. Thus, solving the affine phase retrieval is as easy as solving a convex problem, which does not need any sophisticated technique  used  in solving the classical phase retrieval.

\subsection{Notations}
\subsubsection{Basic notations}
Throughout this paper, we assume the measurements $\va_j\in \C^d, \; j=1,\ldots,m $
are i.i.d. complex Gaussian random vectors and we say  a vector $\va\in \C^d$ is a complex Gaussian
random vector if $\va \sim 1/\sqrt{2}\cdot \cN(0,I_d)+i/\sqrt{2}\cdot \cN(0,I_d)$.
We set $ \mathbb{S}_{\C}^{d-1}:=\{\vz\in \C^d: \norm{\vz}=1\}$.
For a complex number $b$,  we use $b_{\Re}$ and $b_{\Im}$ to denote the real and imaginary part of $b$, respectively.
For any $A,B\in \R$, we use $ A \lesssim B$
to denote $A\le C_0 B$ where $C_0\in \R_+$ is an  absolute constant.  The notion
$\gtrsim$ can be defined similarly.
We use the notations $\norm{\cdot}$ and $\norms{\cdot}_*$ to denote the operator norm and
nuclear norm of a matrix, respectively.
 Throughout  this paper, $c$, $C$ and the subscript (superscript)   forms of them denote universal constants whose values vary with the
context.

\subsubsection{Wirtinger calculus}
Let $\vz_{\Re} \in \R^d$  and $\vz_{\Im}\in \R^d$ denote the real and imaginary parts of a complex vector $\vz\in \C^d$, respectively.
Consider a real-valued function $f: \C^d\to \R$. According to the  Cauchy-Riemann conditions, $f$ is not complex differentiable unless it is constant.
However, if we view $f(\vz)$ as a function in $(\vx,\vy)\in \R^{d}\times \R^d \cong \C^d$ where $\vx:=\vz_{\Re}, \vy:=\vz_{\Im}$, it is possible that $f(\vx,\vy)$ is
 differentiable in the real sense.  Taking derivative for $f$ with respect to $\vx$ and $\vy$ directly tends to be complicated and tedious. A  simpler way is to adopt  the Wirtinger calculus,  which makes the expressions for derivatives become significantly simpler and resemble those  with respect to $\vx$ and $\vy$  directly.  Here we only present a simple exposition of Wirtinger calculus (see also \cite{WF,Kreutz}).

For any real-valued function $f(\vz)$, we can write it in the form of $f(\vz,\bar{\vz})$, where $\vz=\vx+i\vy$ and $\bar{\vz}:=\vx-i\vy$. Here $\vx:=\vz_{\Re}$ and $\vy:=\vz_{\Im}$.
 If $f$ is differentiable as a function of $(\vx,\vy)\in \R^d\times \R^d$ then the Wirtinger gradient  is well-defined and can be denoted by
 \[
\nabla f(\vz)=\zkh{\frac{\partial f}{\partial \vz}, \frac{\partial f}{\partial \bar{\vz}}}^*,
\]
where
\[
\frac{\partial f}{\partial \vz}:=\frac{\partial f(\vz,\bar{\vz})}{\partial \vz} \Bigg|_{\bar{\vz}=\mbox{constant}}=\zkh{\frac{\partial f(\vz,\bar{\vz})}{\partial z_1 } ,\ldots, \frac{\partial f(\vz,\bar{\vz})}{\partial z_d }  }\Bigg|_{\bar{\vz}=\mbox{constant}}
\]
and
\[
\frac{\partial f}{\partial \bar{\vz}}:=\frac{\partial f(\vz,\bar{\vz})}{\partial \bar{\vz}} \Bigg|_{\vz=\mbox{constant}}=\zkh{\frac{\partial f(\vz,\bar{\vz})}{\partial \bar{z}_1 } ,\ldots, \frac{\partial f(\vz,\bar{\vz})}{\partial \bar{z}_d }  }\Bigg|_{\vz=\mbox{constant}}.
\]
Here, when applying the operator $\frac{\partial f
}{\partial \vz}$, $\bar{\vz}$ is formally treated as a constant, and similar to the operator $\frac{\partial f}{\partial \bar{\vz}}$.
It has been proved in \cite{remmert1991,brandwood1983} that the partial derivatives $\frac{\partial f}{\partial \vz}$ and $\frac{\partial f}{\partial \bar{\vz}}$ can be equivalently written as
\begin{equation} \label{re:wirstand}
\frac{\partial f}{\partial \vz} =\frac12 \xkh{ \frac{\partial f}{\partial \vx}- i \frac{\partial f}{\partial \vy} } \quad \mbox{and} \quad \frac{\partial f}{\partial \bar{\vz}}=\frac12 \xkh{ \frac{\partial f}{\partial \vx}+ i \frac{\partial f}{\partial \vy} },
\end{equation}
where the partial derivatives with respect to $\vx$ and $\vy$ are standard partial derivatives of the function $f(\vx,\vy):=f(\vz)$ in the real sense.
The Hessian matrix in Wirtinger calculus  is defined as
\[
\nabla^2 f(\vz):=\left[ \begin{array}{ll}
                           \frac{\partial }{\partial \vz} \xkh{\frac{\partial f}{\partial \vz} }^*   & \quad   \frac{\partial }{\partial \bar{\vz}} \xkh{\frac{\partial f}{\partial \vz} }^* \vspace{1em}\\
                            \frac{\partial }{\partial \vz} \xkh{\frac{\partial f}{\partial \bar{\vz}} }^* & \quad  \frac{\partial }{\partial \bar{\vz}} \xkh{\frac{\partial f}{\partial \bar{\vz}} }^*
                            \end{array} \right ].
\]
With the gradient and Hessian in place, Taylor's approximation for $f$ near the point $\vz$ is
\[
f(\vz+ \Delta \vz) \approx f(\vz)+ \xkh{\nabla f(\vz)}^* \left [ \begin{array}{l} \Delta \vz  \vspace{0.5em}\\ \overline{\Delta \vz } \end{array}\right] +\frac12  \left [ \begin{array}{l} \Delta \vz \vspace{0.5em} \\ \overline{\Delta \vz } \end{array}\right]^* \nabla^2 f(\vz)  \left [ \begin{array}{l} \Delta \vz \vspace{0.5em}\\ \overline{\Delta \vz } \end{array}\right]
\]
for a small perturbation $\Delta \vz \in \C^d$.
For a real-valued function $f$, $\vz$ is a  stationary point  if and only if the Wirtinger gradient obeys
\[
\nabla f(\vz) = \bm{0}.
\]
The curvature of $f$ at a stationary point $\vz$ is dictated by the Wirtinger Hessian $\nabla^2 f(\vz)$. An important observation is that the Hessian quadratic form involves left and right multiplication with a $2d$-dimensional vector consisting of a conjugate pair $\xkh{\Delta \vz, \overline{\Delta \vz} }$. This gives the definition of strongly convex for a real-valued function $f$.

\begin{definition} \label{def:strongconv}
A real-valued function $f: \C^d\to \R$ is called strongly convex on the entire space with a constant $c_0>0$ if
\[
  \left( \begin{array}{l}
\vv\\ \bar{\vv} \end{array} \right)^* \nabla^2 f(\vz)  \left( \begin{array}{l}
\vv\\ \bar{\vv} \end{array} \right)\,\, \ge\,\, c_0 \norm{\vv}^2,\quad \text{ for all }\vz, \vv \in \C^{d}.
\]

\end{definition}

\begin{remark} \label{Re:relastrongcon}
For a differentiable function $h: \R^d \to \R$, the standard definition of strongly convex with a parameter $\beta>0$ is
\[
h(\vu) \ge h(\vw)+\nabla h(\vw)^\T (\vu-\vw) +\frac {\beta}2 \norm{\vu-\vw}^2 \quad \mbox{for all} \quad \vu,\vw \in \R^d.
\]
Here, the $\nabla h$ is the standard gradient of the function $h$. In fact,  Definition \ref{def:strongconv} is equivalent to the above standard definition of strong convexity.
To show it, we observe that if
\begin{equation}\label{eq:c0}
  \left( \begin{array}{l}
\vv\\ \bar{\vv} \end{array} \right)^* \nabla^2 f(\vz)  \left( \begin{array}{l}
\vv\\ \bar{\vv} \end{array} \right) \ge c_0 \norm{\vv}^2 \quad \mbox{for all} \quad \vv\in\C^d,
\end{equation}
it then follows from  \eqref{re:wirstand} and the fundamental theorem of calculus that
\begin{eqnarray}
f(\vz+\Delta \vz) &=& f(\vz)+\int_0^1  \xkh{\nabla f(\vz+ t \Delta \vz)}^* \left [ \begin{array}{l} \Delta \vz  \vspace{0.5em}\\ \overline{\Delta \vz } \end{array}\right] dt \nonumber \\
&=&  f(\vz)+ \xkh{\nabla f(\vz)}^* \left [ \begin{array}{l} \Delta \vz  \vspace{0.5em}\\ \overline{\Delta \vz } \end{array}\right] + \int_0^1\int_0^t  \left [ \begin{array}{l} \Delta \vz \vspace{0.5em} \nonumber \\ \overline{\Delta \vz } \end{array}\right]^* \nabla^2 f(\vz+ \tau \Delta \vz)  \left [ \begin{array}{l} \Delta \vz \vspace{0.5em}\\ \overline{\Delta \vz } \end{array}\right] d\tau dt\\
&\ge & f(\vx,\vy) + \left[\nabla_{\vx} f(\vx,\vy)^\T  \quad \nabla_{\vy} f(\vx,\vy)^\T\right] \left [ \begin{array}{l} \Delta \vx  \vspace{0.5em}\\ \Delta \vy \end{array}\right]+ \frac{c_0}2  \left\| \left [ \begin{array}{l} \Delta \vx  \vspace{0.5em}\\ \Delta \vy \end{array}\right]\right\|^2 \label{eq:strrele}
\end{eqnarray}
for all $\vz,\Delta \vz \in \C^d$.
Here,  $\vx:=\vz_{\Re}, \vy:=\vz_{\Im}, \Delta \vx:= (\Delta \vz)_{\Re}$  and $\Delta \vy:= (\Delta \vz)_{\Im}$. We view $f(\vx,\vy):=f(\vz)$ as a function of  $(\vx,\vy)\in \R^d\times \R^d$. We use $\nabla_{\vx} f(\vx,\vy), \nabla_{\vy} f(\vx,\vy)$  to denote the standard gradients of $f(\vx,\vy)$ with respect to $\vx,\vy$.
Since $f(\vz+\Delta \vz)=f(\vx+\Delta \vx, \vy+\Delta \vy)$, it then follows from \eqref{eq:strrele} that $f(\vz)=f(\vx,\vy)$ is strongly convex with parameter $c_0>0$. Here, $c_0$ is defined in (\ref{eq:c0}).
\end{remark}


Strong convexity is one of the most important concepts in optimization, especially for guaranteeing linearly convergence of many gradient descent based algorithms.
For the loss function $f$ defined in \eqref{eq:fz}, direct calculation gives the Wirtinger gradient
\begin{equation} \label{eq:wirtgrad}
\nabla f(\vz)=\frac1m \sum_{j=1}^m \left[ \begin{array}{l}
 \xkh{|\va_j^* \vz+b_j|^2-y_j}  \xkh{\va_j^* \vz+b_j} \va_j \vspace{1em}\\
 \xkh{ |\va_j^* \vz+b_j|^2-y_j } \xkh{\va_j^\T \bar{\vz}+\bar{b}_j} \bar{\va}_j
  \end{array} \right]
\end{equation}
and the Hessian matrix
\begin{equation} \label{eq:Hessian}
\nabla^2 f(\vz)=\frac1m \sum_{j=1}^m \left[ \begin{array}{cc}
\xkh{2 | \va_j^* \vz+b_j|^2-y_j} \va_j \va_j^* & ( \va_j^* \vz+b_j)^2\va_j\va_j^\T \vspace{1em}\\
( \vz^* \va_j +\bar{b}_j)^2\bar{\va}_j\va_j^* & \xkh{2 | \va_j^* \vz+b_j|^2-y_j} \bar{\va}_j \va_j^\T
\end{array} \right].
\end{equation}

\subsection{Organization}
The paper is organized as follows.  In Section 2, we demonstrate that the natural least squares formulation \eqref{eq:fz} for the affine phase retrieval is strongly convex, which means the loss function exhibits the excellent global landscape. Based on this characterization, in Section 3 we show that the Wirtinger flow
 for solving the affine phase retrieval from an arbitrary initial point is linearly convergent. In Section 4, we study the empirical performance of our algorithm via a series of numerical experiments. In Section 5, we present a brief discussion for the future work.  Appendixes A and B collect the technical lemmas needed in our analysis and the detailed proofs to technical results, respectively.

\section{The strong convexity of the objective function}
In this section we demonstrate that the objective function $f$ given in \eqref{eq:fz}  is strongly convex on the entire space.
The intuition is as follows:
since $\va_j$ are complex Gaussian random vectors, it is easy to check the Hessian matrix \eqref{eq:Hessian}  is  strongly convex in expectation,  under some suitable conditions on $\vb$.
However, the loss function  \eqref{eq:fz} is heavy-tailed because it involves the third powers and the fourth powers of Gaussian random variables. Thus, to ensure the Hessian matrix is uniformly close to its expectation directly, it requires $m\ge C d^2$ samples. This is a sub-optimal result since $m=O(d)$ measurements suffice to guarantee the uniqueness of the affine phase retrieval.

To address this issue, we truncate the terms which involve the third powers or the fourth powers of Gaussian random variables into two parts. For the first part,  it is well-behaved; and for the second part, it is heavy-tailed but can be bounded by another {\em nonnegative} term which
in the form of fourth powers of Gaussian random variables. Because it is nonnegative, its deviation below its expectation is bounded, which means the lower tail is well-behaved. By exploiting this technique, we can prove the main result:

\begin{theorem} \label{th:formmain}
Assume that $\vx \in \C^d$ is an arbitrary  fixed vector and the vector $\vb\in \C^m$ satisfies $ \norm{\vb}  \ge c_0 \sqrt{m}\norm{\vx}  $,  $ \sum_{j=1}^m |b_j|^4 \lesssim m \norm{\vx}^4 $ and $\norms{\vb}_\infty \lesssim \sqrt{\log m} \norm{\vx}$, where $c_0>\sqrt{4.4/1.96}$ is a positive constant.  Suppose that $\va_j \in \C^d,j=1,\ldots,m $, are complex Gaussian random vectors and $y_j=\abs{\innerp{\va_j,\vx}+b_j}^2, j=1,\ldots,m $. If $m\ge C d\log d$  then with probability at least $1-c_a m^{-1}-18 \exp(-c_d d)-c_b \exp(-c_c  m /\log m)$ the Hessian matrix of  $f$ given in \eqref{eq:Hessian} obeys
 \[
  \left( \begin{array}{l}
\vv\\ \bar{\vv} \end{array} \right)^* \nabla^2 f(\vz)  \left( \begin{array}{l}
\vv\\ \bar{\vv} \end{array} \right) \ge \xkh{1.96c_0^2-4.4} \norm{\vx}^2 \quad \mbox{for all} \quad \vv \in \mathbb{S}_\C^{d-1}, \vz \in \C^d.
\]
Here, $C, c_a, c_b, c_c$ and $c_d$ are positive universal constants.
\end{theorem}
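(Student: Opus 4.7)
The plan is to work directly with the Wirtinger Hessian quadratic form. Setting $\gamma_j:=\va_j^*\vz$, $\delta_j:=\va_j^*\vv$, $w_j:=\gamma_j+b_j$, a short calculation gives, for any $\vv\in\mathbb{S}_\C^{d-1}$,
\[
Q(\vz,\vv):=\begin{pmatrix}\vv\\ \bar\vv\end{pmatrix}^{\!*}\nabla^2 f(\vz)\begin{pmatrix}\vv\\ \bar\vv\end{pmatrix}=\frac{2}{m}\sum_{j=1}^m\bigl(2|w_j|^2-y_j\bigr)|\delta_j|^2+\frac{2}{m}\sum_{j=1}^m\Re\bigl[\bar w_j^2\,\delta_j^2\bigr].
\]
I would expand $|w_j|^2=|\gamma_j|^2+2\Re(b_j\bar\gamma_j)+|b_j|^2$ and $\bar w_j^2=\bar\gamma_j^2+2\bar b_j\bar\gamma_j+\bar b_j^2$ and regroup by the order of $\vz$-dependence, writing $Q=T_1+T_2+T_3$ for the pieces that are respectively quadratic, linear and constant in $\gamma_j$. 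The algebraic identity $2|u|^2+\Re(u^2)=3(\Re u)^2+(\Im u)^2\ge|u|^2$ applied with $u=\bar\gamma_j\delta_j$ gives the crucial pointwise bound
\[
T_1\;\ge\;\frac{2}{m}\sum_{j=1}^m|\gamma_j|^2|\delta_j|^2\;\ge\;0,
\]
so the highest-order (and otherwise hardest-to-control) $\vz$-dependent part is at least nonnegative, and in fact will dominate $\|\vz\|^2$ after concentration.

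Next I would compute expectations, using that any complex-Gaussian monomial in $\va_j$ and $\bar\va_j$ has vanishing mean unless its $\va_j$-count equals its $\bar\va_j$-count. A short Wick computation yields $\E T_1\ge 4\|\vz\|^2$, $\E T_2=0$ (odd-moment parity), and $\E T_3=\tfrac{2}{m}\|\vb\|^2-2\|\vx\|^2-2|\vx^*\vv|^2$. Combined with the hypotheses $\|\vb\|^2\ge c_0^2 m\|\vx\|^2$ and $|\vx^*\vv|^2\le\|\vx\|^2$ this gives the population lower bound $\E Q\ge(2c_0^2-4)\|\vx\|^2$, which is exactly the target shape; the numerical loss from $2$ to $1.96$ and from $4$ to $4.4$ in the statement is precisely the slack that will absorb the concentration error.

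To upgrade expectation to a high-probability uniform bound I would handle the three pieces separately. For $T_3$, which depends only on $\vv\in\mathbb{S}_\C^{d-1}$, I split it into the four canonical sub-pieces $\tfrac{1}{m}\sum|b_j|^2|\delta_j|^2$, $\tfrac{1}{m}\sum|\va_j^*\vx|^2|\delta_j|^2$, $\tfrac{1}{m}\sum\Re(b_j\overline{\va_j^*\vx})|\delta_j|^2$ and $\tfrac{1}{m}\sum\Re(\bar b_j^2\delta_j^2)$; each concentrates uniformly on the sphere via Bernstein plus an $\varepsilon$-net, with the hypotheses $\sum|b_j|^4\lesssim m\|\vx\|^4$ and $\|\vb\|_\infty\lesssim\sqrt{\log m}\,\|\vx\|$ being exactly what is needed to control the weighted coefficients. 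For $T_1$, a rank-one near-isometry estimate (again net plus concentration, valid once $m\gtrsim d\log d$) gives $T_1\ge(2-\varepsilon_1)\|\vz\|^2$ uniformly. For $T_2$ I write the linear piece as $T_2=8\Re(\vz^*\bm g_1(\vv))+4\Re(\vz^*\bm g_2(\vv))$ with
\[
\bm g_1(\vv)=\frac{1}{m}\sum_{j=1}^m b_j|\delta_j|^2\va_j,\qquad \bm g_2(\vv)=\frac{1}{m}\sum_{j=1}^m\bar b_j\delta_j^2\va_j,
\]
both mean zero by parity. A variance calculation gives $\E\|\bm g_i(\vv)\|^2\lesssim d\|\vb\|^2/m^2\lesssim d\|\vx\|^2/m=O(\|\vx\|^2/\log d)$, and concentration plus another net over $\vv$ yields $\|\bm g_i(\vv)\|\le C\|\vx\|/\sqrt{\log d}$ uniformly; AM-GM then gives $|T_2|\le\tfrac{1}{2}\|\vz\|^2+O(\|\vx\|^2/\log d)$, with the $\|\vz\|^2$ contribution harmlessly absorbed into the slack of $T_1$.

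The main obstacle, flagged in the paragraph preceding the theorem, is that several of the sums above are fourth-order Gaussian chaoses in $\va_j$ and are genuinely heavy-tailed, so that a naive Bernstein bound would require the suboptimal $m\gtrsim d^2$. I would circumvent this exactly as the authors announce: each heavy summand $X_j$ is split as $X_j\mathbf{1}_{\{\mathcal E_j\}}+X_j\mathbf{1}_{\{\mathcal E_j^c\}}$ on a good event of the form $\mathcal E_j=\{|\delta_j|\le\tau\sqrt{\log m}\}$, the truncated part being sub-exponential and fitting Bernstein with only $m\gtrsim d\log d$, while the tail part is dominated in absolute value by a nonnegative fourth-power statistic (such as $|\gamma_j|^2|\delta_j|^2$ or $|b_j|^2|\delta_j|^2$); because the dominating statistic is nonnegative, only its lower-tail deviation enters the argument, and that is automatically well-behaved via Markov on the complementary event. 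Assembling the three bounds gives $T_1+T_2+T_3\ge(2c_0^2-4-\mathrm{slack})\|\vx\|^2$, and choosing the absolute constant $C$ in $m\ge Cd\log d$ large enough that the accumulated slack does not exceed $0.04\,c_0^2+0.4$ yields the stated lower bound $(1.96c_0^2-4.4)\|\vx\|^2$.
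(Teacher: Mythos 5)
Your decomposition $Q=T_1+T_2+T_3$ by order of $\vz$-dependence is a clean reorganization of the same starting point the paper uses, the algebraic identity $2|u|^2+\Re(u^2)=3(\Re u)^2+(\Im u)^2\ge|u|^2$ applied to the $\gamma$-quadratic part is exactly the paper's key pointwise lower bound, the expectation calculations are correct, and the truncate-then-concentrate strategy for the heavy-tailed pieces is what the paper does in Lemmas~\ref{le:tail}--\ref{le:rbavrazav}.

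The gap is in the treatment of the linear-in-$\gamma$ piece $T_2$. The claimed uniform bound $\|\bm g_i(\vv)\|\le C\|\vx\|/\sqrt{\log d}$ for all $\vv\in\mathbb{S}_\C^{d-1}$ is false: taking $\vv=\va_1/\|\va_1\|$ makes $\delta_1=\|\va_1\|\asymp\sqrt d$, so already the single term $j=1$ contributes $\bigl\|\tfrac{1}{m}b_1|\delta_1|^2\va_1\bigr\|\asymp |b_1|\,d^{3/2}/m$, which with $m\asymp d\log d$ and $|b_1|\asymp\|\vx\|$ is of order $\sqrt{d/\log d}\,\|\vx\|$ --- larger by a factor $\sqrt d$ than the value you assert. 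The variance computation $\E\|\bm g_i(\vv)\|^2\lesssim d\|\vx\|^2/m$ holds pointwise in $\vv$, but the tail is too heavy for this to survive the union bound over an $\varepsilon$-net; there is no concentration inequality that upgrades it to a uniform operator-norm bound at the same scale. Consequently the Cauchy--Schwarz step $|\vz^*\bm g_i|\le\|\vz\|\|\bm g_i\|$ is too lossy: it discards the interaction between $\vz$ and $\vv$, which is precisely what keeps $T_2$ under control (note that when $\vv\approx\va_1/\|\va_1\|$ but $\vz\perp\va_1$, the offending $j=1$ term vanishes from $\vz^*\bm g_1$). The paper instead bounds $\bigl|\tfrac1m\sum\Re(\bar b_j\gamma_j)|\delta_j|^2\bigr|$ directly (Lemma~\ref{le:azav2}, and Lemma~\ref{le:rbavrazav} for the companion term), obtaining a bound of the form $\epsilon(\|\vb\|/\sqrt m+1)\bigl(\|\vz\|+\sqrt{\cA(\vz,\vv)}\bigr)$; the unavoidable $\sqrt{\cA(\vz,\vv)}$ factor is then handled by the two-regime split on the set $\mathcal{R}=\{\cA(\vz,\vv)\ge1\}$, using $\sqrt{\cA}\le\cA$ when $\cA\ge1$ and $\sqrt{\cA}<1$ otherwise, so that the leading nonnegative term $T_1\gtrsim\cA(\vz,\vv)$ absorbs it. Your proposal has no regime split, and without one (or an equivalent mechanism that keeps the $\vz$--$\vv$ coupling intact), the absorption of $T_2$ into $T_1$ does not go through.
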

From Theorem \ref{th:formmain} and Definition  \ref{def:strongconv}, we immediately obtain that with high probability  the loss function $f$ defined in  \eqref{eq:fz}  is strongly convex for all $\vz \in \C^d$ under some suitable condition on $\vb$.

\begin{remark}
In Theorem \ref{th:formmain}, we require that the bias vector $\vb\in \C^m$ satisfies  the conditions $ \norm{\vb}  \ge c_0 \sqrt{m}\norm{\vx}  $,  $ \sum_{j=1}^m |b_j|^4 \lesssim m \norm{\vx}^4 $ and $\norms{\vb}_\infty \lesssim \sqrt{\log m} \norm{\vx}$ for some fixed constant $c_0>0$. In fact, there exist many vectors satisfying them. For instance, if each entry of $\vb$ is generated independently according to the Gaussian distribution, i.e., $b_j \sim \lambda\norm{\vx}\cdot \mathcal{N}(\mu,\sigma^2 )$ where $\mu, \sigma$ are arbitrary  constants with $\sigma\neq 0$, then the vector $\vb$ satisfies those conditions with high probability provided the parameter $\lambda\ge k c_0/\sigma$ for a universal constant $k>0$.
\end{remark}

{\noindent\it \textbf{Proof of Theorem} \ref{th:formmain}}~
Without loss of generality, we assume that $\norm{\vx}=1$ (the general case can  be obtained via a simple rescaling ).
For any unit vector $\vv\in \C^d$, let
\[
\vu:=\left( \begin{array}{l}
\vv\\ \bar{\vv} \end{array} \right) \in \C^{2d}.
\]
It then follows from (\ref{eq:Hessian}) that
\begin{equation}\label{eq:mainL}
\begin{aligned}
 &\vu^* \nabla^2 f(\vz) \vu \\
 &\quad= \frac2m \sum_{j=1}^m \xkh{2 | \va_j^* \vz+b_j|^2-| \va_j^* \vx+b_j|^2}  |\va_j^* \vv|^2+ \frac2m \sum_{j=1}^m \zkh{( \va_j^* \vz+b_j)^2(\vv^* \va_j)^2}_{\Re} \\
 &\quad \ge  \frac2m \sum_{j=1}^m | \va_j^* \vz+b_j|^2 |\va_j^* \vv|^2- \frac2m \sum_{j=1}^m | \va_j^* \vx+b_j|^2 |\va_j^* \vv|^2 + \frac4m \sum_{j=1}^m \zkh{( \va_j^* \vz+b_j)(\vv^* \va_j)}_{\Re}^2   \\
  &\quad\ge    \frac2m \sum_{j=1}^m \xkh{|\va_j^* \vz|^2 |\va_j^* \bm{v}|^2- |\va_j^* \vx|^2|\va_j^* \vv|^2+2 \zkh{ (\va_j^* \vz)(\vv^* \va_j)}_{\Re}^2+ 2 \zkh{\bar{b}_j (\va_j^* \vz)}_{\Re} |\va_j^* \vv|^2}\\
  & \qquad + \frac2m \sum_{j=1}^m \xkh{2 \zkh{\bar{b}_j(\va_j^* \vv) }_{\Re}^2 -  2\zkh{\bar{b}_j (\va_j^* \vx)}_{\Re} |\va_j^* \vv|^2-4 \zkh{\bar{b}_j(\va_j^* \vv) )}_{\Re} \zkh{(\va_j^* \vz) (\va_j^* \vv)}_{\Re}}.
  \end{aligned}
 \end{equation}
For convenience, we set
\[
\cA(\vz, \vv)\,\,:=\,\, \frac{1}{m} \sum_{j=1}^m \abs{\va_j^* \vz}^2 |\va_j^* \vv|^2.
\]
 We claim that  if $m\ge c(\epsilon) d\log d$ then with probability at least $1-c_1 \exp(-c_2(\epsilon) m/\log m ) - c_3(\epsilon) m^{-1}- 18\exp(-c_4 d)$, it holds that
 \begin{equation} \label{eq:loudaf}
 \begin{aligned}
 \frac12 \vu^* \nabla^2 f(\vz) \vu &\ge \cA(\vz,\vv) -6\epsilon \xkh{ \frac{\norm{\vb}}{\sqrt m}+1} \xkh{\cA(\vz,\vv)}^{\frac 12} +  (1-\epsilon) \cdot \frac{\norm{\vb}^2}{m} \\
&\quad -(1+\epsilon)  -|\vx^* \vv|^2-6\epsilon\xkh{ \frac{\norm{\vb}}{\sqrt m}+1} \norm{\vz}-2\epsilon\xkh{ \frac{\norm{\vb}}{\sqrt m}+1}-2\epsilon.
 \end{aligned}
 \end{equation}
 Here, $\epsilon$ is any constant in $(0,1)$, $c_1, c_4$ are positive universal constants,  $c(\epsilon), c_2(\epsilon)$ and $c_3(\epsilon)$ are positive  constants depending only on $\epsilon$.
 Set
 \[
 \mathcal{R}:=\dkh{(\vz,\vv)\in \C^d\times \mathbb{S}_{\C}^{d-1}: \cA(\vz,\vv)\ge 1}.
 \]
  To give a lower bound for $  \vu^* \nabla^2 f(\vz) \vu$, we divide the space $\C^d\times \mathbb{S}_{\C}^{d-1}$ into two regimes: $(\vz,\vv) \in  \mathcal{R}$ and $(\vz,\vv) \notin  \mathcal{R}$.

  {\bf Regime 1:}  If $(\vz,\vv) \in  \mathcal{R}$ then  we have
  \begin{equation} \label{eq:reg1less}
  \xkh{\cA(\vz,\vv)}^{\frac 12}  \le \cA(\vz,\vv).
  \end{equation}
By  Lemma \ref{le:sunju22},  we obtain that when $m\ge c(\epsilon) d \log d$, with probability at least $1-c'  m^{-d}-c''\exp(-c'''(\epsilon) m )$, it holds that
\begin{equation} \label{eq:reg1great}
\cA(\vz,\vv) \ge (1-\epsilon) \xkh{\norm{\vz}^2+|\vz^* \vv|^2}  \quad \mbox{for all} \quad \vz \in \C^d, \vv\in \mathbb{S}_{\C}^{d-1}.
\end{equation}
Note that  $\norm{\vb} \ge c_0 \sqrt m$ and $\norm{\vb} \le \sqrt[4]{m \sum_{j=1}^m |b_j|^4} \le C_1\sqrt m$ for a universal constant $C_1\ge 1$. Putting \eqref{eq:reg1less} and \eqref{eq:reg1great} into \eqref{eq:loudaf} and taking $\epsilon:=\frac 1{60(C_1+1)}$,  we obtain that,   when $m\ge c(\epsilon) d \log d$, with probability at least $1-c_a m^{-1}-c_b \exp(-c_c  m /\log m)-18 \exp(-c_d d)$, it holds that
   \begin{eqnarray*}
 \frac12 \vu^* \nabla^2 f(\vz) \vu &\ge&\xkh{1 -6\epsilon \xkh{ \frac{\norm{\vb}}{\sqrt m}+1} } (1-\epsilon)\xkh{\norm{\vz}^2+ |\vz^* \vv|^2}  +  (1-\epsilon) \cdot \frac{\norm{\vb}^2}{m}\\
 &&-(1+\epsilon)  -|\vx^* \vv|^2-6\epsilon\xkh{ \frac{\norm{\vb}}{\sqrt m}+1} \norm{\vz}-2\epsilon\xkh{ \frac{\norm{\vb}}{\sqrt m}+1} -2\epsilon  \\
 &\ge &0.88\norm{\vz}^2-0.1\norm{\vz} +0.98 \cdot  \frac{\norm{\vb}^2}{m}-2.1      \\
  &\ge & 0.98c_0^2-2.2,
 \end{eqnarray*}
  where we use the fact that $ 0.88 t^2 -0.1t +0.1>0$ for any $t\ge 0$ in the last inequality. Here, $c_a, c_b, c_c$ and $c_d$ are positive universal constants.

  {\bf Regime 2:  }   If $(\vz,\vv) \notin  \mathcal{R}$ then we have
  \[
\cA(\vz,\vv)\,\,<\,\,1.
  \]
  Similarly, taking $\epsilon:=\frac 1{60(C_1+1)}$ in (\ref{eq:loudaf}), we obtain that
   \begin{eqnarray*}
 \frac12 \vu^* \nabla^2 f(\vz) \vu
 &\ge& \cA(\vz,\vv) -6\epsilon \xkh{ \frac{\norm{\vb}}{\sqrt m}+1} +  (1-\epsilon) \cdot \frac{\norm{\vb}^2}{m} \\
 &&-(1+\epsilon)  -|\vx^* \vv|^2-6\epsilon\xkh{ \frac{\norm{\vb}}{\sqrt m}+1} \norm{\vz}-2\epsilon\xkh{ \frac{\norm{\vb}}{\sqrt m}+1}-2\epsilon. \\
  &\ge& (1-\epsilon)\xkh{\norm{\vz}^2+ |\vz^* \vv|^2}  +  (1-\epsilon) \cdot \frac{\norm{\vb}^2}{m} -6\epsilon\xkh{ \frac{\norm{\vb}}{\sqrt m}+1} \norm{\vz} \\
 && -8\epsilon\xkh{ \frac{\norm{\vb}}{\sqrt m}+1} -2 -3\epsilon  \\
  &\ge & 0.98c_0^2-2.2.
 \end{eqnarray*}
 Here, the second inequality follows from  (\ref{eq:reg1great}) and the fact of $ |\vx^* \vv| \le 1$ .
 Combining the results above, we arrive at the conclusion.

 It remain to prove  (\ref{eq:loudaf}). Our main idea is to bound the terms in (\ref{eq:mainL}).
According to Lemma \ref{le:sunju},  when $m\ge c(\epsilon) d \log d$,  with probability at least
 $1-c_5 \epsilon^{-2} m^{-1}-c_6 \exp(-c_7 \epsilon^2  m /\log m)$, it holds that
\begin{eqnarray*}
\frac1m \sum_{j=1}^m |\va_j^* \vx|^2|\va_j^* \vv|^2 &=&\vv^*\xkh{ \frac1m \sum_{j=1}^m |\va_j^* \vx|^2\va_j \va_j^*} \vv \\
 & \le & 1+\epsilon+ |\vx^* \vv|^2  \qquad \mbox{for all} \quad \vv \in \mathbb{S}_{\C}^{d-1},
\end{eqnarray*}
where we use the fact that $\norm{\vx}=1$ in the inequality. Here, $\epsilon$ is any constant in $(0,1)$, $c_5, c_6$ and $c_7$ are positive universal constants.
From Lemma \ref{le:sunju22}, we obtain that  the following holds with probability at least $1-c'  m^{-d}-c''\exp(-c'''(\epsilon) m )$
\begin{eqnarray*}
\frac{1}{m} \sum_{j=1}^m \zkh{(\va_j^* \vz)(\vv^* \va_j)}_{\Re}^2 &\ge&  \frac{1-\epsilon}2 \xkh{\norm{\vz}^2+ 3(\vz^* \vv)_{\Re} - (\vz^* \vv)_{\Im} } \\
 &\ge & \frac{1-\epsilon}2\xkh{\norm{\vz}^2 - |\vz^* \vv|^2} \quad \mbox{for all} \quad \vz \in \C^d, \vv\in \mathbb{S}_{\C}^{d-1},
\end{eqnarray*}
provided   $m\ge c(\epsilon) d \log d$,
where $c',c''$ are positive universal constants and  $ c'''(\epsilon)>0$ is a constant depending only on $\epsilon$. Recall that $\norm{\vx}=1$, $ \sum_{j=1}^m  |b_j|^4 \lesssim m$ and $\norms{\vb}_{\infty} \le \sqrt{\log m}$.
It follows from Lemma \ref{le:azav2} that,  for $m\ge c(\epsilon) d \log m$,  with probability at least $1- 6\exp(-c_4 d) -6 \exp(-c'''(\epsilon)~ m/\log m )  $,   it holds that
\[
\abs{\frac 1m \sum_{j=1}^m  \zkh{\bar{b}_j (\va_j^* \vz)}_{\Re} |\va_j^* \vv|^2 }\le  \epsilon\xkh{ \frac{\norm{\vb}}{\sqrt m}+1} \xkh{\norm{\vz} + \xkh{\cA(\vz,\vv)}^{\frac 12}  }
\]
for all $\vz\in \C^d, \vv \in \mathbb{S}_\C^{d-1}$. Here, $c_4>0$ is a universal constant.
 Applying Lemma \ref{le:rebax}, the following holds with probability at least
$1-6 \exp(-c_7 \epsilon^2 m /\log m) - 6\exp(-c_4 d)  $:
\[
\frac 1m \sum_{j=1}^m \zkh{\bar{b}_j (\va_j^* \vv)}_{\Re}^2  \ge \frac{1-\epsilon}{2} \cdot \frac{\norm{\vb}^2}{m} -\epsilon \quad \mbox{for all} \; \vv \in \mathbb{S}_{\C}^{d-1},
\]
provided $m\ge C' \epsilon^{-2}\log (1/\epsilon) ~d \log m$, where $C'>0$ is  a universal constant. Recognize that $\norm{\vb} \lesssim \sqrt{m}$. It can be deduced from  Lemma \ref{le:axav2} that when $m\ge C' \epsilon^{-2} d \log m$,  the following holds with probability at least $1-c_5 \epsilon^{-2} m^{-1}-2 \exp(- c_7 \epsilon^2  m/\log m ) $:
\[
\abs{\frac 1m \sum_{j=1}^m \zkh{\bar{b}_j (\va_j^* \vx)}_{\Re} |\va_j^* \vv|^2} \le  \epsilon\xkh{\frac{ \norm{\vb}}{\sqrt m}+ 1}  \quad \mbox{for all} \; \vv \in \mathbb{S}_\C^{d-1}.
\]
Finally,  Lemma \ref{le:rbavrazav} implies that when $m\ge c(\epsilon) d \log m$  with probability at least $1- 6\exp(-c_4 d) -6 \exp(-c'''(\epsilon) m/\log m )  $  it holds
\[
\abs{\frac 1m \sum_{j=1}^m \zkh{\bar{b}_j(\va_j^* \vv) }_{\Re} \zkh{(\va_j^* \vz) (\va_j^* \vv)}_{\Re} } \le  \epsilon\xkh{ \frac{\norm{\vb}}{\sqrt m}+1} \xkh{\norm{\vz} + \xkh{\cA(\vz,\vv)}^{\frac 12}  }
\]
for all $\vz\in \C^d, \vv \in \mathbb{S}_{\C}^{d-1}$.
Substituting the results above into  (\ref{eq:mainL}), we obtain  (\ref{eq:loudaf}).
\qed

\section{Optimization by Wirtinger Gradient Descent}
\label{se:opGD}
Based on the strongly convex of $f$, we could solve the program \eqref{eq:fz} by the following vanilla Wirtinger gradient descent
\[
\vz_{k+1}=\vz_k-\mu \nabla f(\vz_k)
\]
with an arbitrary initial point.
  Here, with abuse of notation,  we set
\begin{equation} \label{eq:abugrad}
\nabla f(\vz):=\frac1m \sum_{j=1}^m
 \xkh{|\va_j^* \vz+b_j|^2-y_j}  \xkh{\va_j^* \vz+b_j} \va_j\,\, \in\,\, \C^d.
 \end{equation}
 Compared with (\ref{eq:wirtgrad}),  the $\nabla f(\vz)\in \C^{d}$ in (\ref{eq:abugrad}) just keeps the first $d$ entries of (\ref{eq:wirtgrad})
 due to the fact that the second part of \eqref{eq:wirtgrad} is the conjugate of the first.

The following lemma  presents an upper bound for $\norm{\vx}$ which is useful for choosing an initial guess.
\begin{lemma} \label{le:bound4x}
 Assume that $\vx \in \C^d$ is an arbitrary fixed vector and $\vb=(b_1,\ldots,b_m)^\top \in \C^m$ is a vector satisfying $\norm{\vb} \lesssim \sqrt{m} \norm{\vx}$. Suppose   $y_j=|\va_j^* \vx +b_j|^2, j=1,\ldots,m$ where $\va_j \in \C^d$ are i.i.d complex Gaussian random vectors.
 Then, with probability at least $1-4\exp(-c m)$, the following   holds
\[
R_0/3\leq  \norm{\vx}\leq R_0.
\]
Here, $R_0:=2\xkh{\frac 1m\sum_{j=1}^m y_j -\frac{\norm{\vb}^2}{m} }^{1/2}$ and  $c>0$ is a universal constant.
\end{lemma}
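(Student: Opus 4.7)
The plan is to expand $R_0^2/4 = \frac{1}{m}\sum_j y_j - \frac{\|\vb\|^2}{m}$ into a leading chi-squared-type term that concentrates at $\|\vx\|^2$ plus a small mean-zero cross term, and then show that both deviations are $\le \|\vx\|^2/4$ with probability $1-4\exp(-cm)$. Since the target bounds $R_0/3\le \|\vx\|\le R_0$ translate into $\frac{1}{4}\|\vx\|^2 \le R_0^2/4 \le \frac{9}{4}\|\vx\|^2$, there is a generous multiplicative slack, so even relatively crude concentration constants will suffice.

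First I would write
\[
\frac{1}{m}\sum_{j=1}^m y_j - \frac{\|\vb\|^2}{m} \;=\; \frac{1}{m}\sum_{j=1}^m |\va_j^*\vx|^2 \;+\; \frac{2}{m}\Re\!\sum_{j=1}^m \bar b_j\,\va_j^*\vx,
\]
noting that this is an exact identity obtained by expanding $y_j = |\va_j^*\vx+b_j|^2$ and subtracting $|b_j|^2$. The expectation of the right-hand side is $\|\vx\|^2$, and both terms will be bounded separately.

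For the first term, rescale by $\|\vx\|^2$ to reduce to $\frac{1}{m}\sum_j |g_j|^2$ with $g_j := \va_j^*\vx/\|\vx\|$ i.i.d.\ standard complex Gaussian scalars; then $|g_j|^2$ is a sub-exponential random variable with mean $1$, and a one-line Bernstein bound (or equivalently the chi-squared tail) yields
\[
\Pr\!\left(\left|\tfrac{1}{m}\textstyle\sum_j |\va_j^*\vx|^2 - \|\vx\|^2\right| > \tfrac{1}{4}\|\vx\|^2\right) \;\le\; 2\exp(-c_1 m).
\]
For the cross term, observe that conditionally on $\vb$ the quantity $\sum_j \bar b_j \va_j^*\vx$ is a complex Gaussian with variance $\|\vb\|^2\|\vx\|^2$, so its real part is a real Gaussian and the classical Gaussian tail, combined with the hypothesis $\|\vb\|\lesssim\sqrt m\,\|\vx\|$, gives
\[
\Pr\!\left(\left|\tfrac{2}{m}\Re\!\textstyle\sum_j \bar b_j\va_j^*\vx\right| > \tfrac{1}{4}\|\vx\|^2\right) \;\le\; 2\exp\!\left(-\tfrac{m^2\|\vx\|^2}{c_2\|\vb\|^2}\right) \;\le\; 2\exp(-c_3 m).
\]

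Taking a union bound of the two events, with probability at least $1-4\exp(-cm)$ we obtain $\tfrac{1}{2}\|\vx\|^2 \le R_0^2/4 \le \tfrac{3}{2}\|\vx\|^2$, hence $\sqrt{2}\,\|\vx\| \le R_0 \le \sqrt{6}\,\|\vx\|$. The upper bound gives $\|\vx\| \le R_0/\sqrt{2} \le R_0$ and the lower bound gives $R_0/3 \le R_0/\sqrt{6}\cdot(\sqrt{6}/3) \le \|\vx\|$, completing the proof. The only mildly delicate point is to keep track of the universal constants so as to land strictly inside the $[1/4,\,9/4]$ window for $R_0^2/(4\|\vx\|^2)$; there is no structural obstacle, since the hypothesis $\|\vb\|\lesssim\sqrt m\,\|\vx\|$ is exactly what makes the cross term contribute at scale $\|\vx\|^2/\sqrt m$ rather than at scale $\|\vx\|^2$.
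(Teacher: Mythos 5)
Your proposal is correct and follows essentially the same route as the paper: expand $\frac1m\sum_j y_j - \|\vb\|^2/m$ into the quadratic term $\frac1m\sum_j|\va_j^*\vx|^2$ plus the cross term $\frac2m\sum_j\Re[\bar b_j\va_j^*\vx]$, control the first by Bernstein for sub-exponential variables and the second via the Gaussian tail using $\|\vb\|\lesssim\sqrt{m}\|\vx\|$, then take a union bound. The only cosmetic difference is that the paper splits the cross term into real and imaginary parts of $b_j$ and applies Hoeffding to each, whereas you treat $\sum_j\bar b_j\va_j^*\vx$ directly as a single complex Gaussian; both yield the same $\exp(-cm)$ tail, and your slightly different intermediate window $[\tfrac12,\tfrac32]\|\vx\|^2$ for $R_0^2/4$ still lands inside the target $[\tfrac19,1]$ range for $\|\vx\|^2/R_0^2$.
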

\begin{proof}
See Appendix \ref{appendix:B}.
\end{proof}

Based on Lemma \ref{le:bound4x}, we can  choose an  initial point $\vz_0$ over $\mathbb{B}_{\C}^d(R_0):=\dkh{\vz\in \C^d: \norm{\vz} \le R_0}$ {\em arbitrarily}.  This gives the following algorithm:

\begin{algorithm}[H]
\caption{Gradient Descent Algorithm for Affine Phase Retrieval with Arbitrary Initial Point}
\label{Al:1}
\begin{algorithmic}[H]
\Require
Measurement vectors: $\va_j \in \C^d, j=1,\ldots,m $; Bias vector: $\vb \in \C^m$;
Observations: $\vy \in \R^m$;  Step size $\mu$; The maximum number of iterations $T$.  \\
\begin{enumerate}
\item[1:] Choose  $\vz_0\in \mathbb{B}_{\C}^d(R_0)$ as an initial guess.
\item[2:] {\bf Loop:}

 {\bf for}  $k=0$ {\bf to} $T-1$ {\bf do}
\[
\vz_{k+1}=\vz_{k}-\mu \nabla f(\vz_{k})
\]
\item[3:] {\bf end for}
\end{enumerate}

\Ensure
The vector $ \vz_T$.
\end{algorithmic}
\end{algorithm}

Next, we prove the Algorithm \ref{Al:1} converges to the target solution $\vx$ linearly. To this end, we need to  provide the Lipschitz constant of the Wirtinger derivative $\nabla f(\vz)$, as shown below.
\begin{lemma}[Local Smoothness Property] \label{le:Lipcondition}
Suppose that $\va_j \in \C^d, j=1,\ldots,m,$ are i.i.d complex Gaussian random vectors. Let $\mathcal{S}_R:=\dkh{\vz\in \C^d: \norm{\vz}\le R}$ be a bounded region where $R$ is any  positive constant. If $m\ge Cd$  then with probability at least $1-4\exp(-cm)-c_a m^{-d}$, the Wirtinger gradient $\nabla f(\vz)$ given in \eqref{eq:wirtgrad} is Lipschitz continuous over $\mathcal{S}_R$, i.e.,
\[
\norm{\nabla f(\vz) -\nabla f(\vz')} \le C_R \norm{\vz-\vz'} \quad \mbox{for all} \quad \vz,\vz' \in \mathcal{S}_R,
\]
where
\[
C_R=6\sqrt 2 \xkh{2R d\log m +\norms{\vb}_{\infty} \sqrt{d\log m} }\xkh{R+\frac{ \norm{\vb}}{\sqrt m}}+8\sqrt 2 \Big( 2d\log m (R^2+\norm{\vx}^2)+ \norms{\vb}_{\infty}^2 \Big).
\]
Here, $\vb\in \C^m$ is an arbitrary vector, $C, c$ and $c_a$ are positive universal constants.
\end{lemma}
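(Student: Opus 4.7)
The plan is to convert the gradient difference into a single matrix--vector product, apply the standard Gaussian covariance bound on the design matrix, and then control the resulting coefficient vector by a mixture of worst-case and averaged deviation estimates. Set $g_j(\vw) := \va_j^*\vw + b_j$, so that $\nabla f(\vz) = \tfrac{1}{m}\sum_{j=1}^m (|g_j(\vz)|^2 - y_j)\,g_j(\vz)\,\va_j$. Using the identity $|u|^2 u - |v|^2 v = |u|^2 (u-v) + (|u|^2 - |v|^2)\,v$ with $u = g_j(\vz),\,v = g_j(\vz')$ and the complex identity $|u|^2 - |v|^2 = \mathrm{Re}\bigl((u-v)\overline{(u+v)}\bigr)$, I would write
\begin{equation*}
\nabla f(\vz) - \nabla f(\vz') = \frac{1}{m}\sum_{j=1}^m c_j\,\va_j, \qquad c_j := c_j^{(1)} + c_j^{(2)},
\end{equation*}
where $c_j^{(1)} = (|g_j(\vz)|^2 - y_j)\,\va_j^*(\vz-\vz')$ and $c_j^{(2)} = \mathrm{Re}\bigl[\va_j^*(\vz-\vz')\,\overline{g_j(\vz)+g_j(\vz')}\bigr]\,g_j(\vz')$. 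Every term now carries an explicit factor $\va_j^*(\vz-\vz')$ linear in $\vz-\vz'$.

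With $A = [\va_1,\ldots,\va_m] \in \C^{d\times m}$, the right-hand side is $\tfrac{1}{m}A\vc$ with $\vc = (c_1,\ldots,c_m)^\T$. The standard complex Gaussian covariance bound gives $\|A\|/\sqrt{m} \le \sqrt{2}$ with probability at least $1 - 4\exp(-cm)$ whenever $m \ge Cd$, and therefore
\begin{equation*}
\|\nabla f(\vz) - \nabla f(\vz')\| \le \sqrt{2}\Bigl(\tfrac{1}{m}\sum_{j=1}^m |c_j|^2\Bigr)^{1/2}.
\end{equation*}
Separately, a tail bound for $\|\va_j\|^2$ plus a union bound over $1 \le j \le m$ yields $\max_j \|\va_j\| \le 2\sqrt{d\log m}$ with probability at least $1 - c_a m^{-d}$. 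On this event one has, uniformly for $\vz, \vz' \in \mathcal{S}_R$, the pointwise estimates $|\va_j^*(\vz-\vz')| \le 2\sqrt{d\log m}\,\|\vz-\vz'\|$, $|g_j(\vw)| \le 2\sqrt{d\log m}\,\|\vw\| + \|\vb\|_\infty$, and $||g_j(\vz)|^2 - y_j| \le |\va_j^*(\vz-\vx)|\,(|g_j(\vz)|+|g_j(\vx)|)$.

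For $|c_j^{(2)}|^2$ I would apply the maximum bounds to $|\va_j^*(\vz-\vz')|$ and $|g_j(\vz)+g_j(\vz')|$ (getting factors $\sqrt{d\log m}$ and $2Rd\log m + \|\vb\|_\infty\sqrt{d\log m}$ respectively) but keep $|g_j(\vz')|^2$ inside the average; the covariance concentration $\tfrac{1}{m}\sum_j |\va_j^*\vz'|^2 \le 2\|\vz'\|^2$ together with $\tfrac{1}{m}\sum_j |b_j|^2 = \|\vb\|^2/m$ produces the factor $(R + \|\vb\|/\sqrt{m})^2$. This yields the first summand of $C_R$, namely $6\sqrt{2}(2Rd\log m + \|\vb\|_\infty\sqrt{d\log m})(R + \|\vb\|/\sqrt{m})$. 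For $|c_j^{(1)}|^2$ I would bound $|g_j(\vz)+g_j(\vx)|$ and $|\va_j^*(\vz-\vz')|$ by their uniform maxima, but retain $|\va_j^*(\vz-\vx)|^2$ inside the sum and use $\tfrac{1}{m}\sum_j |\va_j^*(\vz-\vx)|^2 \le 2\|\vz-\vx\|^2 \lesssim (R+\|\vx\|)^2$; AM--GM then absorbs the cross term $\|\vb\|_\infty\sqrt{d\log m}\,(R+\|\vx\|)$ into the sum $2d\log m(R^2+\|\vx\|^2) + \|\vb\|_\infty^2$, giving the second summand of $C_R$.

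\textbf{Main obstacle.} The delicate point is the \emph{apportionment} of uniform (worst-case) versus averaged bounds among the four factors appearing in $|c_j^{(1)}|^2$ and the three factors in $|c_j^{(2)}|^2$. Replacing too many $|b_j|^2$ factors by $\|\vb\|_\infty^2$ inflates the $\|\vb\|_\infty$ dependence, while trying to average a factor like $|\va_j^*(\vz-\vz')|^2 \cdot |g_j(\vz)+g_j(\vz')|^2$ forces one to control a fourth-order Gaussian polynomial uniformly in $\vz,\vz' \in \mathcal{S}_R$, which would require $m \gtrsim d^2$ samples and so ruin the desired $m \gtrsim d$ sample complexity. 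The stated $C_R$ corresponds to the unique allocation that uses the maximum bound exactly on the highest-degree pieces (degree three in the $\va_j$'s) and the averaged bound on the single remaining quadratic Gaussian factor, thereby tolerating $m \asymp d$.
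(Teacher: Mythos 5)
Your overall plan is sound and would indeed establish the lemma, but it follows a different analytic route than the paper, and the specific constants you quote as ``the stated $C_R$'' will not come out exactly right from your computation. The paper, after writing $\|\nabla f(\vz)-\nabla f(\vz')\|=\sqrt 2\,\big\|\tfrac1m\sum_j\big[(|g_j(\vz)|^2-y_j)g_j(\vz)-(|g_j(\vz')|^2-y_j)g_j(\vz')\big]\va_j\big\|$ with $g_j(\vw)=\va_j^*\vw+b_j$, uses the \emph{same} algebraic split $(|u|^2-y_j)u-(|v|^2-y_j)v=(|u|^2-y_j)(u-v)+(|u|^2-|v|^2)v$ that you use, but then expands $v=\va_j^*\vz'+b_j$ and $|u|^2-y_j=|u|^2-|g_j(\vx)|^2$, giving four rank-one or vector terms. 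For the two terms carrying the factor $|u|^2-|v|^2$, the paper does not use the identity $|u|^2-|v|^2=\mathrm{Re}\big[(u-v)\overline{(u+v)}\big]$ followed by Cauchy--Schwarz as you propose; instead it invokes Lemma~\ref{le:dizwbj}, which in turn rests on the nuclear-norm bound $\frac1m\sum_j\big||\va_j^*\vz|^2-|\va_j^*\vw|^2\big|\le\frac{3}{2\sqrt2}\|\vz\vz^*-\vw\vw^*\|_*$ from the PhaseLift paper, to control $\frac1m\sum_j\big||g_j(\vz)|^2-|g_j(\vz')|^2\big|$ in $\ell_1$ with the tight constant $3/2$. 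Your route replaces the matrix $A=[\va_1,\ldots,\va_m]$ and the $\ell_1$ bound with the single inequality $\|\tfrac1m A\vc\|\le\tfrac{\|A\|}{\sqrt m}\big(\tfrac1m\sum_j|c_j|^2\big)^{1/2}$ and an $\ell_2$ bound on $\vc$; this is more elementary (it avoids the auxiliary Lemma~\ref{le:dizwbj} and the PhaseLift nuclear-norm estimate) and is perfectly viable with the same $m\gtrsim d$ sample complexity, but the numerical constants it produces are not the same: carrying out your $c_j^{(2)}$ estimate actually gives a leading factor $8\sqrt2$ rather than $6\sqrt2$ in the first summand, and the $c_j^{(1)}$ estimate gives a different weighting between the $d\log m(R^2+\|\vx\|^2)$ and $\|\vb\|_\infty^2$ terms than the paper's $8\sqrt2\big(2d\log m(R^2+\|\vx\|^2)+\|\vb\|_\infty^2\big)$. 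Two small technical points to fix: (i) the lemma concerns the full $2d$-dimensional Wirtinger gradient of \eqref{eq:wirtgrad}, so you need an additional factor $\sqrt2$ on top of $\|A\|/\sqrt m\le\sqrt2$, giving $\|\nabla f(\vz)-\nabla f(\vz')\|\le 2\big(\tfrac1m\sum|c_j|^2\big)^{1/2}$ rather than $\sqrt2(\cdots)^{1/2}$; (ii) since the exact $C_R$ you derive will differ from the one stated, it is cleaner to present your argument as establishing a Lipschitz constant of the same order (which is all that is used downstream) rather than claiming it reproduces the stated constant verbatim.
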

\begin{proof}
See Appendix \ref{appendix:B}.
\end{proof}

Based on strongly convex and local smoothness properties as stated in Theorem \ref{th:formmain} and Lemma \ref{le:Lipcondition} respectively, we are ready to present the convergence property of Algorithm  \ref{Al:1}.

\begin{theorem}
Assume that $\vx \in \C^d$ is an arbitrary  fixed vector.
Assume that  the vector $\vb\in \C^m$ satisfies $   \norm{\vb}  \ge c_0 \sqrt{m}\norm{\vx}$,   $ \sum_{j=1}^m |b_j|^4 \le c_1 m \norm{\vx}^4 $ and $\norms{\vb}_\infty \le c_2 \sqrt{\log m} \norm{\vx}$, where $c_0>{\sqrt{4.4/1.96}}$ and $c_1, c_2>0$ are positive constants.  Suppose that $\va_j \in \C^d,j=1,\ldots,m $, are complex Gaussian random vectors and $y_j=\abs{\innerp{\va_j,\vx}+b_j}^2, j=1,\ldots,m $. If $m\ge C d\log d$  then with probability at least $1-c_a m^{-1}-c_b \exp(-c_c  m /\log m)-26 \exp(-c_d d)$, the iteration $\vz_k$ given by Algorithm \ref{Al:1} with a fixed step size $\mu \le c_3/(d\log m \norm{\vx}^2)$ obeys
\[
\norm{\vz_{k}-\vx}^2 \le 16 (1-\rho)^k \norm{\vx}^2,
\]
where $\rho:=\mu(1.96c_0^2-4.4) <1$.
Here, $C, c_a, c_b, c_c, c_d$ are positive universal constants and $c_3>0$ is a constant depending only on $ c_1, c_2$.
\end{theorem}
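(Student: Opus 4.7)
The plan is to establish the linear convergence bound by induction on $k$, combining the three ingredients already in hand: the uniform strong convexity of $f$ on $\C^d$ from Theorem \ref{th:formmain}, the local Lipschitz smoothness of $\nabla f$ from Lemma \ref{le:Lipcondition}, and the initialization bound from Lemma \ref{le:bound4x}.

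For the base case $k=0$: Lemma \ref{le:bound4x} gives $\norm{\vx} \le R_0 \le 3\norm{\vx}$, so any $\vz_0 \in \mathbb{B}_{\C}^d(R_0)$ satisfies $\norm{\vz_0 - \vx} \le \norm{\vz_0}+\norm{\vx} \le 4\norm{\vx}$, which yields $\norm{\vz_0-\vx}^2 \le 16\norm{\vx}^2$ as required.

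For the inductive step, assume $\norm{\vz_k - \vx}^2 \le 16(1-\rho)^k\norm{\vx}^2 \le 16\norm{\vx}^2$. Then $\norm{\vz_k} \le 5\norm{\vx}$, placing $\vz_k$ inside the bounded region $\mathcal{S}_R$ with $R = 5\norm{\vx}$ where Lemma \ref{le:Lipcondition} applies; under the stated hypotheses on $\vb$ the resulting Lipschitz constant is $C_R$ on the order of $d\log m\cdot \norm{\vx}^2$. Expanding the gradient descent update,
\[
\norm{\vz_{k+1} - \vx}^2 = \norm{\vz_k - \vx}^2 - 2\mu\,\mathrm{Re}\langle \nabla f(\vz_k), \vz_k - \vx\rangle + \mu^2 \norm{\nabla f(\vz_k)}^2.
\]
I bound the cross term from below by integrating the Wirtinger Hessian along the segment from $\vx$ to $\vz_k$: using $\nabla f(\vx)=\bm{0}$, the fundamental theorem of calculus together with the uniform Hessian lower bound from Theorem \ref{th:formmain} and the correspondence in Remark \ref{Re:relastrongcon} gives
\[
\mathrm{Re}\langle \nabla f(\vz_k), \vz_k - \vx \rangle \;\ge\; \tfrac{1}{2}\xkh{1.96c_0^2 - 4.4}\norm{\vx}^2\,\norm{\vz_k - \vx}^2.
\]
The squared-gradient term is controlled using Lemma \ref{le:Lipcondition} and $\nabla f(\vx)=\bm{0}$: $\norm{\nabla f(\vz_k)}^2 = \norm{\nabla f(\vz_k)-\nabla f(\vx)}^2 \le C_R^2 \norm{\vz_k-\vx}^2$. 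Combining these two bounds gives
\[
\norm{\vz_{k+1}-\vx}^2 \;\le\; \bigl(1 - \mu(1.96c_0^2 - 4.4)\norm{\vx}^2 + \mu^2 C_R^2\bigr) \norm{\vz_k - \vx}^2,
\]
and choosing $\mu \le c_3/(d\log m\,\norm{\vx}^2)$ with $c_3$ small enough (depending only on $c_1,c_2$) forces the $\mu^2 C_R^2$ contribution to be absorbed into half of the linear term, closing the induction with the claimed contraction factor $(1-\rho)$.

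The main obstacle, beyond routine algebra, is the correct handling of the Wirtinger calculus factors: Theorem \ref{th:formmain} controls the Hessian quadratic form acting on the augmented vector $(\vv,\bar{\vv})^\ast$ rather than on $\vv$ alone, so converting the pointwise Hessian lower bound into a coercivity statement for $\mathrm{Re}\langle \nabla f(\vz_k),\vz_k - \vx\rangle$ requires the equivalence described in Remark \ref{Re:relastrongcon} and careful tracking of the accompanying factor of $1/2$. A secondary task is verifying that the high-probability events from Theorem \ref{th:formmain}, Lemma \ref{le:Lipcondition}, and Lemma \ref{le:bound4x} can be intersected (together with a union bound over the constant-many uses of Lemma \ref{le:Lipcondition} needed as $\vz_k$ varies in $\mathcal{S}_R$) to produce the stated probability bound.
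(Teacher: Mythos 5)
Your decomposition of the inductive step is genuinely different from the paper's, and the difference matters. You bound the cross term directly by Hessian coercivity (integrating $\nabla^2 f$ along the segment from $\vx$ to $\vz_k$ and using $\nabla f(\vx)=\bm{0}$), obtaining $\Re\langle \nabla f(\vz_k),\vz_k-\vx\rangle \ge \tfrac12(1.96c_0^2-4.4)\norm{\vx}^2\norm{\vz_k-\vx}^2$, and then bound the squared-gradient term by Lipschitz continuity, $\norm{\nabla f(\vz_k)}^2 \le C_R^2\norm{\vz_k-\vx}^2$. This yields $\norm{\vz_{k+1}-\vx}^2 \le \bigl(1-\mu\alpha + \mu^2 C_R^2\bigr)\norm{\vz_k-\vx}^2$ with $\alpha = (1.96c_0^2-4.4)\norm{\vx}^2$ and $C_R \asymp d\log m\norm{\vx}^2$.

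The gap is in the final absorption step. To turn $1-\mu\alpha+\mu^2 C_R^2$ into $1-\mu\alpha/2$ you need $\mu \le \alpha/(2C_R^2) \asymp 1/\bigl(d^2\log^2 m\,\norm{\vx}^2\bigr)$, not the stated $\mu \le c_3/\bigl(d\log m\,\norm{\vx}^2\bigr)$. Concretely, plug $\mu = c_3/(d\log m\norm{\vx}^2)$ into the two competing terms: $\mu\alpha \asymp c_3/(d\log m)$ while $\mu^2 C_R^2 \asymp c_3^2$, so for large $d$ the quadratic term dominates the linear one no matter how small the \emph{constant} $c_3$ is chosen. Your claim that ``$c_3$ small enough (depending only on $c_1,c_2$) forces the $\mu^2 C_R^2$ contribution to be absorbed'' is therefore not correct; the required $c_3$ would have to decay like $1/(d\log m)$. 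The root cause is the usual suboptimality of the naive gradient-descent analysis: bounding $\norm{\nabla f}^2 \le L^2\norm{\vz-\vx}^2$ gives a step size $\mu\lesssim \alpha/L^2$, whereas the optimal $\mu\lesssim 1/L$ requires a co-coercivity/descent-lemma argument.

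The paper avoids this by never comparing $\mu^2\norm{\nabla f(\vz_k)}^2$ against the strong-convexity term. Instead, it uses the integrated strong convexity inequality in its function-value form, $f(\vx) \ge f(\vz_k)+2\langle\nabla f(\vz_k),\vx-\vz_k\rangle_{\Re}+\tfrac{\beta}{2}\norm{\vz_k-\vx}^2$, which produces an extra $-\mu\bigl(f(\vz_k)-f(\vx)\bigr)$ in the cross term; and it controls the squared gradient by the descent lemma at $\vz_k-\tfrac{1}{L}\nabla f(\vz_k)$, giving $\norm{\nabla f(\vz_k)}^2 \le \tfrac{2L}{4-\sqrt 2}\bigl(f(\vz_k)-f(\vx)\bigr)$. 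The two $\bigl(f(\vz_k)-f(\vx)\bigr)$ terms cancel once $\mu \lesssim 1/L$, which is exactly $\mu \lesssim 1/(d\log m\norm{\vx}^2)$. To close your argument at the stated step size you would need to replace your $\norm{\nabla f(\vz_k)}^2 \le C_R^2\norm{\vz_k-\vx}^2$ bound with this function-value-gap bound (or, equivalently, invoke co-coercivity of the Lipschitz gradient) and carry the $\bigl(f(\vz_k)-f(\vx)\bigr)$ terms through the estimate rather than discarding them at the cross-term stage.
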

\begin{proof}
Set
\[
\widehat{\mathcal{R}}\,\,:=\,\,\dkh{\vz\in \C^d: \norm{\vz}\le 5 \norm{\vx} }.
\]
  Lemma \ref{le:Lipcondition} implies that, with probability at least $1-4\exp(-cm)-c_a m^{-d}$,  the Wirtinger gradient $\nabla f(\vz)$ given in \eqref{eq:wirtgrad} is Lipschitz continuous over $\widehat{\mathcal{R}}$, namely,
\begin{equation}  \label{le:firssmoo}
\norm{\nabla f(\vz) -\nabla f(\vz')} \le  L_{\widehat{\mathcal{R}}}  \norm{\vz-\vz'} \quad \mbox{for all} \quad \vz,\vz' \in \widehat{\mathcal{R}},
\end{equation}
where $L_{\widehat{\mathcal{R}}}:=C_1 d\log m \norm{\vx}^2$ for a constant $C_1>0$ depending only on $c_1$ and $c_2$.
 Here, we use $\norm{\vb} \le \sqrt[4]{m \sum_{j=1}^m |b_j|^4} \le \sqrt[4] c_1 \sqrt m \norm{\vx}$ which follows from  $ \sum_{j=1}^m |b_j|^4 \le c_1 m \norm{\vx}^4 $ and the Cauchy-Schwarz inequality.
 We next claim that, with probability at least $1-4\exp(-cm)-c_a m^{-d}$, it holds that
\begin{equation} \label{eq:mainLip}
f(\vz')\le f(\vz) +  2\nj{\nabla f(\vz), \vz'-\vz}_{\Re} + \frac{\sqrt 2 L_{\widehat{\mathcal{R}}} }{2} \norm{\vz'-\vz}^2 \quad \mbox{for all} \quad \vz',\vz \in  \widehat{\mathcal{R}}.
\end{equation}
Here, $\nabla f(\vz)$ is given in \eqref{eq:abugrad}.

 Theorem \ref{th:formmain} implies that the loss function  $f(\vz)$ is strongly convex with probability at least  $1-c_a m^{-1}-c_b \exp(-c_c  m /\log m)-18 \exp(-c_d d)$. Hence, we have
\begin{equation} \label{eq:mainstrcon}
f(\vx) \ge f(\vz) + 2\nj{\nabla f(\vz), \vx-\vz}_{\Re} + \frac{\beta}{2}  \norm{\vz-\vx}^2 \quad \mbox{for all} \quad \vz \in \C^d,
\end{equation}
where $\beta:=1.96c_0^2-4.4$ (see  Remark \ref{Re:relastrongcon} for detail).

Based on \eqref{eq:mainLip} and \eqref{eq:mainstrcon}, we can prove the conclusion recursively. Indeed, since the initial point $\vz_0 \in \mathbb{B}_{\C}^d(R_0)$,  we have $\norm{\vz_0} \le R_0$.
  According to Lemma \ref{le:bound4x}, we  obtain that, with probability at least $1-4\exp(-c m)$, it holds  that $\norm{\vx} \le R_0 \le 3\norm{\vx}$, which implies $\vz_0,\vx \in \widehat{\mathcal{R}}$.
Next, if we assume $\vz_k \in \widehat{\mathcal{R}}$ then
\begin{eqnarray}
\norm{\vz_{k+1}-\vx}^2 & =& \norm{\vz_k -\vx -\mu \nabla f(\vz_k)}^2\nonumber \\
&= &  \norm{\vz_{k}-\vx}^2+2\mu\cdot \nj{\nabla f(\vz_k), \vx-\vz_k}_{\Re} +\mu^2 \norm{ \nabla f(\vz_k)}^2 \nonumber \\
&\le & \xkh{1-\frac{\mu\beta}{2}}  \norm{\vz_{k}-\vx}^2-\mu\xkh{f(\vz_k)-f(\vx) }+\mu^2 \norm{ \nabla f(\vz_k)}^2\nonumber\\
&\le & \xkh{1-\frac{\mu\beta}{2}}  \norm{\vz_{k}-\vx}^2-\mu\xkh{f(\vz_k)-f(\vx) }+\frac{2\mu^2 L_{\widehat{\mathcal{R}}} }{4-\sqrt 2}  \xkh{f(\vz_k)-f(\vx) }\nonumber \\
&=&  \xkh{1-\frac{\mu\beta}{2}}  \norm{\vz_{k}-\vx}^2-\mu \xkh{1- \frac{2\mu L_{\widehat{\mathcal{R}}} }{4-\sqrt 2}   } \xkh{f(\vz_k)-f(\vx) }\nonumber \\
&\le & \xkh{1-\frac{\mu\beta}{2}}  \norm{\vz_{k}-\vx}^2, \label{eq:recurcont}
\end{eqnarray}
provided the step size $\mu \le (4-\sqrt 2)/(2L_{\widehat{\mathcal{R}}})$, where the first inequality follows from \eqref{eq:mainstrcon} and the second inequality follows from the fact of
\begin{eqnarray*}
f(\vz_k)-f(\vx) & \ge &  f(\vz_k)-f(\vz_k- \frac 1{ L_{\widehat{\mathcal{R}}}} \nabla f(\vz_k))\\
&  \ge & f(\vz_k) -  \left( f(\vz_k) +  2\nj{\nabla f(\vz_k), - \frac 1{ L_{\widehat{\mathcal{R}}}} \nabla f(\vz_k)}_{\Re} + \frac{\sqrt 2 L_{\widehat{\mathcal{R}}} }{2} \norm{ \frac 1{ L_{\widehat{\mathcal{R}}}} \nabla f(\vz_k)}^2\right) \\
&  = & \frac{4-\sqrt 2}{2L_{\widehat{\mathcal{R}}}  }  \norm{ \nabla f(\vz_k)}^2.
\end{eqnarray*}
 Here, we use the fact that $f(\vz)\geq f(\vx)$ for any $\vz\in \C^d$ in the first inequality and the claim \eqref{eq:mainLip} in the second inequality. We can use\eqref{eq:recurcont} to obtain that
\[
\norm{ \vz_{k+1} } \le \norm{\vz_{k+1}-\vx} +\norm{\vx} \le  \norm{\vz_0-\vx} +\norm{\vx} \le  \norm{\vz_{0}} +2\norm{\vx} \le 5\norm{\vx},
\]
which implies  $\vz_{k+1} \in \widehat{\mathcal{R}}$. Applying the  \eqref{eq:recurcont} recursively and observing that $\norm{\vz_0-\vx} \le 4\norm{\vx}$ holds with probability at least $1-4\exp(-c m)$, we arrive at the conclusion.

It remains to prove the claim \eqref{eq:mainLip}. Indeed, from the fundamental theorem of calculus, as shown in \eqref{eq:strrele}, we have
\begin{eqnarray*}
f(\vz') &=& f(\vz)+\int_0^1  \xkh{\nabla f(\vz+ t (\vz'- \vz) )}^* \left [ \begin{array}{l} \vz'- \vz  \vspace{0.5em}\\ \overline{\vz'- \vz} \end{array}\right] dt \nonumber \\
&\le &  f(\vz)+ \xkh{\nabla f(\vz)}^* \left [ \begin{array}{l} \vz'- \vz  \vspace{0.5em}\\ \overline{\vz'- \vz } \end{array}\right] + \sqrt 2 \int_0^1\norm{ \nabla f(\vz+ t (\vz'- \vz) ) -\nabla f(\vz)}\norm{\vz'-\vz}  dt\\
&\le &  f(\vz) +  2\nj{\nabla f(\vz), \vz'-\vz}_{\Re} +\sqrt 2 L_{\widehat{\mathcal{R}}}   \int_0^1 t \norm{\vz'-\vz}^2~dt \\
&\le &  f(\vz) +  2\nj{\nabla f(\vz), \vz'-\vz}_{\Re} +  \frac{\sqrt 2 L_{\widehat{\mathcal{R}}} }{2} \norm{\vz'-\vz}^2,
\end{eqnarray*}
where the second inequality follows from \eqref{le:firssmoo}. This completes the proof of the claim \eqref{eq:mainLip}.
\end{proof}

\section{Numerical Simulations}
In this section, we demonstrate experimentally that the objective function $f$ given in (\ref{eq:fz}) is well structured even when the number of measurements $m=O(d)$. To this end, we test  the efficiency and robustness of Algorithm \ref{Al:1} via a series of numerical experiments.  In our numerical experiments, the target vector $\vx\in \C^d$ is chosen randomly from the standard complex Gaussian  distribution, that is $\vx \sim  \mathcal{N}(0,I_d)+i  \mathcal{N}(0,I_d)$. The measurement vectors $ \va_j, \,j=1,\ldots,m$ are generated randomly from standard complex Gaussian distribution, and the bias vector $\vb \sim 5\norm{\vx} \cdot \mathcal{N}(0,I_d)$.
All experiments are carried out on a laptop computer with a 2.4GHz Intel Core i7 Processor, 8 GB 2133 MHz LPDDR3 memory and Matlab R2016a.

\begin{example}
In this example, we test the empirical success rate of the Algorithm \ref{Al:1} versus the number of measurements.  We set $d=1000$ and vary $m$ within the range $[3d,8d]$. The step size $\mu=0.01$.
 For each $m$, we run $100$ times trials to calculate the success rate.  Here, we say a trial to have successfully reconstructed the target signal if the algorithm returns a vector $\vz_{T}$ which has a small relative error, that is when $\norm{\vz_{T}-\vx}/\norm{\vx} \le 10^{-5}$.  The results are plotted in Figure \ref{figure:succ}.  It can be seen that the Algorithm \ref{Al:1} achieves 100\% success rate when the number of measurements $m\ge 6.5 d$, which means $m\ge 6.5 d$ samples may be sufficient to ensure the strong convexity property holds.
\end{example}

\begin{figure}{}
\centering
     \includegraphics[width=0.45\textwidth]{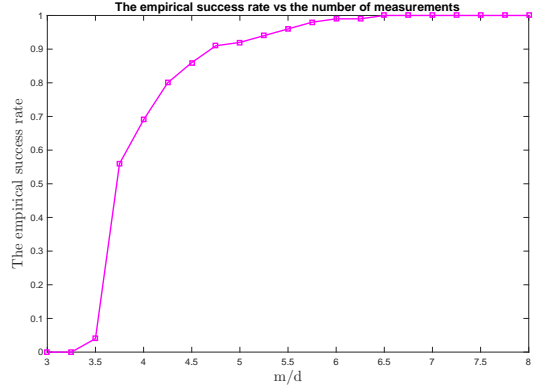}
\caption{ The empirical success rate for different $m/d$ based on $100$ random trails. }
\label{figure:succ}
\end{figure}

\begin{example}
{ In this example, we test the convergence rate of the Algorithm \ref{Al:1}. We choose $d=1000$ and $m=7d$. The step size $\mu=0.01$.  To show the robustness, we  consider the noisy data model $y_j=\abs{\nj{\va_j,\vx}+b_j}^2+\eta_j$ where the noise $\eta_j\sim  {\mathcal N}(0, 0.01^2)$. The results are presented in Figure \ref{figure:relative_error}, which verifies the linear convergence of the Algorithm \ref{Al:1}.

\begin{figure}{}
\centering
\subfigure[]{
     \includegraphics[width=0.45\textwidth]{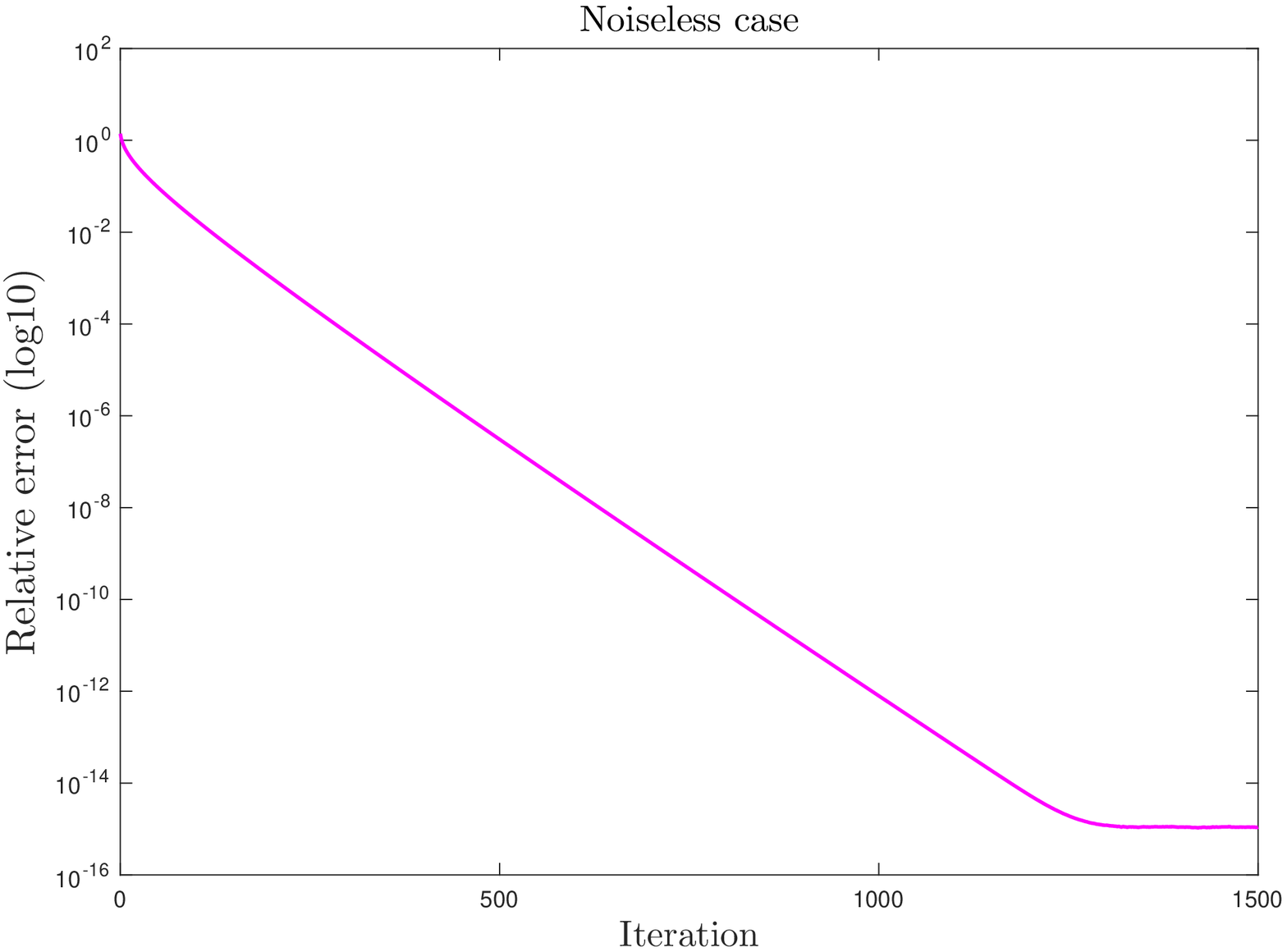}}
\subfigure[]{
     \includegraphics[width=0.45\textwidth]{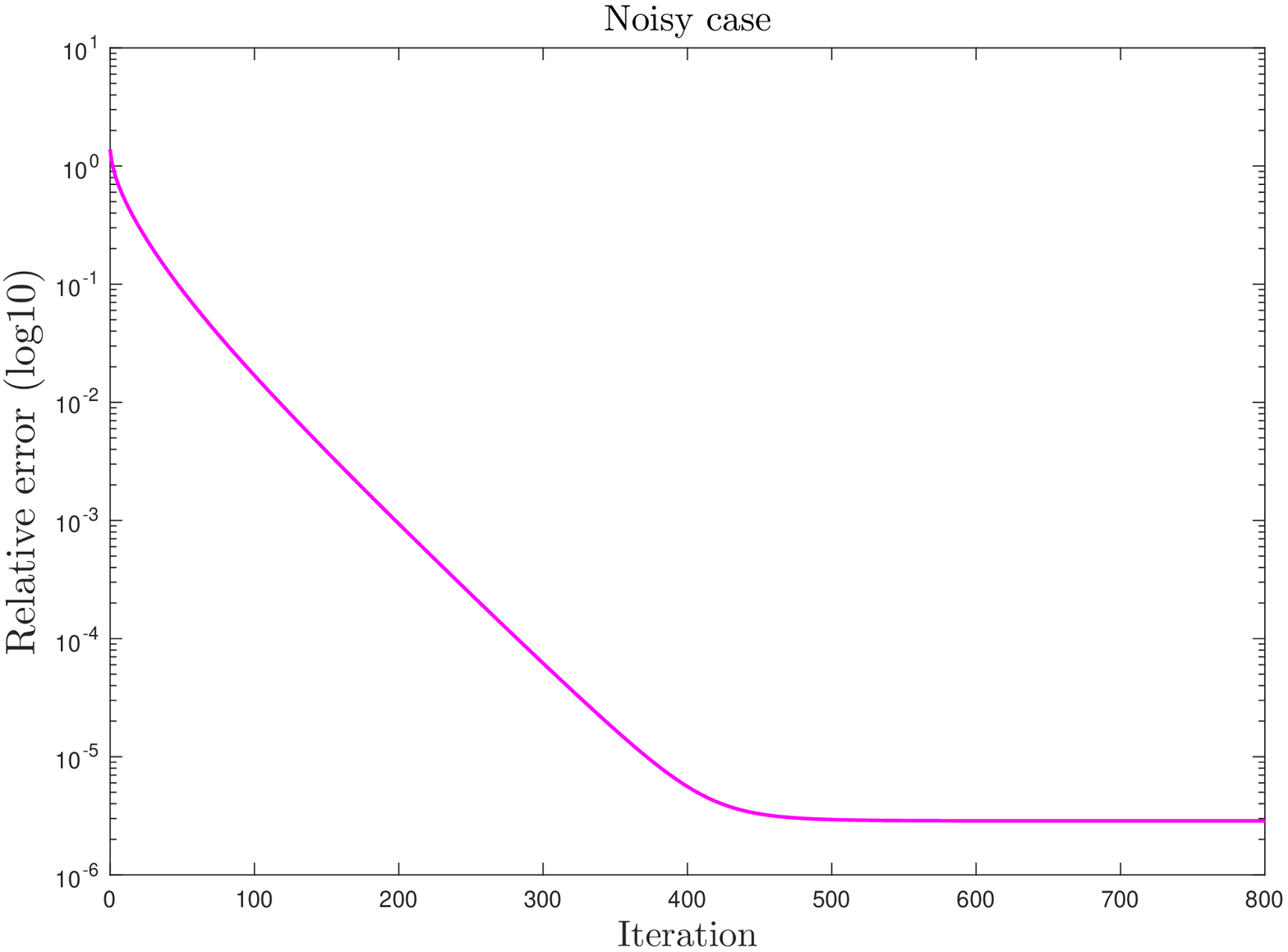}}
\caption{ Relative error versus the number of iterations: (a) The noiseless measurements; (b) The noisy measurements.}
\label{figure:relative_error}
\end{figure}
}
\end{example}

\section{Discussion}
In this paper, we provide the characterization of a natural least squares formulation \eqref{eq:fz} for the affine phase retrieval problem. We show the loss function $f$ given in   \eqref{eq:fz} is strongly convex on the entire space $\C^d$. This benign geometric structure allows the simple gradient descent algorithm to reconstruct the target signals with linear convergence rate.

There are some interesting problems for future research. First,  Theorem  \ref{th:formmain} requires $m \gtrsim d\log d$ samples to guarantee the strongly convexity.
 Based on numerical experiments, we conjecture $m \gtrsim d$ samples are sufficient to ensure the property. It is interesting to see whether the gap can be closed. Second, our current analysis requires the measurements to be Gaussian random vectors. It is of practical interest to extend this result to other  measurements, such as sub-Gaussian measurements, Fourier measurements and short-time Fourier measurements etc.

\appendix

\section{Preliminaries and supporting lemmas}
\begin{lemma}[Chebyshev's inequality] \label{chebv}
For a random variable $X$ with finite variance $\sigma^2=\mbox{\rm Var}( X)$, we have
\[
\PP\xkh{\abs{X- \E X} \ge t } \le \frac{\sigma^2}{t^2} \quad \mbox{for all} \quad t>0.
\]
\end{lemma}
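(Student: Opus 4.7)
The plan is to derive Chebyshev's inequality as a direct consequence of Markov's inequality applied to the nonnegative random variable $Y := (X - \E X)^2$. Since $Y \geq 0$ and $\E Y = \mathrm{Var}(X) = \sigma^2 < \infty$, I can invoke Markov's inequality in the form $\PP(Y \geq s) \leq \E Y / s$ for any $s > 0$.

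First I would establish Markov's inequality itself (or cite it as standard). The one-line argument is $\mathbf{1}_{\{Y \geq s\}} \leq Y/s$ pointwise on the event $\{Y \geq s\}$, and trivially $\mathbf{1}_{\{Y \geq s\}} \leq Y/s$ off that event as well since the indicator is zero and $Y \geq 0$; taking expectations yields $\PP(Y \geq s) \leq \E Y / s$. Then I would observe that the event $\{|X - \E X| \geq t\}$ coincides with the event $\{(X - \E X)^2 \geq t^2\} = \{Y \geq t^2\}$, so applying Markov with $s = t^2$ gives
\[
\PP\bigl(|X - \E X| \geq t\bigr) = \PP(Y \geq t^2) \leq \frac{\E Y}{t^2} = \frac{\sigma^2}{t^2},
\]
which is exactly the claim.

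There is essentially no obstacle here; the only subtlety worth mentioning is the finite-variance hypothesis, which is needed to ensure $\E Y < \infty$ so that the right-hand side is meaningful (otherwise the bound is vacuous but still formally true). No measurability or integrability issue arises beyond what is already encoded in the assumption that $\mathrm{Var}(X)$ is finite, and the proof is two lines once Markov's inequality is in hand.
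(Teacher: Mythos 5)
Your proof is correct: applying Markov's inequality to the nonnegative variable $(X-\E X)^2$ and noting that $\{|X-\E X|\ge t\}=\{(X-\E X)^2\ge t^2\}$ is the standard and complete derivation, and your handling of the finite-variance hypothesis is fine. The paper states this lemma as a known preliminary without giving any proof, so there is nothing to compare against; your two-line argument fills that gap in the canonical way.
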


\begin{lemma}\cite[Bernstein's inequality]{Vershynin2018} \label{le:Ber}
Let $X_1,\ldots,X_m$ be independent, mean zero, sub-exponential random variables, and $\vb=(b_1,\ldots,b_m) \in \R^m$.
 Then  for any $t\ge 0$ it holds that
\[
\PP\dkh{\Big | \sum_{j=1}^m b_j X_j\Big| \ge t} \le 2\exp\xkh{-c \min \Big(\frac{t^2}{K^2 \norm{\vb}^2} , \frac{t}{K \|\vb\|_{\infty}}\Big) },
\]
where $c>0$ is an absolute constant and  $K$ is the maximal  sub-exponential norm, i.e.,  $K:= \mathrm{max}_j \| X_j\|_{\psi_1}$.
\end{lemma}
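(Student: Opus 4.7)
The plan is to prove this by the standard Cramér--Chernoff (exponential moment) method. The key property of a mean-zero sub-exponential random variable $X$ with $\|X\|_{\psi_1}\le K$ is the moment generating function bound
\[
\E \exp(\lambda X)\,\le\, \exp(C\,\lambda^2 K^2)\qquad \text{whenever } |\lambda|\le c_0/K,
\]
for absolute constants $C,c_0>0$. I would take this as the starting input (it is a routine consequence of the Taylor expansion of $\exp$ together with the polynomial moment bound $\E|X|^p\le (CK)^p\,p!$ that defines $\|\cdot\|_{\psi_1}$). Centering is already assumed, so nothing has to be subtracted.

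First I would apply this to the rescaled summands $b_j X_j$. Since $\|b_j X_j\|_{\psi_1}\le |b_j|K$, the MGF bound gives
\[
\E\exp(\lambda b_j X_j)\,\le\,\exp\bigl(C\lambda^2 b_j^2 K^2\bigr)\qquad \text{whenever } |\lambda b_j|\le c_0/K.
\]
Using independence of the $X_j$ to factor the MGF of $S:=\sum_j b_j X_j$, and imposing the uniform constraint $|\lambda|\le c_0/(K\|\vb\|_\infty)$ so that the bound holds for every $j$ simultaneously, I obtain
\[
\E\exp(\lambda S)\,\le\,\exp\bigl(C\lambda^2 K^2\|\vb\|_2^2\bigr).
\]
Markov's inequality then yields, for any $\lambda$ in the admissible range,
\[
\PP(S\ge t)\,\le\, \exp\bigl(-\lambda t + C\lambda^2 K^2\|\vb\|_2^2\bigr).
\]

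The final step is to optimize over $\lambda\in[0,\,c_0/(K\|\vb\|_\infty)]$. The unconstrained minimizer is $\lambda^\star=t/(2CK^2\|\vb\|_2^2)$, which leads to two regimes. In the sub-Gaussian regime, when $\lambda^\star$ lies inside the admissible interval (equivalently, when $t/(K\|\vb\|_\infty)\lesssim t^2/(K^2\|\vb\|_2^2)$), plugging in $\lambda^\star$ gives an exponent proportional to $-t^2/(K^2\|\vb\|_2^2)$. In the sub-exponential regime, when $\lambda^\star$ exceeds the admissible upper bound, I choose $\lambda=c_0/(K\|\vb\|_\infty)$ at the boundary; the linear-in-$\lambda$ term then dominates, producing an exponent proportional to $-t/(K\|\vb\|_\infty)$. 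Since one of the two cases always applies, the better bound is governed by the minimum, giving the $\min(\cdot,\cdot)$ expression in the statement. Applying the same argument to $-S$ and taking a union bound produces the two-sided inequality with the prefactor $2$.

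The only genuinely delicate step is verifying that the admissible range for $\lambda$ is precisely what produces the crossover between the two regimes in the stated $\min$; the rest is bookkeeping on constants, which can be absorbed into the single universal constant $c>0$. Since the statement is cited from Vershynin's textbook, I would simply reference that source rather than reprove the MGF bound from scratch.
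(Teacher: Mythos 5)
Your argument is correct and is essentially the standard Chernoff-type proof of Bernstein's inequality for weighted sums of sub-exponential variables (MGF bound valid for $|\lambda|\le c_0/(K\|\vb\|_\infty)$, factorization by independence, optimization over $\lambda$ with the sub-Gaussian/sub-exponential crossover, then a union bound for two-sidedness), which is precisely the proof in the cited source; the paper itself offers no proof beyond the citation to Vershynin, so there is nothing further to compare. The one step you flag as delicate does go through: whenever the boundary value $\lambda=c_0/(K\|\vb\|_\infty)$ is at most the unconstrained minimizer, the quadratic term is at most half the linear term, so the exponent is still $\le -\lambda t/2$, yielding the $t/(K\|\vb\|_\infty)$ branch of the minimum.
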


The following result is a complex version of Lemma 4.4.3 in \cite{Vershynin2018} and the proof is the same as that of Lemma 4.4.3 in \cite{Vershynin2018}.

\begin{lemma}  \label{le:supnorm}
Let $M\in \C^{d\times d}$  be a Hermitian matrix and $\varepsilon \in [0,1/2)$. Then we have
\[
\norm{M} \le \frac 1{1-2\varepsilon}\cdot \sup_{\vx\in \mathcal{N}} |\nj{M\vx,\vx}|,
\]
where  $\mathcal{N}$ is a $\varepsilon$-net  of the sphere $\mathbb{S}_{\C}^{d-1}$.
\end{lemma}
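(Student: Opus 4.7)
The plan is to execute the standard $\varepsilon$-net / approximation argument, specialized to the fact that the supremum is taken over the diagonal of the bilinear form rather than over independent pairs of unit vectors.

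First, I would invoke the Hermitian operator-norm identity
\[
\norm{M} \;=\; \sup_{\vu \in \mathbb{S}_{\C}^{d-1}} |\nj{M\vu,\vu}|,
\]
which is valid precisely because $M = M^{*}$: by the spectral theorem the eigenvalues of $M$ are real, the operator norm equals the largest modulus among them, and the quadratic form evaluated on the corresponding eigenvector realizes this value. By compactness of the unit sphere, one can choose $\vu \in \mathbb{S}_{\C}^{d-1}$ attaining this supremum, so that $\norm{M} = |\nj{M\vu, \vu}|$.

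Next, I would select $\vv \in \mathcal{N}$ with $\norm{\vu - \vv} \le \varepsilon$ and control the quadratic-form perturbation via the polarization-type splitting
\[
\nj{M\vu,\vu} - \nj{M\vv,\vv} \;=\; \nj{M(\vu-\vv),\vu} + \nj{M\vv,\vu-\vv}.
\]
Each summand on the right is bounded in modulus by $\norm{M}\cdot \norm{\vu - \vv}$ via Cauchy--Schwarz together with $\norm{\vu} = \norm{\vv} = 1$, so the total perturbation is at most $2\varepsilon\norm{M}$.

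Combining the two displays yields
\[
\norm{M} \;=\; |\nj{M\vu,\vu}| \;\le\; |\nj{M\vv,\vv}| + 2\varepsilon\norm{M} \;\le\; \sup_{\vx \in \mathcal{N}} |\nj{M\vx,\vx}| + 2\varepsilon\norm{M},
\]
and dividing through by $1-2\varepsilon > 0$ (using the hypothesis $\varepsilon < 1/2$) gives the claimed bound. There is no genuine obstacle here; the only subtle point is ensuring the Hermitian reduction to quadratic forms is correctly invoked over $\C$, which is precisely where $M = M^{*}$ enters and which distinguishes this lemma from the weaker estimate $\norm{M} \le (1-2\varepsilon)^{-2}\sup_{\vx,\vy \in \mathcal{N}}|\nj{M\vx,\vy}|$ that one obtains for general (non-Hermitian) matrices.
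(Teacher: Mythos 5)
Your proof is correct and is essentially the same argument the paper invokes by reference to Lemma~4.4.3 of Vershynin's book: reduce to the quadratic-form characterization of the operator norm for Hermitian $M$, pick a maximizer $\vu$ on the sphere and a net point $\vv$ within $\varepsilon$, split $\nj{M\vu,\vu}-\nj{M\vv,\vv}$ into two terms each bounded by $\varepsilon\norm{M}$, and rearrange. One small inaccuracy in your closing aside: for general (non-Hermitian) matrices the corresponding two-net bound $\norm{M} \le \frac{1}{1-2\varepsilon}\sup_{\vx,\vy\in\mathcal{N}}|\nj{M\vx,\vy}|$ still carries the factor $(1-2\varepsilon)^{-1}$, not $(1-2\varepsilon)^{-2}$; the distinction is that the Hermitian case lets you take a single net and restrict to the diagonal $\vx=\vy$, not that the constant improves.
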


\begin{lemma}\cite[Lemma 21]{turstregion} \label{le:sunju}
Let $\va_j \in \C^d, j=1,\ldots,m,$ be i.i.d complex Gaussian random vectors. Suppose
that  $\vx \in \C^d$ is a fixed vector. For any $\epsilon \in (0,1)$ the following holds with
probability at least $1-c_a \epsilon^{-2} m^{-1}-c_b \exp(-c_c \epsilon^2 m /\log m)$:
\[
\left\| \frac{1}{m} \sum_{j=1}^m \abs{\va_j^* \vx}^2  \va_j \va_j^*- \big( \vx\vx^* +\norm{\vx}^2 ~ I \big)    \right\|          \le  \epsilon \norms{\vx}^2
\]
provided $m\ge C(\epsilon) d \log d$. Here  $C(\epsilon)$ is a constant depending on
$\epsilon$, $c_a$, $c_b$ and $c_c$ are positive absolute constants.
\end{lemma}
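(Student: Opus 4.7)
The plan is to establish concentration of $M := \frac{1}{m}\sum_j |\va_j^*\vx|^2\va_j\va_j^*$ around its expectation via truncation, Bernstein's inequality, and an $\varepsilon$-net argument. By homogeneity I assume $\norm{\vx}=1$. A direct computation using the i.i.d. structure of the real and imaginary parts of $\va$ (or Wick's identity) yields $\E\bigl[|\va^*\vx|^2\va\va^*\bigr] = \vx\vx^* + I$. Since $M - \E M$ is Hermitian, Lemma~\ref{le:supnorm} with $\varepsilon=1/4$ reduces the operator-norm bound $\norms{M - \E M}\le\epsilon$ to uniform control of the scalar form
\[
S(\vv) \,:=\, \tfrac{1}{m}\sum_j |\va_j^*\vx|^2|\va_j^*\vv|^2 \,-\, \bigl(1 + |\vx^*\vv|^2\bigr)
\]
over a $\tfrac14$-net $\mathcal{N}\subset \mathbb{S}_{\C}^{d-1}$ with $|\mathcal{N}|\le 9^{2d}$.

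The summands $Y_j(\vv) := |\va_j^*\vx|^2|\va_j^*\vv|^2$ are products of two complex Gaussian quadratic forms, whose raw moment generating function fails to exist; applying Lemma~\ref{le:Ber} to $Y_j$ directly would require $m\gtrsim d^2$. The standard remedy is truncation: setting $\tau := K(\epsilon)\sqrt{\log m}$ and
\[
\tilde Y_j(\vv) \,:=\, Y_j(\vv)\cdot \mathds{1}_{\{|\va_j^*\vx|\le\tau\}\cap\{|\va_j^*\vv|\le\tau\}},
\]
one obtains $0\le\tilde Y_j\le\tau^4$. For a fixed $\vv\in\mathcal{N}$, applying Lemma~\ref{le:Ber} with $b_j\equiv 1$ to the bounded, mean-zero random variables $X_j := \tilde Y_j - \E\tilde Y_j$ (whose sub-exponential norm is at most a polylogarithm of $m$) gives the tail bound
\[
\PP\!\bigl(\bigl|\tfrac{1}{m}\textstyle\sum_j(\tilde Y_j - \E\tilde Y_j)\bigr|\ge \tfrac{\epsilon}{3}\bigr)\,\le\, 2\exp\!\bigl(-c\,\epsilon^2 m/\log^{O(1)} m\bigr),
\]
and a union bound over $\mathcal{N}$ (with $\log|\mathcal{N}|\lesssim d$) is absorbed once $m\ge C(\epsilon)\,d\log d$, producing the exponential factor $\exp(-c''\epsilon^2 m/\log m)$ claimed in the lemma.

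The untruncated remainder $R_j(\vv):=Y_j(\vv)-\tilde Y_j(\vv)$ is heavy-tailed but second-moment-small: by Cauchy--Schwarz and the complex-Gaussian tail bound $\PP(|\va^*\vu|>\tau)\le 2\exp(-\tau^2)$, one can arrange $\E R_j^2\le m^{-K_1(\epsilon)}$ by taking $K$ large, and similarly $|\E Y_j-\E\tilde Y_j|\le m^{-K_1}$. Chebyshev's inequality (Lemma~\ref{chebv}) applied to $\frac{1}{m}\sum_j(R_j-\E R_j)$ then delivers the $\epsilon/3$ slack with probability $\ge 1-c_a\epsilon^{-2}m^{-1}$, which is precisely the polynomial $m^{-1}$ contribution in the lemma's probability statement. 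Combining the truncated and tail contributions gives $|S(\vv)|\le\epsilon$ for every $\vv\in\mathcal{N}$, and Lemma~\ref{le:supnorm} converts this to $\norms{M-\E M}\lesssim\epsilon$; rescaling $\epsilon$ completes the argument.

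The main obstacle will be calibrating the truncation level $\tau = K(\epsilon)\sqrt{\log m}$: $K$ must be large enough that the crude bound $m\cdot\exp(-\tau^2)$ is polynomially small in $m$ (so Chebyshev really supplies the $m^{-1}$ rate), yet small enough that $\tilde Y_j$ retains a manageable $\psi_1$-norm and Bernstein yields an exponent of order $\epsilon^2 m/\log m$. A secondary technicality is the transfer from the $\tfrac14$-net back to the full sphere, since $\vv\mapsto Y_j(\vv)$ is only locally Lipschitz with constant $\lesssim|\va_j^*\vx|^2\norm{\va_j}$; this is handled by the standard Gaussian bound $\max_j\norm{\va_j}\lesssim\sqrt{d+\log m}$ on the same high-probability event, which combined with the factor $1/(1-2\varepsilon)$ of Lemma~\ref{le:supnorm} shows that a $\tfrac14$-net is sufficiently fine.
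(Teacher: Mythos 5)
This lemma is not proved in the paper; it is imported verbatim as \cite[Lemma~21]{turstregion} (Sun, Qu and Wright). So I can only compare your proposal against the standard argument in that source, which it follows in broad strokes: compute the expectation (your Wick computation $\E[|\va^*\vx|^2\va\va^*]=\norm{\vx}^2 I+\vx\vx^*$ is correct), reduce the operator norm to a scalar quadratic form on a $\tfrac14$-net via Lemma~\ref{le:supnorm}, truncate the heavy-tailed product $|\va_j^*\vx|^2|\va_j^*\vv|^2$ at level $\operatorname{polylog} m$, and split the probability into an exponential Bernstein piece plus a polynomial tail piece. That is exactly the right architecture, and the tension you flag between the two roles of the truncation level $\tau$ is the real calibration issue.

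There are, however, two concrete problems with the execution. The more serious one is in the tail term: you define $R_j(\vv)=Y_j(\vv)-\tilde Y_j(\vv)$ with a truncation that involves the event $\{|\va_j^*\vv|\le\tau\}$, so $R_j$ depends on $\vv$, yet you run Chebyshev at a fixed $\vv$ and then assert the conclusion ``for every $\vv\in\mathcal N$.'' Chebyshev yields only a polynomial failure probability $\lesssim \epsilon^{-2}m^{-1}$; a union bound over $|\mathcal N|\le 9^{2d}$ net points would require a failure probability like $9^{-2d}m^{-1}$ per point, which Chebyshev cannot deliver without pushing $\tau$ up to scale $\sqrt d$, and at that level the Bernstein exponent collapses. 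You must make the bad event $\vv$-independent: either truncate only the $\vx$-factor and condition on the single event $\{\max_j|\va_j^*\vx|\le\tau\}$ (whose complement has probability $\le m e^{-\tau^2}\le m^{-1}$ once $\tau^2\ge 2\log m$, giving the $m^{-1}$ term in the lemma), or majorize the remainder uniformly by $\|\va_j\|^2|\va_j^*\vx|^2\mathds{1}_{|\va_j^*\vx|>\tau}$ before taking expectations. As written, the argument has a genuine gap.

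The second issue is minor but worth fixing: your closing paragraph invokes a local Lipschitz bound on $\vv\mapsto Y_j(\vv)$ together with $\max_j\|\va_j\|\lesssim\sqrt{d+\log m}$ to ``transfer from the $\tfrac14$-net back to the full sphere.'' This is unnecessary here and mixes up two different net mechanisms. Since $M-\E M$ is a fixed Hermitian matrix, Lemma~\ref{le:supnorm} with $\varepsilon=\tfrac14$ already gives $\|M-\E M\|\le 2\sup_{\vv\in\mathcal N}|\vv^*(M-\E M)\vv|$ with no Lipschitz hypothesis; the $(1-2\varepsilon)^{-1}$ factor is the whole content. A Lipschitz chaining step would only be needed if you were bounding a quantity that is not a quadratic form in $\vv$, which is not the case once you keep the truncation $\vv$-independent as suggested above. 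On the other hand, $\max_j\|\va_j\|\lesssim\sqrt{d\log m}$ \emph{is} exactly what you want for the $\vv$-independent majorization of the remainder, so the ingredient is right but attached to the wrong step.
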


\begin{lemma}\cite[Lemma 22]{turstregion} \label{le:sunju22}
Let $\va_j \in \C^d, j=1,\ldots,m,$ be i.i.d complex Gaussian random vectors. For any
$\epsilon \in (0,1)$  the followings hold  with probability at least $1-c'  m^{-d}-c''
\exp(-c(\epsilon) m )$:
\[
\frac{1}{m} \sum_{j=1}^m \abs{\va_j^* \vz}^2 \abs{\va_j^* \bm{v}}^2  \ge (1-\epsilon) \xkh{\norms{\bm{v}}^2\norms{\vz}^2+\abs{\bm{v}^*\vz}^2} \quad \mbox{for all} \quad \vz,\bm{v} \in \C^d,
\]
\[
\frac{1}{m} \sum_{j=1}^m \zkh{(\va_j^* \vz)(\vv^* \va_j)}_{\Re}^2 \ge (1-\epsilon) \xkh{\frac12 \norm{\vz}^2 \norm{\vv}^2 +\frac32 \zkh{\Re (\vz^*\vv)}^2-\frac12 \zkh{\Im (\vz^*\vv)}^2}
\]
\[
\mbox{for all} \quad \vz,\; \vv \in \C^d,
\]
provided $m\ge C(\epsilon) d \log d$.
 Here  $C(\epsilon)$ and $c(\epsilon)$ are constants depending on $\epsilon$ and
$c'$, $c''$ are positive absolute constants.
\end{lemma}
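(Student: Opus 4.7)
The plan is to prove each of the two inequalities by the same three-stage template: (i) compute the exact expectation of a single summand, (ii) establish a one-point lower-tail concentration of the empirical average around this expectation, and (iii) upgrade the pointwise bound to a uniform statement over $\C^d\times\C^d$ via an $\eta$-net argument. Both inequalities are homogeneous of bidegree $(2,2)$ in $(\vz,\vv)$, so it suffices to work on $\mathbb{S}_\C^{d-1}\times\mathbb{S}_\C^{d-1}$ and rescale at the end.

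The expectations are direct complex-Gaussian moment computations. Writing $U_j := \va_j^*\vz$ and $V_j := \va_j^*\vv$, the pair $(U_j,V_j)$ is jointly circularly symmetric complex Gaussian with $\E|U_j|^2=\norm{\vz}^2$, $\E|V_j|^2=\norm{\vv}^2$, $\E U_j\overline{V_j}=\vv^*\vz$, and $\E U_jV_j=0$. Complex Wick contraction gives $\E|U_j|^2|V_j|^2=\norm{\vz}^2\norm{\vv}^2+|\vv^*\vz|^2$, matching the first RHS. Expanding $[\Re(U_j\overline{V_j})]^2=\tfrac{1}{4}(U_j\overline{V_j}+\overline{U_j}V_j)^2$ and using $\E U_j^2\overline{V_j}^2=2(\vv^*\vz)^2$ together with $|\vv^*\vz|^2=[\Re(\vz^*\vv)]^2+[\Im(\vz^*\vv)]^2$ yields $\E[\Re(U_j\overline{V_j})]^2=\tfrac{1}{2}\norm{\vz}^2\norm{\vv}^2+\tfrac{3}{2}[\Re(\vz^*\vv)]^2-\tfrac{1}{2}[\Im(\vz^*\vv)]^2$, matching the second RHS.

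For pointwise concentration the difficulty is that each summand is a quartic polynomial in a Gaussian vector, hence sub-exponential of order $2$; a direct Bernstein bound blows up since the $\psi_1$-norm is infinite. I would handle this by truncation: pick a threshold $T\asymp\log m$, set $\widetilde X_j:=X_j\,\mathbf{1}\{X_j\le T\}$ where $X_j$ is the summand, bound the truncation bias $\E X_j-\E\widetilde X_j$ by Gaussian fourth-moment tail estimates, and apply Bernstein's inequality (Lemma~\ref{le:Ber}) to $\sum_j(\widetilde X_j-\E\widetilde X_j)$, whose terms are bounded by $T$ with variance $O(1)$. Because the $X_j$ are nonnegative, any lower bound on the truncated sum transfers to the untruncated sum for free, giving one-point failure probability $\exp(-c(\epsilon)\,m/\log m)$, plus a rare-event contribution from the tail event $\max_j X_j>T$, which is exponentially small uniformly over $(\vz,\vv)$ on the sphere.

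The main obstacle is the uniformity step. A product $\eta$-net on $\mathbb{S}_\C^{d-1}\times\mathbb{S}_\C^{d-1}$ has cardinality at most $(C/\eta)^{4d}$, so the union bound costs a factor $\exp(O(d\log(1/\eta)))$ which must be dominated by the per-point concentration. This forces the empirical sum to be Lipschitz in $(\vz,\vv)$ with a high-probability constant of moderate size, which in turn reduces to controlling $\frac{1}{m}\sum_j|\va_j^*\vw|^2\,\va_j\va_j^*$ in operator norm uniformly in $\vw\in\mathbb{S}_\C^{d-1}$; Lemma~\ref{le:sunju} already supplies this up to a small operator-norm error. Choosing $\eta\asymp\epsilon$, absorbing the resulting $O(\epsilon)$ net-discretization slack into the $(1-\epsilon)$ factor on the RHS, and balancing the entropy $(C/\epsilon)^{4d}$ against the per-point failure probability forces $m\ge C(\epsilon)\,d\log d$ and yields the claimed $1-c'm^{-d}-c''\exp(-c(\epsilon)m)$ probability; the $m^{-d}$ factor enters as the very-high-probability bound on $\max_j\norm{\va_j}^2$ that is needed to cap the Lipschitz constant on the whole net.
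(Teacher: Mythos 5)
First, a point of reference: the paper does not prove this lemma at all --- it is quoted from Lemma 22 of \cite{turstregion} and used as a black box --- so the only meaningful comparison is with the proof in that reference, which indeed follows your skeleton (exact fourth-moment computation, truncation to obtain bounded summands, Bernstein, covering argument). Your expectation computations are correct, and so is the observation that nonnegativity of the summands lets a lower bound on the truncated sum transfer to the original sum.

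The genuine gap is in your uniformity step. You assert that the required Lipschitz control ``reduces to controlling $\frac1m\sum_j|\va_j^*\vw|^2\va_j\va_j^*$ in operator norm uniformly in $\vw\in\mathbb{S}_{\C}^{d-1}$'' and that Lemma~\ref{le:sunju} ``already supplies this.'' It does not: Lemma~\ref{le:sunju} is stated for a \emph{single fixed} vector, with failure probability of order $m^{-1}$, which cannot survive a union bound over a net of cardinality $e^{cd}$; more fundamentally, no $O(1)$ uniform bound is true in the regime $m\asymp d\log d$, since taking $\vw=\vv=\va_1/\norm{\va_1}$ gives $\sup_{\vw,\vv}\frac1m\sum_j|\va_j^*\vw|^2|\va_j^*\vv|^2\gtrsim\norm{\va_1}^4/m\asymp d^2/m\gg1$ (this is precisely why only a one-sided lower bound is available below $m\asymp d^2$, as the paper itself remarks). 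Your accounting of the net is correspondingly inconsistent: if, as you also suggest, the Lipschitz constant is capped by $\max_j\norm{\va_j}^2\lesssim d\log m$ (the event generating the $m^{-d}$ term), then a net of fineness $\eta\asymp\epsilon$ is far too coarse --- the discretization error is of order $\epsilon\, d\log m$ --- and you must take $\eta\asymp\epsilon/\mathrm{poly}(d,\log m)$, with entropy $e^{Cd\log(d/\epsilon)}$; this is in fact where the requirement $m\ge C(\epsilon)d\log d$ in the quoted statement comes from, not from balancing $(C/\epsilon)^{4d}$ against the per-point tail. (An alternative repair, used in the paper's own Lemmas~\ref{le:tail}--\ref{le:rebax}, is to exploit the truncation: the truncated summand is a bounded Lipschitz function of $(\va_j^*\vz,\va_j^*\vv)$, and Cauchy--Schwarz against $\|\frac1m\sum_j\va_j\va_j^*\|\lesssim1$ bounds the increment over the net without any fourth-moment matrix.) Finally, truncating at $T\asymp\log m$ only yields a per-point tail $\exp(-c(\epsilon)m/\log m)$, weaker than the claimed $\exp(-c(\epsilon)m)$; to match the statement the threshold should be a constant depending on $\epsilon$, with the tail beyond it controlled in expectation --- and then the event $\max_jX_j>T$ need not be controlled at all, since nonnegativity already gives the one-sided bound.
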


The following lemma is an alternative  version of Lemma 3.3 in \cite{huang2021a}.
\begin{lemma} \label{le:etalowbound}
Let $\eta \in \R^m$ be a fixed vector.
Suppose that $\va_j \in \C^d, j=1,\ldots,m $, are i.i.d. complex Gaussian random vectors.
Then there exists a universal constant $C>0$ such that the following holds with probability at least $1-2\exp(-c_0  d)$:
 \[
   \norm{ \sum_{j=1}^m \eta_j (\va_j\va_j^*-I)} \le C\xkh{ \sqrt{d}\norm{\eta}+d \norms{\eta}_{\infty}}.
 \]
Here, $c_0>0$ is a universal constant.
\end{lemma}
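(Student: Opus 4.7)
The plan is to reduce the operator-norm bound to a scalar concentration inequality via an $\varepsilon$-net argument, and then apply Bernstein's inequality to the resulting sub-exponential sum for each fixed direction. Let $M := \sum_{j=1}^m \eta_j(\va_j \va_j^* - I)$, which is Hermitian. By Lemma \ref{le:supnorm} with $\varepsilon = 1/4$, one has $\|M\| \le 2 \sup_{\vv \in \mathcal{N}} |\langle M\vv, \vv\rangle|$, where $\mathcal{N}$ is a $(1/4)$-net of $\mathbb{S}_{\C}^{d-1}$ with $|\mathcal{N}| \le 9^{2d}$ (using that $\mathbb{S}_\C^{d-1}$ is a real $(2d-1)$-sphere).

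Next I would fix $\vv \in \mathbb{S}_\C^{d-1}$ and study the scalar $\langle M\vv, \vv\rangle = \sum_{j=1}^m \eta_j(|\va_j^* \vv|^2 - 1)$. Since $\va_j \sim \mathcal{CN}(0, I_d)$ and $\|\vv\|=1$, the random variable $\va_j^* \vv$ is distributed as $\mathcal{CN}(0,1)$, so $|\va_j^* \vv|^2$ is exponential with mean $1$. Hence $X_j := |\va_j^* \vv|^2 - 1$ are i.i.d., mean zero, and sub-exponential with $\|X_j\|_{\psi_1} \le K$ for an absolute constant $K$, uniformly in $\vv$. Apply Bernstein's inequality (Lemma \ref{le:Ber}) to $\sum_j \eta_j X_j$ to obtain, for every $t \ge 0$,
\[
\PP\Bigl\{\Bigl|\sum_{j=1}^m \eta_j X_j\Bigr| \ge t\Bigr\} \le 2\exp\!\Bigl(-c\min\Bigl(\tfrac{t^2}{K^2\|\eta\|^2}, \tfrac{t}{K\|\eta\|_\infty}\Bigr)\Bigr).
\]
Choose $t := C_1(\sqrt{d}\,\|\eta\| + d\,\|\eta\|_\infty)$ with $C_1$ large enough that both arguments of the $\min$ are at least $\gamma d$ for some $\gamma$ that dominates $2\log 9$.

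Finally, take the union bound over $\vv \in \mathcal{N}$: the probability that $|\langle M\vv,\vv\rangle| \ge t$ for some $\vv \in \mathcal{N}$ is at most $9^{2d} \cdot 2\exp(-c\gamma d) \le 2\exp(-c_0 d)$ for a suitable universal $c_0 > 0$, provided $C_1$ is chosen sufficiently large. On this event, $\|M\| \le 2t = 2C_1(\sqrt{d}\,\|\eta\| + d\,\|\eta\|_\infty)$, which is the claimed inequality after absorbing constants into $C$.

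The only real subtlety is verifying that the sub-exponential norm $\|X_j\|_{\psi_1}$ is bounded by an absolute constant independent of both $\vv$ and $d$; this follows immediately because $|\va_j^* \vv|^2$ has the same exponential$(1)$ distribution for every unit $\vv$ by rotational invariance of the complex Gaussian. Everything else is bookkeeping with the $\varepsilon$-net cardinality and the two-regime form of Bernstein.
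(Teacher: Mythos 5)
Your proof is correct and complete: the reduction via Lemma \ref{le:supnorm} with a $1/4$-net of cardinality at most $9^{2d}$, the observation that $|\va_j^*\vv|^2-1$ is a centered sub-exponential variable with $\psi_1$-norm bounded by an absolute constant (it is exponential of mean one for every unit $\vv$ by rotation invariance), Bernstein's inequality (Lemma \ref{le:Ber}) at the level $t\asymp \sqrt{d}\,\|\eta\|_2+d\,\|\eta\|_\infty$, and the union bound with a sufficiently large universal constant to dominate the $9^{2d}$ factor together give exactly the stated bound and probability. Note that the paper does not prove this lemma itself but cites Lemma 3.3 of \cite{huang2021a}; your net-plus-Bernstein argument is the standard route for such fixed-weight sums of rank-one Gaussian matrices and serves as a complete self-contained proof of the statement.
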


\begin{lemma}\label{le:dizwbj}
Assume that $\va_j \in \C^d, j=1,\ldots,m,$ are i.i.d complex Gaussian random vectors. If $m\ge Cd$ for a universal constant $C>0$ then, with probability at least $1-3\exp(-c m)$, the following  holds:
\[
\frac 1m \sum_{j=1}^m \abs{|\va_j^* \vz+b_j|^2-|\va_j^* \vw+b_j|^2 } \le \frac 32 \xkh{  \norm{\vz}+\norm{\vw}+ \frac{2 \norm{\vb}}{\sqrt m } } \norm{\vz-\vw}
\]
for all $\vz,\vw \in \C^d$ and  $\vb=(b_1,\ldots,b_m)^\T \in \C^m$.  Here, $c$ is a positive absolute constants.
\end{lemma}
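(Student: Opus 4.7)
The plan is to reduce the difference of moduli to a product using an elementary algebraic identity, then apply Cauchy--Schwarz and a single covariance concentration event.

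First, for any complex numbers $u,v$ one has
$|u|^2-|v|^2 = \Re\bigl((u-v)\,\overline{(u+v)}\bigr)$,
hence $\bigl||u|^2-|v|^2\bigr| \le |u-v|\cdot|u+v|$. Setting $u=\va_j^*\vz+b_j$ and $v=\va_j^*\vw+b_j$, we get $u-v=\va_j^*(\vz-\vw)$ and $u+v=\va_j^*(\vz+\vw)+2b_j$, which yields the pointwise bound
\[
\bigl||\va_j^*\vz+b_j|^2-|\va_j^*\vw+b_j|^2\bigr|\le |\va_j^*(\vz-\vw)|\cdot|\va_j^*(\vz+\vw)|+2|\va_j^*(\vz-\vw)|\cdot|b_j|.
\]

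Next, I would sum over $j$ and apply Cauchy--Schwarz to each of the two pieces. This reduces everything to controlling quadratic forms $\frac1m\sum_j |\va_j^*\vh|^2 = \vh^*\bigl(\frac1m\sum_j \va_j\va_j^*\bigr)\vh$ for $\vh\in\{\vz-\vw,\vz+\vw\}$, together with the deterministic quantity $\frac1m\sum_j |b_j|^2 = \norm{\vb}^2/m$. The single probabilistic event I would invoke is the concentration of the sample covariance: for complex Gaussian vectors with $\E[\va_j\va_j^*]=I_d$, when $m\ge Cd$,
\[
\Bigl\|\tfrac1m\sum_{j=1}^m \va_j\va_j^* - I_d\Bigr\|\le \tfrac12
\]
with probability at least $1-2\exp(-cm)$ (a standard covering-net / Bernstein argument, or Lemma~\ref{le:sunju} with $\vx$ replaced by a scaled vector, or equivalently Vershynin-type bounds). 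On this event,
$\frac1m\sum_j |\va_j^*\vh|^2 \le \tfrac32\norm{\vh}^2$ for every $\vh\in\C^d$ simultaneously, and in particular $\norm{\vz+\vw}\le\norm{\vz}+\norm{\vw}$.

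Putting the pieces together,
\[
\frac1m\sum_{j=1}^m \bigl||\va_j^*\vz+b_j|^2-|\va_j^*\vw+b_j|^2\bigr|
\le \tfrac32(\norm{\vz}+\norm{\vw})\norm{\vz-\vw}+2\sqrt{\tfrac32}\,\frac{\norm{\vb}}{\sqrt m}\,\norm{\vz-\vw},
\]
and since $2\sqrt{3/2}\le 3$, the right-hand side is bounded by $\tfrac32(\norm{\vz}+\norm{\vw}+2\norm{\vb}/\sqrt m)\norm{\vz-\vw}$, as claimed. The bound holds uniformly in $(\vz,\vw,\vb)$ because all appeals to randomness are through the single covariance estimate, which is uniform in $\vh$.

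The argument is essentially deterministic once the covariance event holds, so there is no serious obstacle; the only care needed is to keep the constants in line so that $(1+\delta)\le 3/2$ and $2\sqrt{1+\delta}\le 3$, both of which are satisfied by the choice $\delta=1/2$. The slack of $3\exp(-cm)$ in the stated probability comfortably absorbs the $2\exp(-cm)$ needed for the covariance event (with room to spare for any auxiliary concentration one might prefer to state separately).
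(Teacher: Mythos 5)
Your proof is correct, and it takes a genuinely different (and arguably cleaner) route than the paper. The paper splits the difference as $\bigl||\va_j^*\vz+b_j|^2-|\va_j^*\vw+b_j|^2\bigr|\le \bigl||\va_j^*\vz|^2-|\va_j^*\vw|^2\bigr|+2|b_j||\va_j^*(\vz-\vw)|$, bounds the first sum by importing the restricted-isometry-type estimate $\frac1m\sum_j\bigl||\va_j^*\vz|^2-|\va_j^*\vw|^2\bigr|\lesssim\norms{\vz\vz^*-\vw\vw^*}_*$ from Lemma 3.1 of the PhaseLift paper, and handles the cross term with Cauchy--Schwarz plus a crude $\norms{\frac1m\sum\va_j\va_j^*}\le 9/4$ bound; this stacks three separate events, hence $1-3\exp(-cm)$. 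You instead factor the difference directly via $|u|^2-|v|^2=\Re\bigl((u-v)\overline{(u+v)}\bigr)$, which expresses both pieces as products amenable to Cauchy--Schwarz, and then appeal to a single covariance-concentration event $\norms{\frac1m\sum\va_j\va_j^*-I}\le 1/2$. This is more self-contained (no external RIP lemma) and in fact yields the slightly sharper intermediate bound $\frac32\norm{\vz+\vw}\norm{\vz-\vw}$ before relaxing to $\frac32(\norm{\vz}+\norm{\vw})\norm{\vz-\vw}$; the constant check $2\sqrt{3/2}=\sqrt6\le 3$ is tight enough and your probability budget $1-2\exp(-cm)$ comfortably undershoots the stated $1-3\exp(-cm)$. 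One small stylistic point: the triangle inequality $\norm{\vz+\vw}\le\norm{\vz}+\norm{\vw}$ is deterministic, so phrasing it as a consequence of the covariance event ("on this event \dots in particular") is misleading even though harmless.
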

\begin{proof}
A simple calculation shows that
\begin{eqnarray}
&& \frac 1m \sum_{j=1}^m \abs{|\va_j^* \vz+b_j|^2-|\va_j^* \vw+b_j|^2 } \nonumber\\
 &\quad &\le \frac 1m \sum_{j=1}^m \abs{|\va_j^* \vz|^2-|\va_j^* \vw|^2 }+\frac 2m \sum_{j=1}^m |b_j| \abs{\va_j^*(\vz-\vw)} \nonumber\\
& \quad &\le \frac 1m \sum_{j=1}^m \abs{|\va_j^* \vz|^2-|\va_j^* \vw|^2 } + \frac{2 \norm{\vb}}{\sqrt m } \sqrt{\frac 1m \sum_{j=1}^m |\va_j^*(\vz-\vw)|^2 }.  \label{eq:difzzww}
\end{eqnarray}

We first consider the  term  $\frac 1m \sum_{j=1}^m \abs{|\va_j^* \vz|^2-|\va_j^* \vw|^2 }$ in (\ref{eq:difzzww}). According to  Lemma 3.1 in \cite{phaselift}, the following  holds with probability at least $1-2\exp(-c m)$:
\begin{eqnarray}
 \frac 1m \sum_{j=1}^m \abs{|\va_j^* \vz|^2-|\va_j^* \vw|^2 } &\le &  \frac{3}{2\sqrt 2} \Big\| \vz\vz^*-\vw\vw^*\Big\|_* \nonumber\\
 &\le & \frac 32 \norm{\vz\vz^*-\vw\vw^*} \nonumber\\
 & \le & \frac 32 (\norm{\vz}+\norm{\vw})\norm{\vz-\vw}. \label{eq:nufzw}
\end{eqnarray}
for all $ \vz,\vw \in \C^d$. Here, $\|\cdot\|_*$ denotes the nuclear norm.

We next turn to the second term in (\ref{eq:difzzww}).  We have
\begin{equation} \label{eq:aatran}
\frac 1m \sum_{j=1}^m |\va_j^*(\vz-\vw)|^2 \le  \Big\| \frac 1m \sum_{j=1}^m \va_j\va_j^*\Big\|_2 \Big\| \vz-\vw\Big\|_2^2 \le \frac 94 \Big\| \vz-\vw\Big\|_2^2,
\end{equation}
where we use the fact that $ \Big\| \frac 1m \sum_{j=1}^m \va_j\va_j^*\Big\|_2 \le \frac 94$ with probability at least $1-\exp(-cm)$ in the last inequality.

Putting \eqref{eq:aatran} and \eqref{eq:nufzw} into \eqref{eq:difzzww}, we obtain  that when $m\ge Cd$ then with probability at least $1-3\exp(-c m)$ it holds
\[
 \frac 1m \sum_{j=1}^m \abs{|\va_j^* \vz+b_j|^2-|\va_j^* \vw+b_j|^2 }  \le  \frac 32 \xkh{  \norm{\vz}+\norm{\vw}+ \frac{2 \norm{\vb}}{\sqrt m } } \norm{\vz-\vw}
\]
for all $\vz,\vw \in \C^d$.
\end{proof}

\begin{lemma} \label{le:tail}
 Suppose that $\beta  \ge 1$ is fixed constant.   Assume $\va_j \in \C^d, j=1,\ldots,m $, are i.i.d. complex Gaussian random vectors and
$\vb \in \C^m$ obeys $ \sum_{j=1}^m  |b_j|^4 \lesssim m$ and $\norms{\vb}_{\infty} \le \sqrt{\log m}$.
For any $ \epsilon\in (0, 1)$, if $m\ge C\epsilon^{-2}\log (1/\epsilon) d \log m$ then with probability at least $1-2\exp(-c' \epsilon^2 m/ \log m)-2\exp(-c'' d) $  it holds that
\[
 \frac 1m \sum_{j=1}^m |b_j|^2 |\va_j^* \vv|^2\1_{|\va_j^* \vv| \ge \beta}  \le \xkh{2\beta e^{-0.49 \beta^2} +\epsilon}  \cdot \frac{\norm{\vb}^2}{m}  +  \epsilon
\]
 for all $\vv \in \mathbb{S}_{\C}^{d-1}$. Here, $C, c', c'' >0$ are universal constants and $~\mathbb{S}_{\C}^{d-1}:=\{\vv\in \C^d: \|\vv\|=1\}$.
\end{lemma}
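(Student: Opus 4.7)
The plan is to compute the expectation in closed form and then control the uniform deviation by a truncation-plus-$\epsilon$-net argument. For each fixed $\vv\in\mathbb{S}_\C^{d-1}$, the scalar $g_j:=\va_j^*\vv$ is a standard complex Gaussian, so $|g_j|^2\sim\mathrm{Exp}(1)$ and a direct computation gives
\[
\E\bigl[|g_j|^2\1_{|g_j|\ge\beta}\bigr]\,=\,\int_{\beta^2}^{\infty}t\,e^{-t}\,dt\,=\,(\beta^2+1)e^{-\beta^2}.
\]
The elementary one-variable inequality $(\beta^2+1)/(2\beta)\le\beta\le e^{0.51\beta^2}$ (valid for $\beta\ge 1$) yields $(\beta^2+1)e^{-\beta^2}\le 2\beta\,e^{-0.49\beta^2}$, so
\[
\E\bigl[S(\vv)\bigr]\,\le\,2\beta\,e^{-0.49\beta^2}\cdot\frac{\norm{\vb}^2}{m},
\]
where $S(\vv)$ denotes the left-hand side of the claim. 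It therefore suffices to prove that $S(\vv)-\E\bigl[S(\vv)\bigr]\le\epsilon\,\norm{\vb}^2/m+\epsilon$ uniformly in $\vv$.

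Set $L:=C_0\sqrt{\log m}$ with $C_0$ large and split $S(\vv)=S_1(\vv)+S_2(\vv)$, where $S_1$ aggregates the indices with $\beta\le|\va_j^*\vv|\le L$ and $S_2$ those with $|\va_j^*\vv|>L$. For the tail, the deterministic inequality $|\va_j^*\vv|^2\1_{|\va_j^*\vv|>L}\le L^{-2}|\va_j^*\vv|^4$ together with $|b_j|^2\le\norms{\vb}_\infty^2\lesssim\log m$ gives
\[
S_2(\vv)\,\lesssim\,\frac{\log m}{L^2}\cdot\frac{1}{m}\sum_{j=1}^m|\va_j^*\vv|^4\,=\,\frac{\log m}{L^2}\,\vv^*\bigl(\tfrac{1}{m}{\textstyle\sum_j}|\va_j^*\vv|^2\va_j\va_j^*\bigr)\vv.
\]
Applying Lemma \ref{le:sunju} on an $\epsilon$-net of $\mathbb{S}_\C^{d-1}$ and extending via a polynomial Lipschitz estimate (using $\max_j\norm{\va_j}^2\le Cd$ with probability at least $1-2\exp(-c''d)$) shows that this quartic form is uniformly bounded by a constant on $\mathbb{S}_\C^{d-1}$, so choosing $C_0$ large enough that $\log m/L^2=1/C_0^2\le\epsilon$ yields $S_2(\vv)\le\epsilon$.

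For the bounded piece $S_1$ I smooth the indicator by a piecewise-linear function $h_\delta$ equal to $0$ on $[0,\beta-\delta]$, linear on $[\beta-\delta,\beta]$, and $1$ on $[\beta,\infty)$, and similarly replace $\1_{|\va_j^*\vv|\le L}$ by a Lipschitz cutoff supported on $[0,2L]$. The resulting majorant $\tilde S_1(\vv)\ge S_1(\vv)$ has summands bounded by $\lesssim L^2\log m$ and Lipschitz in $\vv$ with constant $\lesssim L^3\max_j\norm{\va_j}/\delta$. Taking $\delta:=\epsilon$, the added per-summand expectation is $\int_{(\beta-\epsilon)^2}^{\beta^2}t\,e^{-t}\,dt\lesssim\beta^3\epsilon\,e^{-\beta^2}\lesssim\epsilon$ uniformly in $\beta\ge 1$, which is absorbed into the $\epsilon\,\norm{\vb}^2/m$ slack. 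Bernstein's inequality (Lemma \ref{le:Ber}) applied to $\tilde S_1(\vv)-\E\tilde S_1(\vv)$ at a fixed $\vv$ yields the tail $2\exp(-c'\epsilon^2 m/\log m)$, and a union bound over an $\epsilon$-net of cardinality $\exp(O(d\log(1/\epsilon)))$, combined with the Lipschitz estimate and $\max_j\norm{\va_j}^2\le Cd$, extends the bound uniformly whenever $m\ge C\epsilon^{-2}\log(1/\epsilon)\,d\log m$.

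The hardest part is the bookkeeping that links the smoothing parameter $\delta$ to the sample complexity: $\delta$ must be small enough that the smoothing blow-up is absorbed into $\epsilon\norm{\vb}^2/m$, yet large enough that the induced Lipschitz constant $L^3\max_j\norm{\va_j}/\delta$ remains compatible with the allowed net size $\exp(O(d\log(1/\epsilon)))$ and with the Bernstein regime $\epsilon^2 m/\log m$. The two failure probabilities in the statement mirror the two randomness sources used: $2\exp(-c'\epsilon^2 m/\log m)$ from Bernstein aggregated over the net, and $2\exp(-c''d)$ from the uniform control of $\max_j\norm{\va_j}$.
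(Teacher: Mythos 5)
Your high-level plan---majorize the indicator by something Lipschitz, concentrate via Bernstein, and extend uniformly by an $\epsilon$-net---is the same strategy the paper uses, and your exact computation $\E[|g|^2\1_{|g|\ge\beta}]=(\beta^2+1)e^{-\beta^2}$ from $|g|^2\sim\mathrm{Exp}(1)$ is cleaner than the paper's looser Gaussian-style bound. However, your treatment of the upper tail $S_2$ has a genuine gap.

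You introduce an additional upper cutoff at $L=C_0\sqrt{\log m}$ and bound
\[
S_2(\vv)\lesssim\frac{\log m}{L^2}\cdot\frac1m\sum_{j=1}^m|\va_j^*\vv|^4,
\]
then claim the quartic form $Q(\vv):=\frac1m\sum_j|\va_j^*\vv|^4$ is uniformly bounded by a constant on $\mathbb{S}_\C^{d-1}$. This is false in the regime $m\asymp d\log d$: taking $\vv=\va_k/\norm{\va_k}$ for the index $k$ maximizing $\norm{\va_k}$ gives $Q(\vv)\ge\norm{\va_k}^4/m\asymp d^2/m\asymp d/\log d$, which diverges. This uniform blow-up of the quartic form is precisely the ``heavy-tail'' phenomenon the paper is explicitly designing around. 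Moreover, the proposed route through Lemma~\ref{le:sunju} cannot be made uniform: that lemma holds for a \emph{single fixed} vector with a failure probability containing a $c_a\epsilon^{-2}m^{-1}$ term, which does not survive a union bound over an $\exp(O(d))$-point net, and the self-referential Lipschitz constant of $Q$ (of order $\sqrt{d}\,Q^{3/4}$) makes the net argument circular.

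The paper sidesteps all of this by using only a one-sided majorant: $\chi(t)=t$ for $t\ge\beta^2$, linearly interpolated to $0$ below $(1-\delta)\beta^2$, with no upper truncation. The summands $|b_j|^2\chi(|\va_j^*\vv|^2)$ are then sub-exponential (not bounded), and the weighted Bernstein inequality (Lemma~\ref{le:Ber}) with weights $|b_j|^2$ already gives the $\exp(-c\epsilon^2 m/\log m)$ tail directly from $\sum|b_j|^4\lesssim m$ and $\norms{\vb}_\infty^2\lesssim\log m$. Crucially, because $\chi$ is $1/\delta$-Lipschitz, the discretization error between $\vv$ and its net point $\vv_0$ is controlled by the \emph{quadratic} form $\frac1m\sum|b_j|^2|\va_j^*\vw|^2$ (via Cauchy--Schwarz), which Lemma~\ref{le:etalowbound} bounds uniformly by $O(\norm{\vb}^2/m+1)$---avoiding the quartic form entirely. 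To repair your argument you would need either to drop the upper truncation (i.e., adopt the paper's one-sided $\chi$) or to control $S_2$ by a direct counting/small-ball argument rather than by bounding it through $Q(\vv)$.
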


\begin{proof}
Due to the non-Lipschitz of the indicator function $\1_{|\va_j^* \vv| \ge \beta}$, we introduce an auxiliary Lipschitz function to approximate it.
Set
\[
\chi (t):=\left\{ \begin{array}{ll}
                         t,& \mbox{if} \quad t \ge \beta^2 ; \\
                         \frac 1{\delta} t -(\frac 1{\delta}-1)\beta^2 ,& \mbox{if} \quad  (1-\delta)\beta^2 \le t \le \beta^2;\\
                         0,& \mbox{otherwise}.
                         \end{array}
                         \right.
\]
Here $\delta \in (0,1)$ is a constant which will be chosen later.
Then it is easy to check that
\begin{equation}\label{eq:chithesec}
 \frac 1m \sum_{j=1}^m |b_j|^2 |\va_j^* \vv|^2\1_{|\va_j^* \vv| \ge \beta}  \le  \frac{1}{m} \sum_{j=1}^m  |b_j|^2  \chi (|\va_j^* \vv|^2).
\end{equation}
For any fixed $\vv_0\in \mathbb{S}_{\C}^{d-1}$,  the terms $\chi(|\va_j^* \vv_0|^2) $ are independent sub-exponential random variables with the maximal sub-exponential  norm
 being a constant. According to Bernstein's inequality (Lemma \ref{le:Ber}), for any fixed $t \ge 0$ , it holds that
\begin{equation}\label{eq:bern1}
\PP\dkh{ \abs{  \frac{1}{m} \sum_{j=1}^m  |b_j|^2  \chi (|\va_j^* \vv_0|^2)  - \frac{\norm{\vb}^2}{m} \cdot \E\zkh{ \chi(|\va_1^* \vv_0|^2)  }} \ge t} \le 2\exp\xkh{-c \min\Big(\frac{m^2 t^2}{ \sum_{j=1}^m  |b_j|^4},\frac{m t}{\norms{\vb}_{\infty}^2}\Big)},
\end{equation}
where $c>0$ is a universal constant.  Recall that
$
 \sum_{j=1}^m  |b_j|^4 \lesssim m \quad  \mbox{and} \quad \norms{\vb}_{\infty}^2 \lesssim  \log m
 $.
 For any $0< \epsilon <1$, taking $t:=\epsilon/2$  in (\ref{eq:bern1}), we obtain that
\begin{equation} \label{eq:bern2}
\frac{1}{m} \sum_{j=1}^m  |b_j|^2  \chi (|\va_j^* \vv_0|^2) \le \frac{\norm{\vb}^2}{m} \cdot \E\zkh{ \chi(|\va_1^* \vv_0|^2) } +  \frac{\epsilon}{2}
\end{equation}
holds with probability at least $1-2\exp(-c_1 \epsilon^2 m/\log m)$, where $c_1>0$ is a universal constant.

We next show that (\ref{eq:bern2}) holds for any $\vv\in {\mathbb S}_{\C}^{d-1}$.
Suppose that $\mathcal{N}$ is a $\varepsilon_0$-net  over ${\mathbb S}_{\C}^{d-1}$ with  $\#\mathcal{N} \le (1+\frac{2}{\varepsilon_0})^{2d}$.
Then for any $\vv\in {\mathbb S}_{\C}^{d-1}$, there exists a $\vv_0 \in \mathcal{N}$ such that $\norm{\vv-\vv_0}\le \varepsilon_0$.
Note that $\chi(t)$ is a Lipschitz function with Lipschitz  constant $1/\delta$.  We obtain that if $m\ge C_1 d\log m$ for a universal constant $C_1>0$ then
\begin{equation}\label{eq:neteq}
\begin{aligned}
& \abs{ \frac{1}{m} \sum_{j=1}^m  |b_j|^2  \chi (|\va_j^* \vv|^2)  -\frac{1}{m} \sum_{j=1}^m  |b_j|^2  \chi (|\va_j^* \vv_0|^2) } \\
&\le   \frac{1}{m\delta} \sum_{j=1}^m  |b_j|^2  \abs{|\va_j^* \vv|^2-|\va_j^* \vv_0|^2 } \\
&\le   \frac{1}{m\delta} \sum_{j=1}^m  |b_j|^2  |\va_j^* \vv| \abs{ \va_j^* (\vv-\vv_0)}+\frac{1}{m\delta} \sum_{j=1}^m  |b_j|^2  |\va_j^* \vv_0| \abs{ \va_j^* (\vv-\vv_0)} \\
&\le  \frac 1 {\delta}\xkh{\sqrt{\frac{1}{m} \sum_{j=1}^m  |b_j|^2  |\va_j^* \vv|^2}+\sqrt{\frac{1}{m} \sum_{j=1}^m  |b_j|^2  |\va_j^* \vv_0|^2}} \cdot \sqrt{ \frac{1}{m} \sum_{j=1}^m  |b_j|^2 \abs{ \va_j^* (\vv-\vv_0)}^2} \\
&\overset{(a)}\lesssim  \frac 1{\delta}\xkh{\frac{\norm{\vb}^2}{m} + \sqrt{\frac 1 m \sum_{j=1}^m  |b_j|^4 }+\frac{d \norms{\vb}_{\infty}^2} m } \varepsilon_0\\
&\lesssim  \frac 1{\delta} \xkh{\frac{\norm{\vb}^2}{m} + 1} \varepsilon_0.
\end{aligned}
\end{equation}
Here, the inequality  $(a)$  follows from Lemma \ref{le:etalowbound} which says that, with probability at least $1-2\exp(-c_2 d)$, it holds that
 \[
 \sum_{j=1}^m  |b_j|^2  |\va_j^* \vw|^2 \le \vw^* \xkh{ \sum_{j=1}^m  |b_j|^2  \va_j\va_j^*}\vw \lesssim \xkh{\norm{\vb}^2 + \sqrt d \sqrt{  \sum_{j=1}^m  |b_j|^4} +d\norms{\vb}_{\infty}^2}\norm{\vw}^2
 \]
 for all $\vw \in \C^d$, where $c_2>0$ is a universal constant. Here, we use the fact that $ \sum_{j=1}^m  |b_j|^4 \lesssim m$ and $ \norms{\vb}_{\infty}^2 \lesssim  \log m$.
  Choosing $\varepsilon_0:=c_3 \epsilon \delta$ in (\ref{eq:neteq}) for some universal constant $c_3>0$ and taking the union bound over $\mathcal N$, we obtain that
\begin{equation} \label{eq:unibjx}
\frac{1}{m} \sum_{j=1}^m  |b_j|^2  \chi (|\va_j^* \vv|^2) \le \frac{\norm{\vb}^2}{m} \cdot  \xkh{ \E\zkh{ \chi(|\va_1^* \vv|^2) } +\epsilon}+ \epsilon
\end{equation}
holds for all $\vv \in \mathbb{S}_{\C}^{d-1}$ with probability at least
\[
1-2(1+\frac{2}{\varepsilon_0})^{2d} \cdot  \exp(-c_1 \epsilon^2 m/\log m)-2\exp(-c_2 d) \ge 1-2\exp(-c' \epsilon^2 m/\log m)-2\exp(-c_2 d)
\]
provided $m\ge C' \epsilon^{-2}\log (\epsilon^{-1}\delta^{-1}) ~d \log m$, where $C', c'>0$ are universal constants.

We next turn to  $\E\zkh{ \chi(|\va_1^* \vv|^2) }$ in (\ref{eq:unibjx}).
A simple calculation shows that
\begin{equation} \label{eq:expavgam}
\E\xkh{|\va_1^* \vv|^2\1_{|\va_j^* \vv| \ge \gamma}} \le \sqrt{\frac{2}{\pi}} \xkh{\gamma+\frac1{\gamma}}e^{-\gamma^2/2}
\end{equation}
for any $\gamma>0$.  Note that $\E\zkh{ \chi(|\va_1^* \vv|^2) }  \le \E\zkh{ |\va_1^* \vv|^2 \1_{|\va_j^* \vv| \ge (1-\delta)\beta}}$.  Set $\delta:=0.01$ and $\gamma:=(1-\delta)\beta$.  It then follows from \eqref{eq:chithesec}, \eqref{eq:unibjx} and  \eqref{eq:expavgam} that with probability at least $1-2\exp(-c' \epsilon^2 m/\log m)-2\exp(-c_2 d) $, it holds
\[
 \frac 1m \sum_{j=1}^m |b_j|^2 |\va_j^* \vv|^2\1_{|\va_j^* \vv| \ge \beta}  \le \xkh{2\beta e^{-0.49 \beta^2} +\epsilon}  \cdot \frac{\norm{\vb}^2}{m}  +  \epsilon
\]
 for all $\vv \in \mathbb{S}_{\C}^{d-1}$,  provided $m\ge C\epsilon^{-2}\log (1/\epsilon)~ d\log m$. Here, $C>0$ is a universal constant.  We complete the proof.

\end{proof}

\begin{lemma} \label{le:azav2}
Suppose that $\va_j \in \C^d,j=1,\ldots,m $, are i.i.d. complex Gaussian random vectors.
Assume that $\vb \in \C^m$ is a vector obeying  $ \sum_{j=1}^m  |b_j|^4 \lesssim m$ and $\norms{\vb}_{\infty} \lesssim \sqrt{\log m}$.
For any $\epsilon \in (0,1)$, if $m\ge  c(\epsilon) d \log m $ then the following holds
with probability at least $1-6 \exp(-c'(\epsilon) m/\log m ) - 6\exp(-c'' d)$:
\[
\abs{\frac 1m \sum_{j=1}^m  \zkh{\bar{b}_j (\va_j^* \vz)}_{\Re} |\va_j^* \vv|^2} \le  \epsilon\xkh{ \frac{\norm{\vb}}{\sqrt m}+1} \xkh{1 + \xkh{\frac 1m \sum_{j=1}^m |\va_j^* \vz|^2 |\va_j^* \vv|^2}^{\frac 12}  }
\]
for all $ \vz, \vv \in \mathbb{S}_{\C}^{d-1}$, where  $~\mathbb{S}_{\C}^{d-1}:=\{\vx\in \C^d: \|\vx\|=1\}$, $c''$ is a positive universal constant and $c(\epsilon)$, $c'(\epsilon)$ are positive constants depending only on $\epsilon$.
\end{lemma}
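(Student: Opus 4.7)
The approach mirrors that of Lemma~\ref{le:tail}: the summands contain the heavy-tailed third-power product $\zkh{\bar b_j(\va_j^{*}\vz)}_{\Re}|\va_j^{*}\vv|^{2}$, so I would truncate the dangerous factor $|\va_j^{*}\vv|$ at a level $\beta = \beta(\epsilon)$ chosen so that $2\beta e^{-0.49\beta^{2}}\le\epsilon^{2}/4$ (i.e., $\beta^{2}\asymp\log(1/\epsilon)$) and decompose the empirical sum as $S_{1}(\vz,\vv)+S_{2}(\vz,\vv)$, where $S_{1}$ collects the indices with $|\va_j^{*}\vv|\le\beta$ and $S_{2}$ the indices with $|\va_j^{*}\vv|>\beta$. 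With $\cA(\vz,\vv):=\tfrac{1}{m}\sum_{j}|\va_j^{*}\vz|^{2}|\va_j^{*}\vv|^{2}$, the target bound $\epsilon(\norm{\vb}/\sqrt{m}+1)(1+\sqrt{\cA(\vz,\vv)})$ will be assembled by showing $|S_{2}|\lesssim\epsilon(\norm{\vb}/\sqrt m+1)\sqrt{\cA(\vz,\vv)}$ and $|S_{1}|\lesssim\epsilon(\norm{\vb}/\sqrt m+1)$.

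The tail piece $S_{2}$ is immediate by Cauchy–Schwarz:
\[
|S_{2}(\vz,\vv)|\,\le\,\Bigl(\tfrac{1}{m}\sum_{j}|b_j|^{2}|\va_j^{*}\vv|^{2}\1_{|\va_j^{*}\vv|>\beta}\Bigr)^{1/2}\bigl(\cA(\vz,\vv)\bigr)^{1/2},
\]
and Lemma~\ref{le:tail} applied with parameter $\epsilon^{2}/4$ bounds the first factor uniformly in $\vv\in\mathbb{S}_{\C}^{d-1}$ by $\epsilon(\norm{\vb}/\sqrt m+1)$ with the required probability budget. For the bounded piece $S_{1}$, I would first observe that Wick's formula gives $\E[\bar b_j(\va_j^{*}\vz)|\va_j^{*}\vv|^{2}]=0$ (the monomial has two $\va_j$-factors and only one $\bar\va_j$-factor, so the unbalanced-antiholomorphic expectation vanishes), whence the centering bias satisfies $|\E S_{1}|=|\E(\mathrm{tail})|\lesssim\beta^{2}e^{-\beta^{2}/2}\norm{\vb}/\sqrt m\le\epsilon\norm{\vb}/\sqrt m$ by Cauchy–Schwarz in expectation. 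Then Bernstein's inequality (Lemma~\ref{le:Ber}), combined with the moment data $\sum_j|b_j|^{2}\le\norm{\vb}^{2}$, $\sum_j|b_j|^{4}\lesssim m$, $\norms{\vb}_{\infty}\lesssim\sqrt{\log m}$, and the a.s.\ bound $|\zkh{\bar b_j(\va_j^{*}\vz)}_{\Re}|\va_j^{*}\vv|^{2}\1|\le|b_j|\beta^{3}$, yields the pointwise estimate $|S_{1}-\E S_{1}|\le\tfrac{\epsilon}{2}(\norm{\vb}/\sqrt m+1)$ with probability $1-2\exp(-c(\epsilon)m/\log m)$.

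The final step is to promote the pointwise estimate on $S_{1}$ to a uniform one over $(\vz,\vv)\in\mathbb{S}_{\C}^{d-1}\times\mathbb{S}_{\C}^{d-1}$. Following the same recipe as in the proof of Lemma~\ref{le:tail}, I would replace the discontinuous indicator by the Lipschitz proxy $\chi$ of width $\delta=\delta(\epsilon)$, take an $\varepsilon_{0}$-net of the product sphere (with $\#\mathrm{net}\le(1+2/\varepsilon_{0})^{4d}$), and control the discretization error via the operator-norm bounds $\|\tfrac{1}{m}\sum_j\va_j\va_j^{*}\|\lesssim 1$ and $\|\tfrac{1}{m}\sum_j|b_j|^{2}\va_j\va_j^{*}\|\lesssim\norm{\vb}^{2}/m+1$ (the latter from Lemma~\ref{le:etalowbound} together with $\sum_j|b_j|^{4}\lesssim m$ and $\norms{\vb}_\infty\lesssim\sqrt{\log m}$). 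A short computation shows that the $\vz$-Lipschitz constant of $S_{1}$ is $O(\beta^{2})$ while the $\vv$-Lipschitz constant inherits an extra $\norms{\vb}_{\infty}/\delta\lesssim\sqrt{\log m}$ factor, so choosing $\varepsilon_{0}\sim\epsilon/\sqrt{\log m}$ and taking a union bound over the net closes the argument under the stated sample-size condition $m\ge c(\epsilon)d\log m$.

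The main technical obstacle is precisely this joint uniformization in $(\vz,\vv)$: the third-power structure of the summands together with the discontinuous truncation indicator means a naive net-and-union approach does not close, and smoothing the indicator forces the Lipschitz constant to pick up a $\sqrt{\log m}$ factor from $\norms{\vb}_{\infty}$, which in turn is what makes the $\log m$ factor in the sample complexity unavoidable by this route. Overcoming this obstacle is exactly what the Lipschitz-proxy device plus operator-norm concentration is designed for, and it is the same device that already succeeds in Lemma~\ref{le:tail}.
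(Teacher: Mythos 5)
Your proposal follows the same route as the paper's proof: truncate $|\va_j^*\vv|$ at a level $\beta=\beta(\epsilon)$, bound the tail piece by Cauchy--Schwarz combined with Lemma~\ref{le:tail}, and handle the bounded piece by pointwise concentration plus a net argument over $\mathbb{S}_{\C}^{d-1}\times\mathbb{S}_{\C}^{d-1}$ with Lipschitz smoothing of the cutoff; the paper organizes the bounded piece slightly differently (it splits into real/imaginary parts $T_1,T_2$ and uses a fixed $C_c^\infty$ cutoff $\phi$ from the start rather than ``replacing'' an indicator), but the decomposition and the supporting lemmas (\ref{le:tail}, \ref{le:etalowbound}) are identical.

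One step in your $S_1$ analysis is stated incorrectly, though the conclusion survives: you invoke an ``a.s.\ bound $\abs{\zkh{\bar b_j(\va_j^*\vz)}_{\Re}\abs{\va_j^*\vv}^2\1}\le \abs{b_j}\beta^3$'', but after truncating only on $\abs{\va_j^*\vv}$ the Gaussian factor $(\va_j^*\vz)_{\Re}$ remains unbounded, so no such almost-sure bound holds. The correct observation (which the paper uses) is that the truncated summand is sub-Gaussian with norm $O(\abs{b_j}\beta^2)$ -- a product of a Gaussian with a quantity bounded by $O(\beta^2)$ -- so Hoeffding (or, if you insist, Bernstein via the sub-exponential norm rather than an a.s.\ bound) gives the pointwise tail. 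Two smaller points: (i) the centering correction for $\E S_1$ you introduce is unnecessary, since by circular symmetry of the complex Gaussian $\E\zkh{(\va_j^*\vz)_{\Re}\abs{\va_j^*\vv}^2\phi(\abs{\va_j^*\vv}/\beta)}=0$ exactly, which is what the paper silently uses; and (ii) the $\log m$ in the sample complexity and probability of the lemma enters through the tail term $r$ via Lemma~\ref{le:tail} and through $\norms{\vb}_\infty\lesssim\sqrt{\log m}$ inside the operator-norm discretization bound (Lemma~\ref{le:etalowbound}), not through a $\sqrt{\log m}$ blow-up of the net mesh as you suggest -- the paper's mesh $\delta\sim\epsilon/\beta^2$ depends only on $\epsilon$.
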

\begin{proof}
Suppose that $\phi\in C_c^{\infty}(\mathbb R)$ is a Lipschitz continuous function
satisfying $0\le \phi(x)\le 1$ for all $x\in \mathbb R$. We furthermore require
$\phi(x)=1$ for $|x|\le 1$ and $\phi(x)=0$ for $|x|\ge 2$. For any   $\beta>0$, we
have
\begin{equation}\label{eq:T1T2}
\begin{aligned}
&\abs{\frac 1m \sum_{j=1}^m  \zkh{\bar{b}_j (\va_j^* \vz)}_{\Re} |\va_j^* \vv|^2}  \\
&\le  \abs{\frac 1m \sum_{j=1}^m  \zkh{\bar{b}_j (\va_j^* \vz)}_{\Re} |\va_j^* \vv|^2 \phi\xkh{\frac {|\va_j^* \vv|} {\beta}} }+\abs{ \frac 1m \sum_{j=1}^m  \zkh{\bar{b}_j (\va_j^* \vz)}_{\Re} |\va_j^* \vv|^2 \xkh{1-\phi\xkh{\frac {|\va_j^* \vv|} {\beta}} } }\\
&\le  \abs{\frac 1m \sum_{j=1}^m b_{j,\Re} (\va_j^* \vz)_{\Re} |\va_j^* \vv|^2 \phi\xkh{\frac {|\va_j^* \vv|} {\beta}}} +\abs{ \frac 1m \sum_{j=1}^m b_{j,\Im} (\va_j^* \vz)_{\Im} |\va_j^* \vv|^2 \phi\xkh{\frac {|\va_j^* \vv|} {\beta}}} \\
&\quad + \frac 1m \sum_{j=1}^m |b_j| |\va_j^* \vz| |\va_j^* \vv|^2\1_{\dkh{|\va_j^* \vv | \ge \beta}}   := T_1+T_2 + r.
\end{aligned}
\end{equation}
We claim that  for any $\epsilon\in (0, 1)$ there exists a  sufficiently large $\beta>1$
such that   if $m\ge c(\epsilon) d \log m $ then, with probability
at least $1-6 \exp(-c'(\epsilon) m/\log m ) - 6\exp(c'' d)$, the followings hold
\begin{eqnarray}\label{eq:claim0}
T_1\le  \frac{\epsilon}{2} \cdot\xkh{ \frac{\norm{\vb}}{\sqrt m}+1},\quad T_2\le  \frac{\epsilon}{2} \cdot \xkh{ \frac{\norm{\vb}}{\sqrt m}+1},\\
r \le  \epsilon \xkh{ \frac{\norm{\vb}}{\sqrt m}+1} \xkh{\frac 1m \sum_{j=1}^m |\va_j^* \vz|^2 |\va_j^* \vv|^2}^{\frac 12}\label{eq:claim1}
\end{eqnarray}
for all $ \vz, \vv  \in \mathbb{S}_{\C}^{d-1}$.
 Here
 $c(\epsilon), c'(\epsilon)  $ are constants depending on $\epsilon$ and $c''$ is a positive universal constant.  Combining (\ref{eq:T1T2}),   (\ref{eq:claim0}) and (\ref{eq:claim1}), we arrive at the conclusion, i.e,
\[
\abs{\frac 1m \sum_{j=1}^m  \zkh{\bar{b}_j (\va_j^* \vz)}_{\Re} |\va_j^* \vv|^2} \le  \epsilon\xkh{ \frac{\norm{\vb}}{\sqrt m}+1} \xkh{1 + \xkh{\frac 1m \sum_{j=1}^m |\va_j^* \vz|^2 |\va_j^* \vv|^2}^{\frac 12}  }
\]
holds for all $ \vz, \vv \in \mathbb{S}_{\C}^{d-1}$.

It remains to prove the claims (\ref{eq:claim0}) and (\ref{eq:claim1}). We first present an upper bound for $T_1$. For any fixed $\vz_0,\vv_0 \in \mathbb{S}_{\C}^{d-1}$, due to the cut-off $\phi\xkh{\frac {|\va_j^* \vv|}{\beta}}$, the terms
$ (\va_j^* \vz_0)_{\Re} |\va_j^* \vv_0|^2 \phi\xkh{\frac {|\va_j^* \vv_0|} {\beta}}$ are centered,
independent sub-gaussian random variables with the sub-gaussian norm
$O(\beta)$. According to Hoeffding's inequality, we obtain that the following holds
with probability at least $1-2\exp(-c  \epsilon^2 \beta^{-2} m)$
\begin{equation}\label{eq:gudvz00}
\abs{\frac 1m \sum_{j=1}^m  b_{j,\Re} (\va_j^* \vz_0)_{\Re} |\va_j^* \vv_0|^2 \phi\xkh{\frac {|\va_j^* \vv_0|} {\beta}}}  \le \frac {\epsilon \norm{\vb}}{4 \sqrt{m}},
\end{equation}
 where $c >0$ is a universal constant.
 We next show that (\ref{eq:gudvz00}) holds
for all unit vectors $\vz, \vv \in {\mathbb S}_{\C}^{d-1}$.  We adopt a basic version of a
$\delta$-net argument to show that. We assume that $\mathcal{N}$ is a $\delta$-net of the unit
complex sphere in ${\mathbb S}_{\C}^{d-1}$  and hence the covering number $\# \mathcal{N}\le
(1+\frac{2}{\delta})^{2d}$. For any $\vz, \vv\in  \mathbb{S}_{\C}^{d-1}$, there exists a
$\vz_0,\vv_0  \in \mathcal{N}\times \mathcal{N}$ such that $\norm{\vz-\vz_0}\le \delta$ and $\norm{\vv-\vv_0}\le \delta$. Noting $f(\tau):=\tau^2
\phi(\tau/\beta)$ is a bounded function with Lipschitz constant $O(\beta)$, we
obtain that
\begin{equation}\label{eq:gudvz1}
\begin{aligned}
&\Big| \frac 1m \sum_{j=1}^m   b_{j,\Re} (\va_j^* \vz)_{\Re} |\va_j^* \vv|^2 \phi\xkh{\frac {|\va_j^* \vv|} {\beta}} -\frac 1m \sum_{j=1}^m  b_{j,\Re} (\va_j^* \vz_0)_{\Re} |\va_j^* \vv_0|^2 \phi\xkh{\frac {|\va_j^* \vv_0|} {\beta}}  \Big| \\
& \le  \frac 1m \sum_{j=1}^m |b_{j,\Re}| |\va_j^* \vv|^2 \phi\xkh{\frac {|\va_j^* \vv|} {\beta}}  \abs{\va_j^* \vz- \va_j^* \vz_0} \\
& \quad +  \frac 1m \sum_{j=1}^m |b_{j,\Re}| |\va_j^* \vz_0|~  \Big | |\va_j^* \vv|^2 \phi\xkh{\frac {|\va_j^* \vv|} {\beta}} - |\va_j^* \vv_0|^2 \phi\xkh{\frac {|\va_j^* \vv_0|} {\beta}} \Big|  \\
&\lesssim   \frac {\beta^2} m \sum_{j=1}^m |b_{j,\Re}| \abs{\va_j^* \vz- \va_j^* \vz_0} +  \frac {\beta} m \sum_{j=1}^m  |b_{j,\Re}|  |\va_j^* \vz_0|    \abs{\va_j^* \vv- \va_j^* \vv_0} \nonumber\\
& \le \frac {\beta^2} m \norm{\vb} \sqrt{ \sum_{j=1}^m \Abs{\va_j^*(\vz-\vz_0)}^2 } + \frac{\beta}m  \sqrt{\sum_{j=1}^m  |b_j| \Abs{\va_j^*\vz_0}^2 } \cdot  \sqrt{\sum_{j=1}^m  |b_j| \Abs{\va_j^*(\vv-\vv_0)}^2 }  \nonumber\\
&\lesssim  \frac{\beta^2\norm{\vb}}{\sqrt{m}} \norm{\vz-\vz_0} + \beta \xkh{\frac{ \norm{\vb} }{\sqrt m}+1}  \norm{\vv-\vv_0} \le 2 \xkh{ \frac{ \norm{\vb}}{\sqrt{m}} +1}\beta^2\delta  ,
\end{aligned}
\end{equation}
where the fourth inequality follows from Lemma \ref{le:etalowbound} which says, with probability at least $1-2\exp(-c_1 d)$ for a universal constant $c_1>0$, the following  holds
 \begin{eqnarray*}
  \sum_{j=1}^m  |b_j|  |\va_j^* \vw|^2  =   \vw^* \xkh{ \sum_{j=1}^m  |b_j|  \va_j\va_j^*}\vw
&\lesssim&    \xkh{\sum_{j=1}^m  |b_j| + \sqrt{d}\norm{\vb}+d\norms{\vb}_{\infty}}\norm{\vw}^2 \\&\lesssim& \xkh{\sqrt{m} \norm{\vb}+m}\norm{\vw}^2 \quad \mbox{for all} \quad \vw\in \C^d,
 \end{eqnarray*}
provided $ \sum_{j=1}^m  |b_j| \lesssim  m$, $\norms{\vb}_{\infty}  \lesssim \sqrt{\log m}$ and $m\gtrsim d \log m $. Here, we use the fact that
\[
\sum_{j=1}^m  |b_j| \le \sqrt m \sqrt{\sum_{j=1}^m  |b_j|^2 } \le \sqrt m \sqrt[4]{m \sum_{j=1}^m  |b_j|^4}  \lesssim m.
\]
  Choosing $\delta=c_2 \epsilon/\beta^2 $ for some universal constant $c_2>0$ and taking the union bound,  we obtain that
\[
T_1=\abs{ \frac 1m \sum_{j=1}^m  b_{j,\Re} (\va_j^* \vz)_{\Re} |\va_j^* \vv|^2 \phi\xkh{\frac {|\va_j^* \vv|} {\beta}} } \le \xkh{\frac { \norm{\vb}}{ \sqrt{m}}+1}\cdot \frac{\epsilon}2  \quad \mbox{for all} \quad \vz,\vv \in \mathbb{S}_{\C}^{d-1}
\]
holds with probability at least
\[
1-2(1+\frac{2}{\delta})^{2d} \cdot \exp(-c  \epsilon^2 \beta^{-2} m) -2 \exp(-c_1 d) \ge 1-2\exp(-c_3  \epsilon^2 \beta^{-2} m) -2 \exp(-c_1 d)
\]
 provided  $m\ge C\cdot  (\beta/\epsilon)^{2} \log (\beta/\epsilon)  d \log m $. Here,  $C$ and $c_3$ are positive universal constants.  Using the similar argument as above, we  obtain that the following holds with probability at least $1-2\exp(-c_3  \epsilon^2 \beta^{-2} m) -2 \exp(-c_1 d) $:
\[
T_2=  \abs{\frac 1m \sum_{j=1}^m  b_{j,\Im} (\va_j^* \vz)_{\Im} |\va_j^* \vv|^2 \phi\xkh{\frac {|\va_j^* \vv|} {\beta}}} \le  \xkh{\frac { \norm{\vb}}{ \sqrt{m}}+1}\cdot \frac{\epsilon}2 \quad \mbox{for all} \quad \vz,\vv \in \mathbb{S}_{\C}^{d-1}
\]
provided  $m\ge C \cdot (\beta/\epsilon)^{2} \log (\beta/\epsilon)  d\log m $.

Finally, we turn to prove the claim (\ref{eq:claim1}).  We use Cauchy-Schwarz inequality to obtain that
\begin{equation}\label{eq:1claim1}
\frac 1m \sum_{j=1}^m |b_j | |\va_j^* \vz| |\va_j^* \vv|^2\1_{\dkh{|\va_j^* \vv | \ge \beta}} \le \sqrt{\frac 1m \sum_{j=1}^m |\va_j^* \vz|^2 |\va_j^* \vv|^2}\sqrt{\frac 1m \sum_{j=1}^m |b_j|^2 |\va_j^* \vv|^2\1_{|\va_j^* \vv| \ge \beta}}.
\end{equation}
Recall that $ \sum_{j=1}^m  |b_j|^4 \lesssim  m$ and $\|\vb\|_\infty \lesssim \sqrt{m}$.
  According to Lemma \ref{le:tail}, we obtain  that if $m\ge C_1 \epsilon^{-4}\log( 1/\epsilon^{2}) d \log m$ then,  with probability at least $1-2\exp(-c_4 \epsilon^4 m/\log m)-2\exp(-c_5 d) $ ,  the following holds
\begin{equation}\label{eq:2claim1}
\begin{aligned}
 \frac 1m \sum_{j=1}^m |b_j|^2 |\va_j^* \vv|^2\1_{|\va_j^* \vv| \ge \beta} & \le   \xkh{2\beta e^{-0.49 \beta^2}+\frac{\epsilon^2} 2}  \cdot \frac{\norm{\vb}^2}{m}  +\frac{\epsilon^2} 2 \\
 &\le  \xkh{\frac{ \norm{\vb}^2}{m}+1} \cdot \epsilon^2\le   \xkh{ \frac{\norm{\vb}}{\sqrt m}+1} \epsilon
 \end{aligned}
\end{equation}
 for all $\vv \in \mathbb{S}_{\C}^{d-1}$, where $C_1, c_4$ and $c_5$ are positive universal constants. Here, in the second inequality we take $\beta$ to be sufficiently large (depending only on $\epsilon$).
 Combining  (\ref{eq:1claim1}) and (\ref{eq:2claim1}), we arrive at  (\ref{eq:claim1}).

\end{proof}

\begin{lemma} \label{le:axav2}
Suppose that $\va_j \in \C^d, j=1,\ldots,m $, are i.i.d. complex Gaussian random vectors.
Assume that $\vb \in \C^m$ satisfies    $ \sum_{j=1}^m |b_j|^4 \lesssim m$ and $\norms{\vb}_\infty \lesssim \sqrt{\log m}$. Assume that $\vx\in \C^d$ is a fixed vector.
For any $\epsilon \in (0,1)$, if $m\ge C_1 \epsilon^{-2} d \log m$ then,
with probability at least $1-2\exp(-c' \epsilon^2 m/\log m) - c'' \epsilon^{-2}m^{-1}$, it holds that
\[
\left\| \frac 1m \sum_{j=1}^m \zkh{\bar{b}_j (\va_j^* \vx)}_{\Re}  \va_j \va_j^* \right\|_2 \le  \epsilon\xkh{\frac{ \norm{\vb}}{\sqrt m}+ 1} \norm{\vx}.
\]
Here, $C_1, c'$ and $c''$ are positive universal constants.
\end{lemma}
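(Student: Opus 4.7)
The plan is to reduce the operator-norm estimate to a uniform scalar bound already supplied by Lemma \ref{le:azav2}, after exploiting the Hermitian structure of the matrix in question. The matrix
\[
M \;:=\; \frac{1}{m}\sum_{j=1}^m \zkh{\bar b_j (\va_j^*\vx)}_{\Re}\,\va_j\va_j^*
\]
is Hermitian, so $\norm{M}=\sup_{\vv\in\mathbb{S}_{\C}^{d-1}}\abs{\vv^* M \vv}$, and
\[
\vv^* M \vv \;=\; \frac{1}{m}\sum_{j=1}^m \zkh{\bar b_j(\va_j^*\vx)}_{\Re}\,|\va_j^*\vv|^2
\]
is precisely the scalar quantity on the left-hand side of Lemma \ref{le:azav2}.

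Without loss of generality I assume $\norm{\vx}=1$ by rescaling (the general bound follows by homogeneity). Applying Lemma \ref{le:azav2} at $\vz=\vx\in\mathbb{S}_{\C}^{d-1}$ yields, uniformly over $\vv\in\mathbb{S}_{\C}^{d-1}$ and with high probability,
\[
\abs{\vv^*M\vv} \;\le\; \epsilon\xkh{\frac{\norm{\vb}}{\sqrt m}+1}\xkh{1+\xkh{\frac{1}{m}\sum_{j=1}^m |\va_j^*\vx|^2|\va_j^*\vv|^2}^{1/2}}.
\]
To control the auxiliary quadratic form uniformly in $\vv$, I invoke Lemma \ref{le:sunju} at $\vx$, which produces $\norm{\frac{1}{m}\sum_{j=1}^m|\va_j^*\vx|^2\va_j\va_j^*}\le 2+\epsilon$, and hence $\frac{1}{m}\sum_{j=1}^m|\va_j^*\vx|^2|\va_j^*\vv|^2 \le 2+\epsilon$ for every unit $\vv$. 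Substituting and restoring the scaling gives $\norm{M}\le C\epsilon\,(\norm{\vb}/\sqrt m+1)\norm{\vx}$ for a universal constant $C$; relabeling $\epsilon\mapsto\epsilon/C$ yields the desired bound.

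The main delicate point is matching the stated probability $1-2\exp(-c'\epsilon^2 m/\log m)-c''\epsilon^{-2} m^{-1}$, which omits any $\exp(-c''d)$ contribution that Lemma \ref{le:azav2} does carry. The reason the sharper probability is available here is that $\vx$ is \emph{fixed}: specializing the proof of Lemma \ref{le:azav2} to a fixed $\vz=\vx$ eliminates the $\epsilon$-net over $\vz$, so the use of Lemma \ref{le:etalowbound} (which produced the $\exp(-c''d)$ term when covering the $\vz$-discrepancies $\vz-\vz_0$) is not required. What remains is a single $\epsilon$-net over $\vv$ controlled by Bernstein's inequality on the Lipschitz-truncated piece, which furnishes the $\exp(-c'\epsilon^2 m/\log m)$ term, together with the Chebyshev-type ingredient of Lemma \ref{le:sunju} applied to the auxiliary quadratic form, which furnishes the $\epsilon^{-2}m^{-1}$ term. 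This careful reproduction of the proof of Lemma \ref{le:azav2} with $\vz$ fixed is where the main technical work lies, but it is a routine specialization and the key cancellation structure (the $\Re$ part annihilates the mean) is inherited from there.
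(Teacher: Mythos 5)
Your reduction to a scalar bound via the Hermitian structure, followed by Lemma \ref{le:azav2} at the fixed point $\vz=\vx/\norm{\vx}$ and Lemma \ref{le:sunju} to clean up the quadratic form $\cA(\vx,\vv)$, is a legitimate way to obtain the \emph{inequality}, but it does not prove the lemma as stated, and your argument for closing the gap is incorrect. Two concrete problems.

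First, the claim that fixing $\vz=\vx$ removes the $\exp(-c''d)$ term from Lemma \ref{le:azav2} is wrong. In that lemma's proof the $\exp(-c''d)$ arises in two places that have nothing to do with the net over $\vz$: (i) Lemma \ref{le:tail}, used to control the tail term $r=\frac1m\sum_j|b_j||\va_j^*\vz||\va_j^*\vv|^2\1_{\{|\va_j^*\vv|\ge\beta\}}$, which requires a uniform bound over $\vv$ and invokes Lemma \ref{le:etalowbound}; and (ii) the Lipschitz continuity argument over the net in $\vv$, where the chain of inequalities bounds $\sum_j|b_j||\va_j^*(\vv-\vv_0)|^2$ via the same operator-norm estimate of Lemma \ref{le:etalowbound}. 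Freezing $\vz$ only kills the piece involving $\va_j^*(\vz-\vz_0)$; it does not touch the $\vv$-net or the tail, so the $\exp(-cd)$ survives. Second, the sample complexity also does not match: Lemma \ref{le:azav2}'s hidden constant $c(\epsilon)$ scales like $\epsilon^{-4}$ (because Lemma \ref{le:tail} is invoked there at accuracy $\epsilon^2$), whereas the statement you must prove asserts $m\gtrsim\epsilon^{-2}d\log m$.

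The paper instead proves Lemma \ref{le:axav2} by a direct coordinate argument that sidesteps the truncation machinery entirely: take $\vx=\ve_1$ WLOG, split $\frac1m\sum_j(\bar b_j\bar a_{j,1})_\Re|\va_j^*\vv|^2$ into the $|v_1|^2$-piece (a single cubic Gaussian term handled by Chebyshev, giving the $\epsilon^{-2}m^{-1}$ loss) and the piece $\frac1m\sum_j(\bar b_j\bar a_{j,1})_\Re|\tilde\va_j^*\tilde\vv|^2$, which factorizes over the \emph{independent} variables $a_{j,1}$ and $\tilde\va_j$; Bernstein in the second factor then tolerates a $\vv$-net with no $\exp(-cd)$ loss, since the net cardinality $9^{2d}$ is absorbed by $m\gtrsim\epsilon^{-2}d\log m$. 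This independence structure (between the first coordinate and the rest of each $\va_j$) is the key idea your proposal is missing. Your route proves a weaker variant that would still suffice for Theorem \ref{th:formmain} at fixed $\epsilon$, but it is not a proof of the lemma as stated.
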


\begin{proof}
We assume that $\mathcal{N}$
 is a $1/4$-net  of the complex unit sphere ${\mathbb S}_{\C}^{d-1}$ with the cardinality $\# \mathcal{N} \le 9^{2d}$. According to  Lemma \ref{le:supnorm}, we have
\begin{equation*}
\left\| \frac 1m \sum_{j=1}^m \zkh{\bar{b}_j (\va_j^* \vx)}_{\Re}  \va_j \va_j^* \right\|_2   \le 2 \max_{\vv \in \mathcal{N}} \Big| \frac 1m \sum_{j=1}^m \zkh{\bar{b}_j (\va_j^* \vx)}_{\Re} |\va_j^* \vv|^2 \Big|.
\end{equation*}

Without loss of generality, we assume $\vx=\ve_1$. Then
\begin{equation}\label{eq:liangterm}
\begin{aligned}
\frac 1m \sum_{j=1}^m \zkh{\bar{b}_j (\va_j^* \vx)}_{\Re} |\va_j^* \vv|^2& = \frac 1m  \abs{v_1}^2\sum_{j=1}^m \xkh{ \bar{b}_j \bar{a}_{j,1}}_{\Re} |\bar{a}_{j,1}  |^2+ \frac 1m \sum_{j=1}^m \xkh{ \bar{b}_j \bar{a}_{j,1}}_{\Re} |\tilde{\va}_{j}^* \tilde{\vv}|^2
\end{aligned}
\end{equation}
where $\tilde{\va}_{j}, \tilde{\vv} \in \C^{d-1}$ are generated by deleting the first entry of  the vector $\va_j$ and $\vv$, respectively.  For the first term, since $\va_j$ are standard  complex Gaussian random variables,  a simple calculation shows that
 \[
 \mbox{Var}\xkh{ \xkh{ \bar{b}_j \bar{a}_{j,1}}_{\Re} |\bar{a}_{j,1} |^2} \le 3\abs{\vb_j}^2.
 \]
It then follows from Chebyshev's inequality (Lemma \ref{chebv}) that
\begin{equation}\label{eq:chebyle}
\PP\xkh{\abs{ \frac 1m \sum_{j=1}^m \xkh{ \bar{b}_j \bar{a}_{j,1}}_{\Re} |\bar{a}_{j,1} |^2}\ge t } \le  \frac{3\norm{\vb}^2}{m^2 t^2}.
\end{equation}
For any $0<\epsilon<1$, taking $t:=\epsilon  \norm{\vb}/\sqrt{m}$ in (\ref{eq:chebyle}), we obtain that the following holds
\begin{equation} \label{eq:firsv1}
\abs{ \frac 1m \sum_{j=1}^m \xkh{ \bar{b}_j \bar{a}_{j,1}}_{\Re} |\bar{a}_{j,1} |^2}\le \frac{\epsilon \norm{\vb}}{\sqrt m}
\end{equation}
with probability at least $1-3\epsilon^{-2}m^{-1}$.
We next turn to the term  $\frac 1m \sum_{j=1}^m \xkh{ \bar{b}_j \bar{a}_{j,1}}_{\Re} |\tilde{\va}_{j}^* \tilde{\vv}|^2$ in (\ref{eq:liangterm}).
We note that
\[
\frac 1m \sum_{j=1}^m \xkh{ \bar{b}_j \bar{a}_{j,1}}_{\Re} |\tilde{\va}_{j}^* \tilde{\vv}|^2=\frac 1m \sum_{j=1}^m  b_{j,\Re} a_{j,1,\Re} |\tilde{\va}_{j}^* \tilde{\vv}|^2- \frac 1m \sum_{j=1}^m  b_{j,\Im} a_{j,1,\Im} |\tilde{\va}_{j}^* \tilde{\vv}|^2.
\]
For any fixed $\tilde{\vv}_0\in \C^{d-1}$, the terms $ |\tilde{\va}_{j}^* \tilde{\vv}_0|^2- \norm{ \tilde{\vv}_0}^2$ are centered subexponential random variables with the maximal subexponential norm $K:=C \norm{ \tilde{\vv}_0}^2$ where $C>0$ is a universal constant. Furthermore, $|\tilde{\va}_{j}^* \tilde{\vv}_0|^2$ are independent with $a_{j,1}$. We use Bernstein's inequality to obtain that
\begin{equation*}
\begin{aligned}
& \PP\xkh{\frac 1m \sum_{j=1}^m  b_{j,\Re} a_{j,1,\Re} \Big( |\tilde{\va}_{j}^* \tilde{\vv}_0|^2- \norm{ \tilde{\vv}_0}^2 \Big) \ge t \norm{\tilde{\vv}_0}^2  } \\
&\quad\le  2\exp\xkh{-c \min \xkh{\frac{ m^2 t^2}{\sum_{j=1}^m  b_{j,\Re}^2 a_{j,1,\Re}^2 }, \frac{ m t}{ \max_{j\in [m]} \abs{b_{j,\Re}}\cdot \abs{ a_{j,1,\Re} } }  }},
\end{aligned}
\end{equation*}
where $c>0$ is a universal constant.
Taking $t:=\epsilon/2$, together with the union bound over $\mathcal N$, we obtain that
\begin{equation}\label{eq:ch}
\begin{aligned}
& \PP\xkh{\frac 1m \sum_{j=1}^m  b_{j,\Re} a_{j,1,\Re} \Big( |\tilde{\va}_{j}^* \tilde{\vv}|^2- \norm{ \tilde{\vv}}^2 \Big) \ge \frac{\epsilon}2 \norm{\tilde{\vv}}^2  }   \\
&\le  2\exp\xkh{-4c \min \xkh{\frac{ m^2 \epsilon^2}{\sum_{j=1}^m  b_{j,\Re}^2 a_{j,1,\Re}^2 }, \frac{ m \epsilon}{ \max_{j\in [m]} \abs{b_{j,\Re} } \cdot\abs{a_{j,1,\Re}}  }  } + 5d}
\end{aligned}
\end{equation}
for all $\vv \in \mathcal N$.
By Chebyshev's inequality, we obtain that the following holds with probability at least $1-\epsilon^{-2}m^{-1}$
\begin{equation} \label{eq:ch1}
\abs{ \frac 1m \sum_{j=1}^m  b_{j,\Re} a_{j,1,\Re} } \norm{ \tilde{\vv}}^2 \le \frac{\epsilon \norm{\vb_{\Re}}}{\sqrt m} \norm{ \tilde{\vv}}^2.
\end{equation}
Similarly, the Chebyshev's inequality implies
\begin{equation}\label{eq:ch2}
\PP\xkh{\abs{\sum_{j=1}^m  b_{j,\Re}^2 a_{j,1,\Re}^2 - \frac 12 \norm{\vb_{\Re}}^2} \ge m} \le \frac{ \sum_{j=1}^m |b_j|^4}{m^2} \lesssim \frac 1m
\end{equation}
provided $ \sum_{j=1}^m |b_j|^4 \lesssim m$. Moreover, a union bound gives
\begin{equation} \label{eq:ch3}
   \max_{1\le j \le m} |a_{j,1}| \le \sqrt{10 \log m}
\end{equation}
 with probability at least $1- m^{-2}$.
 Putting \eqref{eq:ch1}, \eqref{eq:ch2} and \eqref{eq:ch3} into \eqref{eq:ch}, we obtain that, with probability at least $1-2\exp(-c_1 \epsilon^2 m/\log m) -  c_2 \epsilon^{-2}m^{-1} -m^{-2}$, it holds
 \begin{equation} \label{eq:sevtild}
\abs{ \frac 1m \sum_{j=1}^m  b_{j,\Re} a_{j,1,\Re} |\tilde{\va}_{j}^* \tilde{\vv}|^2} \le \xkh{\frac{ \norm{\vb_{\Re}}}{\sqrt m}+ \frac12} \epsilon \norm{ \tilde{\vv}}^2 \quad \mbox{for all} \quad \vv\in \mathcal N
 \end{equation}
provided $\norm{\vb}\le \sqrt[4]{m\sum_{j=1}^m  |b_j|^4} \lesssim \sqrt m$, $\norms{\vb}_\infty \lesssim \sqrt{\log m}$ and $m\ge C_1 \epsilon^{-2} d \log m$. Here, $C_1, c_1$ and $c_2$ are positive universal constants.
Using the same argument as above, we obtain that
\begin{equation} \label{eq:sevtild2}
\abs{ \frac 1m \sum_{j=1}^m  b_{j,\Im} a_{j,1,\Im} |\tilde{\va}_{j}^* \tilde{\vv}|^2} \le \xkh{\frac{ \norm{\vb_{\Im}}}{\sqrt m}+ \frac12}\epsilon \norm{ \tilde{\vv}}^2 \quad \mbox{for all} \quad \vv\in \mathcal N.
 \end{equation}
Combining  \eqref{eq:firsv1}, \eqref{eq:sevtild} and \eqref{eq:sevtild2}, we obtain that
\[
\frac 1m \sum_{j=1}^m \zkh{\bar{b}_j (\va_j^* \vx)}_{\Re} |\va_j^* \vv|^2 \le \xkh{\frac{ \norm{\vb}}{\sqrt m}+ 1}\epsilon \norm{ \vv}^2  \quad \mbox{for all} \quad \vv\in {\mathcal N}
\]
holds with probability at least $1-2\exp(-c' \epsilon^2 m/\log m) - c'' \epsilon^{-2}m^{-1}$, provided $m\ge C_1 \epsilon^{-2} d \log m$. Here, $c'>0$ and $c''>0$ are universal constants.
 This completes the proof.

\end{proof}

\begin{lemma} \label{le:rebax}
Suppose that $\va_j \in \C^d,j=1,\ldots,m $, are i.i.d. complex Gaussian random vectors.
Assume that $\vb\in \C^m$  which satisfies  $ \sum_{j=1}^m |b_j|^4 \lesssim m$ and $\norms{\vb}_\infty \le \sqrt{\log m}$.
For any $\epsilon \in (0,1)$, if $m\ge C'' \epsilon^{-2}\log (1/\epsilon) ~d \log m$ then,
with probability at least $1-6\exp(-c' \epsilon^2 m/\log m)-6\exp(-c'' d) $,  it holds that
\[
\frac 1m \sum_{j=1}^m \zkh{\bar{b}_j (\va_j^* \vv)}_{\Re}^2  \ge \frac{1-\epsilon}{2} \cdot  \frac{\norm{\vb}^2}{m} -\epsilon \quad \mbox{for all} \; \vv \in \mathbb{S}_{\C}^{d-1}.
\]
Here, $C'',c', c'' >0$ are universal constants.
\end{lemma}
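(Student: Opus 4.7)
The plan is to use a smooth truncation that makes the summand bounded, combined with a Bernstein concentration bound on a covering net of $\mathbb{S}_\C^{d-1}$, in the same spirit as the proof of Lemma \ref{le:azav2}. Let $\phi \in C_c^\infty(\R)$ satisfy $0 \le \phi \le 1$, $\phi(x)=1$ for $|x|\le 1$ and $\phi(x)=0$ for $|x|\ge 2$. Choose a truncation level $\beta=\beta(\epsilon)>0$ large enough (specifically $\beta^2 \gtrsim \log(1/\epsilon)$) and define
\[
S(\vv)\,:=\,\frac{1}{m}\sum_{j=1}^m \zkh{\bar{b}_j(\va_j^*\vv)}_{\Re}^2\,\phi\!\xkh{\frac{|\va_j^*\vv|}{\beta}}.
\]
Since $0\le\phi\le 1$, we have $S(\vv)\le \tfrac{1}{m}\sum_j \zkh{\bar b_j(\va_j^*\vv)}_{\Re}^2$, so it suffices to establish a lower bound for $S(\vv)$ uniformly in $\vv\in\mathbb{S}_\C^{d-1}$.

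For any fixed unit vector $\vv_0$, the random variable $\va_j^*\vv_0$ is a standard complex Gaussian, so $\zkh{\bar b_j(\va_j^*\vv_0)}_{\Re}$ is a real Gaussian with variance $|b_j|^2/2$. A direct Gaussian tail estimate (of the same flavor as \eqref{eq:expavgam}) gives
\[
\E\zkh{\bar b_j(\va_j^*\vv_0)}_{\Re}^2\xkh{1-\phi(|\va_j^*\vv_0|/\beta)}\,\le\, |b_j|^2\,\tau(\beta),\quad\text{with }\tau(\beta)\to 0\text{ as }\beta\to\infty,
\]
so that $\E S(\vv_0) \ge (1-2\tau(\beta))\,\|\vb\|^2/(2m)\ge (1-\epsilon/2)\,\|\vb\|^2/(2m)$ once $\beta$ is chosen large in terms of $\epsilon$. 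Each summand in $mS(\vv_0)$ is now bounded in absolute value by $4|b_j|^2\beta^2\lesssim \|\vb\|_\infty^2\beta^2\lesssim \beta^2\log m$, with aggregate variance $\sum_j\mathrm{Var}\lesssim \beta^2\sum_j|b_j|^4\lesssim \beta^2 m$. Bernstein's inequality (Lemma \ref{le:Ber}) then yields
\[
\PP\!\xkh{\abs{S(\vv_0)-\E S(\vv_0)}\ge \epsilon/4}\,\le\, 2\exp\!\xkh{-c\,\epsilon^2 m /(\beta^2\log m)}.
\]

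Next I extend the bound from $\vv_0$ to all $\vv\in\mathbb{S}_\C^{d-1}$ via a $\delta$-net argument. The map $\vv\mapsto \zkh{\bar b_j(\va_j^*\vv)}_{\Re}^2\phi(|\va_j^*\vv|/\beta)$ is, on account of the cut-off, Lipschitz in $\va_j^*\vv$ with constant $O(|b_j|^2\beta)$, and under the cut-off the relevant values $|\va_j^*\vv|$ are at most $2\beta$. A Cauchy–Schwarz argument exactly like \eqref{eq:gudvz1}, using Lemma \ref{le:etalowbound} to control $\sum_j |b_j|^2|\va_j^*(\vv-\vv_0)|^2$, shows that
\[
|S(\vv)-S(\vv_0)|\,\lesssim\,\beta^2\xkh{\frac{\|\vb\|^2}{m}+1}\delta \quad \text{for all } \|\vv-\vv_0\|\le\delta,
\]
holding with probability $1-2\exp(-c''d)$. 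Choosing $\delta\asymp \epsilon/\beta^2$ makes the net error at most $\epsilon/4$ (after absorbing $\|\vb\|^2/m\lesssim 1$). A union bound over a $\delta$-net of cardinality $\le (1+2/\delta)^{2d}$ preserves the Bernstein estimate provided $m\ge C\epsilon^{-2}\log(1/\epsilon)\,d\log m$, since $\beta^2\asymp \log(1/\epsilon)$. Combining the three estimates we obtain, for every $\vv\in\mathbb{S}_\C^{d-1}$,
\[
\tfrac{1}{m}\sum_j\zkh{\bar b_j(\va_j^*\vv)}_{\Re}^2\ge S(\vv)\ge \E S(\vv_0)-\epsilon/4-\epsilon/4\ge \tfrac{1-\epsilon}{2}\cdot\tfrac{\|\vb\|^2}{m}-\epsilon,
\]
which is the claim. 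The only delicate point, and the step I would check most carefully, is the balance between the truncation level $\beta$ (which must be large enough to keep the tail cost $\tau(\beta)$ below $\epsilon$) and the resulting Lipschitz constant $O(\beta^2)$ governing the covering-net size; since $\beta^2\lesssim\log(1/\epsilon)$ suffices for the Gaussian tail, this exactly matches the advertised sample complexity $m\gtrsim \epsilon^{-2}\log(1/\epsilon)\,d\log m$.
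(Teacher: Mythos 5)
Your proposal is correct, but it takes a genuinely different route from the paper. The paper does \emph{not} truncate here: it expands
\[
\zkh{\bar b_j(\va_j^*\vv)}_{\Re}^2 \,=\, b_{j,\Re}^2(\va_j^*\vv)_{\Re}^2 + b_{j,\Im}^2(\va_j^*\vv)_{\Im}^2 + 2b_{j,\Re}b_{j,\Im}(\va_j^*\vv)_{\Re}(\va_j^*\vv)_{\Im},
\]
and treats each of the three sums separately. For a fixed $\vv_0$, the quantity $(\va_j^*\vv_0)_{\Re}^2-\tfrac12$ is already a centered sub-exponential random variable, so Bernstein's inequality (Lemma \ref{le:Ber}, with weights $b_{j,\Re}^2$) applies \emph{directly} — no cut-off is needed — giving a deviation bound whose exponent is $\min\!\bigl(\epsilon^2 m^2/\sum_j b_{j,\Re}^4,\, \epsilon m/\|\vb\|_\infty^2\bigr)\gtrsim \epsilon^2 m/\log m$. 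The uniformization over $\vv$ then proceeds by Cauchy–Schwarz and Lemma \ref{le:etalowbound} exactly as in your write-up, with a net of mesh $\varepsilon_0\asymp\epsilon$ and cardinality $\exp(O(d\log(1/\epsilon)))$, which is where the $\log(1/\epsilon)$ in the sample-size hypothesis comes from. Your approach instead imports the smooth-truncation device that the paper reserves for Lemmas \ref{le:azav2} and \ref{le:rbavrazav}, where the summands involve third powers of Gaussians and are genuinely heavy-tailed; for the present lemma the extra truncation is a workable but unnecessary detour. The cost is visible in the constants: because your net mesh must scale like $\epsilon/\beta^2$ and your Bernstein denominator carries the cut-off level $\beta^2\asymp\log(1/\epsilon)$, you end up with a failure probability of order $\exp(-c\,\epsilon^2 m/(\beta^2\log m))$, i.e.\ $c'$ is not a universal constant as stated in the lemma but degrades by a factor $\log(1/\epsilon)$. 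For the purposes of the main theorem (where $\epsilon$ is a fixed small number) this makes no practical difference, and the rest of your argument — the expectation lower bound via the Gaussian tail, the $O(|b_j|^2\beta)$ Lipschitz constant, the Cauchy–Schwarz net estimate using Lemma \ref{le:etalowbound} — is sound.
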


\begin{proof}
A simple calculation shows that
\begin{equation} \label{eq:split}
\zkh{\bar{b}_j (\va_j^* \vv)}_{\Re}^2=b_{j,\Re}^2 (\va_j^* \vv)_{\Re}^2+b_{j,\Im}^2 (\va_j^* \vv)_{\Im}^2+2 b_{j,\Re}b_{j,\Im} (\va_j^* \vv)_{\Re} (\va_j^* \vv)_{\Im}.
\end{equation}
We first give a lower bound for the first term $\frac 1m \sum_{j=1}^m  b_{j,\Re}^2 (\va_j^* \vv)_{\Re}^2$. Note that $\E (\va_j^* \vv)_{\Re}^2=1/2$.  For any fixed $\vv_0 \in \mathbb{S}_{\C}^{d-1}$, by Bernstein's inequality, we have
\[
\PP\xkh{\abs{\frac 1m \sum_{j=1}^m  b_{j,\Re}^2 (\va_j^* \vv_0)_{\Re}^2-\frac{\norm{\vb_{\Re}}^2}{2m}} \ge \frac{\epsilon}6} \le 2\exp\xkh{-c \min\xkh{\frac{\epsilon^2 m^2}{\sum_{j=1}^m b_{j,\Re}^4},\frac{\epsilon m}{\norms{\vb}_{\infty}^2}}},
\]
where $c>0$ is a universal constant.
 Recall that  $ \sum_{j=1}^m |b_j|^4 \lesssim m$ and $\norms{\vb}_{\infty} \lesssim \sqrt{\log m}$.  We  obtain that, with probability at least $1-2\exp(-c_1 \epsilon^2 m/\log m)$, it holds that
\begin{equation} \label{eq:brv2}
\frac 1m \sum_{j=1}^m  b_{j,\Re}^2 (\va_j^* \vv_0)_{\Re}^2 \ge \frac{\norm{\vb_{\Re}}^2}{2m} -\frac{\epsilon}{6},
\end{equation}
where $c_1>0$ is a universal constant.
We next give a uniform bound for \eqref{eq:brv2}.
Suppose that $\mathcal{N}$ is an $\varepsilon_0$-net  over $\mathbb{S}_{\C}^{d-1}$
 with the cardinality $\# \mathcal{N} \le (1+\frac{2}{\varepsilon_0})^{2d}$.
Then for any $\vv \in \mathbb{S}_{\C}^{d-1}$, there exists a $\vv_0 \in \mathcal{N}$ such that $\norm{\vv-\vv_0}\le \varepsilon_0$. Thus, when $m\ge C' d\log m$ for a universal constant $C'>0$,  with probability at least $1-2\exp(- c_2 d)$, it holds that
\begin{equation}\label{eq:2cizheng}
\begin{aligned}
& \abs{ \frac 1m \sum_{j=1}^m  b_{j,\Re}^2 (\va_j^* \vv)_{\Re}^2  -\frac 1m \sum_{j=1}^m  b_{j,\Re}^2 (\va_j^* \vv_0)_{\Re}^2} \\
&\le   \frac{1}{m} \sum_{j=1}^m   b_{j,\Re}^2  |\va_j^* \vv| \abs{ \va_j^* (\vv-\vv_0)}+\frac{1}{m} \sum_{j=1}^m   b_{j,\Re}^2 |\va_j^* \vv_0| \abs{ \va_j^* (\vv-\vv_0)} \\
&\le  \xkh{\sqrt{\frac{1}{m} \sum_{j=1}^m   b_{j,\Re}^2  |\va_j^* \vv|^2}+\sqrt{\frac{1}{m} \sum_{j=1}^m  b_{j,\Re}^2 |\va_j^* \vv_0|^2}} \cdot \sqrt{ \frac{1}{m} \sum_{j=1}^m   b_{j,\Re}^2 \abs{ \va_j^* (\vv-\vv_0)}^2} \\
&\le  2 \xkh{\frac{\norm{\vb_{\Re}}^2}{m} + \sqrt{\frac 1 m \sum_{j=1}^m  |b_j|^4 }+\frac{d\norms{\vb_{\Re}}_{\infty}^2}m } \varepsilon_0 \\
&\lesssim \xkh{\frac{\norm{\vb_{\Re}}^2}{m} + 1}\varepsilon_0
\end{aligned}
\end{equation}
where the third inequality follows from Lemma \ref{le:etalowbound} which says that
 \[
 \norm{ \sum_{j=1}^m   b_{j,\Re}^2  \va_j\va_j^*} \lesssim \norm{\vb_{\Re}}^2+ \sqrt d \sqrt{  \sum_{j=1}^m  |b_j|^4} +d\norms{\vb_{\Re}}_{\infty}^2
 \]
 holds with probability at least $1-2\exp(-c_2 d)$. Here, $c_2>0$ is a universal constant.
 Choosing $\varepsilon_0:= c_3\epsilon$ in (\ref{eq:2cizheng}) for some universal constant $c_3>0$ and taking the union bound over $\mathcal N$, we obtain that
\begin{equation} \label{eq:brajv3}
\frac 1m \sum_{j=1}^m  b_{j,\Re}^2 (\va_j^* \vv)_{\Re}^2  \ge \frac{3-\epsilon}{6} \cdot \frac{\norm{\vb_{\Re}}^2}{m}   -\frac{\epsilon}{3}
\end{equation}
holds for all $\vv \in \mathbb{S}_{\C}^{d-1}$ with probability at least
\[
1-2(1+\frac{2}{\varepsilon_0})^{2d} \cdot  \exp(-c_1 \epsilon^2 m/\log m)-2\exp(-c_2 d) \ge 1-2\exp(-c' \epsilon^2 m/\log m)-2\exp(-c_2 d)
\]
provided $m\ge C'' \epsilon^{-2}\log (1/\epsilon) ~d \log m$, where $C''$ and $c'$ are universal positive constants.

Similarly, for the second and third terms of \eqref{eq:split}, when $m\ge C'' \epsilon^{-2}\log (1/\epsilon) ~d \log m$, with probability at least $1-2\exp(-c' \epsilon^2 m/\log m)-2\exp(-c_2 d) $, the followings hold
\begin{equation} \label{eq:brajv1}
\frac 1m \sum_{j=1}^m  b_{j,\Im}^2 (\va_j^* \vv)_{\Im}^2 \ge  \frac{3-\epsilon}{6} \cdot  \frac{\norm{\vb_{\Im}}^2}{m} -\frac{\epsilon}{3}
\end{equation}
and
\begin{equation}\label{eq:brajv2}
\abs{\frac 1m \sum_{j=1}^m  b_{j,\Re}b_{j,\Im} (\va_j^* \vv)_{\Re} (\va_j^* \vv)_{\Im}} \le \frac{\epsilon}3 \cdot \xkh{\frac{\norm{\vb}^2}{m} +1 }.
\end{equation}
Combining (\ref{eq:brajv1}),  (\ref{eq:brajv2}) and  (\ref{eq:brajv3}),  we  arrive at the conclusion.


\end{proof}

\begin{lemma} \label{le:rbavrazav}
Suppose that $\va_j \in \C^d,j=1,\ldots,m $, are i.i.d. complex Gaussian random vectors.
Assume that $\vb \in \C^m$ is a vector obeying $ \sum_{j=1}^m  |b_j|^4 \lesssim m$ and $\norms{\vb}_{\infty} \le \sqrt{\log m}$.
For any $\epsilon \in (0,1)$, if $m\ge  c(\epsilon) d \log m $ then the following holds
with probability at least $1-6 \exp(-c'(\epsilon) m/\log m ) - 6\exp(-c'' d)$:
\[
\abs{\frac 1m \sum_{j=1}^m \zkh{\bar{b}_j(\va_j^* \vv) }_{\Re} \zkh{(\va_j^* \vz) (\va_j^* \vv)}_{\Re} }\le  \epsilon\xkh{ \frac{\norm{\vb}}{\sqrt m}+1} \xkh{1 + \xkh{\frac 1m \sum_{j=1}^m |\va_j^* \vz|^2 |\va_j^* \vv|^2}^{\frac 12}  }
\]
for all $ \vz, \vv \in \mathbb{S}_{\C}^{d-1}$, where $c''$ is a positive universal constant and $c(\epsilon)$, $c'(\epsilon)$ are positive constants
 depending only on $\epsilon$.
\end{lemma}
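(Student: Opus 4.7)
The plan is to follow the same template used to prove Lemma \ref{le:azav2}, since the quantity to be bounded has essentially the same structure: it is a weighted sum of products of Gaussian linear forms in which $b_j$ enters linearly, one factor is a bounded-size real part, and the overall ``magnitude'' grows like $|\va_j^*\vv|^2 |\va_j^*\vz|$. More precisely, expanding the two real parts,
\[
[\bar{b}_j(\va_j^*\vv)]_{\Re}\,[(\va_j^*\vz)(\va_j^*\vv)]_{\Re}
=\bigl(b_{j,\Re}(\va_j^*\vv)_{\Re}+b_{j,\Im}(\va_j^*\vv)_{\Im}\bigr)\bigl((\va_j^*\vz)_{\Re}(\va_j^*\vv)_{\Re}-(\va_j^*\vz)_{\Im}(\va_j^*\vv)_{\Im}\bigr),
\]
so the target sum splits into four pieces, each of the form $m^{-1}\sum_j b_{j,\ast}\,(\va_j^*\vz)_{\ast}\,(\va_j^*\vv)_{\ast}^{2}$ up to sign.

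First I would introduce the same smooth cutoff $\phi\in C_c^\infty(\R)$ as in the proof of Lemma \ref{le:azav2}, with $\phi\equiv 1$ on $[-1,1]$ and $\phi\equiv 0$ outside $[-2,2]$, together with a truncation parameter $\beta$ depending on $\epsilon$. Using $1=\phi(|\va_j^*\vv|/\beta)+(1-\phi(|\va_j^*\vv|/\beta))$ I would decompose the sum into a bounded part $S_\beta(\vz,\vv)$, where $|\va_j^*\vv|\lesssim\beta$ thanks to the cutoff, and a tail remainder $R_\beta(\vz,\vv)$ supported on $\{|\va_j^*\vv|\ge\beta\}$.

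For the bounded part, at a fixed pair $(\vz_0,\vv_0)\in\mathbb{S}_\C^{d-1}\times\mathbb{S}_\C^{d-1}$ each summand is a centered independent sub-Gaussian random variable with sub-Gaussian norm $O(|b_j|\beta^{2})$, so Hoeffding's inequality yields $|S_\beta(\vz_0,\vv_0)|\le \tfrac{\epsilon}{2}\|\vb\|/\sqrt m$ with probability at least $1-2\exp(-c\epsilon^{2}\beta^{-4}m/\|\vb\|_\infty^{2})$. I would then pass to a uniform bound on $\mathbb{S}_\C^{d-1}\times\mathbb{S}_\C^{d-1}$ via a standard $\delta$-net argument. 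The Lipschitz estimate in $(\vz,\vv)$ is the same one used in \eqref{eq:gudvz1}: the cutoff makes $\tau\mapsto\tau\phi(\tau/\beta)$ Lipschitz with constant $O(\beta)$, and after Cauchy--Schwarz the relevant second moments $m^{-1}\sum_j|b_j|\,|\va_j^*\vw|^{2}$ are controlled by Lemma \ref{le:etalowbound} and the hypotheses $\sum_j|b_j|^4\lesssim m$, $\|\vb\|_\infty\lesssim\sqrt{\log m}$. Taking $\delta\asymp\epsilon/\beta^{2}$ and absorbing the resulting $(1+2/\delta)^{4d}$ net factor in the exponent under the assumption $m\gtrsim c(\epsilon)d\log m$ gives $|S_\beta(\vz,\vv)|\le \epsilon(\|\vb\|/\sqrt m+1)$ uniformly.

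For the tail part I would use the crude pointwise bound $|[\bar{b}_j(\va_j^*\vv)]_{\Re}[(\va_j^*\vz)(\va_j^*\vv)]_{\Re}|\le|b_j|\,|\va_j^*\vz|\,|\va_j^*\vv|^{2}$ and Cauchy--Schwarz to estimate
\[
|R_\beta(\vz,\vv)|\le \Bigl(\tfrac1m\sum_j|\va_j^*\vz|^{2}|\va_j^*\vv|^{2}\Bigr)^{1/2}\Bigl(\tfrac1m\sum_j|b_j|^{2}|\va_j^*\vv|^{2}\mathds{1}_{|\va_j^*\vv|\ge\beta}\Bigr)^{1/2}.
\]
Lemma \ref{le:tail} then bounds the second factor uniformly in $\vv$ by $\epsilon(\|\vb\|/\sqrt m+1)$ once $\beta$ is chosen sufficiently large as a function of $\epsilon$. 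Combining the bounds for $S_\beta$ and $R_\beta$ yields the claimed inequality, with the failure probability controlled by the sum of the probabilities from Hoeffding (shrunk by the net), Lemma \ref{le:etalowbound}, and Lemma \ref{le:tail}.

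The main obstacle I expect is the net argument for the bounded part: because the summand depends \emph{simultaneously} on $\vz$ and $\vv$ and contains three Gaussian linear forms rather than the two in Lemma \ref{le:azav2}, the naive Lipschitz estimate produces cross terms such as $|b_j|\,|\va_j^*\vv|\,|\va_j^*\vz|$ that need to be re-grouped and absorbed using Cauchy--Schwarz plus the spectral bound on $\sum_j|b_j|\va_j\va_j^{*}$ from Lemma \ref{le:etalowbound}, in order to keep $\delta$ polynomially small in $\epsilon/\beta$ without blowing up the sample complexity beyond $c(\epsilon)d\log m$. Once this bookkeeping is carried out, the rest is the same Bernstein/Hoeffding $+$ truncation machinery already established.
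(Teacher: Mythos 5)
Your proof follows the paper's strategy precisely: the same smooth-cutoff decomposition into a bounded part (Hoeffding at a fixed pair plus a $\delta$-net with Lipschitz control via Lemma \ref{le:etalowbound}) and a tail remainder (Cauchy--Schwarz followed by Lemma \ref{le:tail}), with only a cosmetic difference in that you expand $[(\va_j^*\vz)(\va_j^*\vv)]_{\Re}$ into two further pieces rather than keeping it intact as the paper does in its $T_1$ and $T_2$. One small correction: the ``obstacle'' you anticipate is largely illusory, since the summand in Lemma \ref{le:azav2} already contains three Gaussian linear forms (one in $\vz$, two in $\vv$) exactly as here, so the net/Lipschitz bookkeeping carries over essentially unchanged --- indeed, the paper itself disposes of $T_2$ and $r$ by simply citing ``the method similar to the proof of Lemma \ref{le:azav2}.''
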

\begin{proof}
Suppose that $\phi\in C_c^{\infty}(\mathbb R)$ is a Lipschitz continuous function
satisfying $0\le \phi(x)\le 1$ for all $x\in \mathbb R$. We furthermore require
$\phi(x)=1$ for $|x|\le 1$ and $\phi(x)=0$ for $|x|\ge 2$. For any   $\beta>0$, we
have
\begin{equation}\label{eq:T12r}
\abs{\frac 1m \sum_{j=1}^m \zkh{\bar{b}_j(\va_j^* \vv) }_{\Re} \zkh{(\va_j^* \vz) (\va_j^* \vv)}_{\Re} }\le  T_1+T_2 +r,
\end{equation}
where
\begin{equation*}
\begin{aligned}
T_1&:= \abs{\frac 1m \sum_{j=1}^m b_{j,\Re} (\va_j^* \vv)_{\Re}  \zkh{(\va_j^* \vz) (\va_j^* \vv)}_{\Re}  \phi^2 \xkh{\frac {|\va_j^* \vv|} {\beta}}} , \\
 T_2&:=\abs{ \frac 1m \sum_{j=1}^m b_{j,\Im} (\va_j^* \vv)_{\Im} \zkh{(\va_j^* \vz) (\va_j^* \vv)}_{\Re} \phi^2\xkh{\frac {|\va_j^* \vv|} {\beta}}} ,\\
  r&:=\frac 1m \sum_{j=1}^m |b_j| |\va_j^* \vz| |\va_j^* \vv|^2\1_{\dkh{|\va_j^* \vv | \ge \beta}}  .
  \end{aligned}
\end{equation*}
We claim that,  for any $0<\epsilon<1$, there exists a  sufficiently large constant $\beta>1$
such that   if $m\ge c(\epsilon) d \log m $ then, with probability
at least $1-6 \exp(-c'(\epsilon) m/\log m ) - 6\exp(c'' d)$, the followings hold
\begin{equation}\label{eq:claim00}
T_1\le  \frac{\epsilon}{2} \cdot\xkh{ \frac{\norm{\vb}}{\sqrt m}+1},\quad T_2\le  \frac{\epsilon}{2} \cdot \xkh{ \frac{\norm{\vb}}{\sqrt m}+1}
\end{equation}
and
\begin{equation} \label{eq:claim11}
r \le  \epsilon \xkh{ \frac{\norm{\vb}}{\sqrt m}+1} \xkh{\frac 1m \sum_{j=1}^m |\va_j^* \vz|^2 |\va_j^* \vv|^2}^{\frac 12},
\end{equation}
for all $ \vz, \vv  \in \mathbb{S}_{\C}^{d-1}$.
 Here
 $c(\epsilon), c'(\epsilon)  $ are constants depending only on $\epsilon$ and $c''$ is a positive universal constant. Combining (\ref{eq:T12r}),  (\ref{eq:claim00}) and \eqref{eq:claim11}, we  obtain  that
\[
\abs{\frac 1m \sum_{j=1}^m \zkh{\bar{b}_j(\va_j^* \vv) }_{\Re} \zkh{(\va_j^* \vz) (\va_j^* \vv)}_{\Re} }\le  \epsilon\xkh{ \frac{\norm{\vb}}{\sqrt m}+1} \xkh{1 + \xkh{\frac 1m \sum_{j=1}^m |\va_j^* \vz|^2 |\va_j^* \vv|^2}^{\frac 12}  }
\]
for all $ \vz, \vv \in \mathbb{S}_{\C}^{d-1}$.

It remains to prove the claims (\ref{eq:claim00}) and (\ref{eq:claim11}). For any fixed $\vz_0,\vv_0 \in \mathbb{S}_{\C}^{d-1}$, due to the cut-off $\phi\xkh{\frac {|\va_j^* \vv_0|}{\beta}}$, the terms
$  (\va_j^* \vv_0)_{\Re} \Re\xkh{(\va_j^* \vz_0) (\va_j^* \vv_0)}  \phi^2\xkh{\frac {|\va_j^* \vv_0|} {\beta}}$ are centered, independent sub-gaussian random variables with the sub-gaussian norm
$O(\beta)$. According to Hoeffding's inequality, we obtain that the following holds
with probability at least $1-2\exp(-c  \epsilon^2 \beta^{-2} m)$
\begin{equation}\label{eq:gudvz0}
\abs{\frac 1m \sum_{j=1}^m   b_{j,\Re} (\va_j^* \vv_0)_{\Re} \zkh{(\va_j^* \vz_0) (\va_j^* \vv_0)}_{\Re} \phi^2\xkh{\frac {|\va_j^* \vv_0|} {\beta}} }  \le \frac {\epsilon \norm{\vb}}{4 \sqrt{m}},
\end{equation}
 where $c >0$ is a universal constant.
 We next show that (\ref{eq:gudvz0}) holds
for all unit vectors $\vz, \vv \in \C^d$, for which we adopt a basic version of a
$\delta$-net argument. We assume that $\mathcal{N}$ is a $\delta$-net of the unit
complex sphere in $\C^d$  and hence the covering number $\# \mathcal{N}\le
(1+\frac{2}{\delta})^{2d}$. For any $\vz, \vv\in  \mathbb{S}_{\C}^{d-1}$, there exists a
$\vz_0,\vv_0  \in \mathcal{N}\times \mathcal{N}$ such that $\norm{\vz-\vz_0}\le \delta$ and $\norm{\vv-\vv_0}\le \delta$. Noting $f(\tau):=\tau^2
\phi(\tau/\beta)$ is a bounded function with Lipschitz constant $O(\beta)$, we obtain  that  if $m\gtrsim d\log m$ then, with probability at least $1-2\exp(-c_1 d)$,  it holds that
\begin{equation}\label{eq:gudvz1}
\begin{aligned}
&\Big| \frac 1m \sum_{j=1}^m   b_{j,\Re} (\va_j^* \vv)_{\Re}  \zkh{(\va_j^* \vz) (\va_j^* \vv)}_{\Re}   \phi^2\xkh{\frac {|\va_j^* \vv|} {\beta}} -\frac 1m \sum_{j=1}^m   b_{j,\Re} (\va_j^* \vv_0)_{\Re}  \zkh{(\va_j^* \vz_0) (\va_j^* \vv_0)}_{\Re}  \phi^2\xkh{\frac {|\va_j^* \vv_0|} {\beta}}  \Big| \\
& \le  \frac 1m \sum_{j=1}^m |b_{j,\Re}|  |\va_j^* \vv| |\va_j^* \vz|  \phi\xkh{\frac {|\va_j^* \vv|} {\beta}}  \abs{|\va_j^* \vv| \phi\xkh{\frac {|\va_j^* \vv|} {\beta}} - |\va_j^* \vv_0| \phi\xkh{\frac {|\va_j^* \vv_0|} {\beta}}} \\
&\quad +\frac 1m \sum_{j=1}^m |b_{j,\Re}|  |\va_j^* \vv| \phi\xkh{\frac {|\va_j^* \vv|} {\beta}} |\va_j^* \vv_0| \phi\xkh{\frac {|\va_j^* \vv_0 |} {\beta}}  \abs{\va_j^* \vz- \va_j^* \vz_0} \\
&\quad +  \frac 1m \sum_{j=1}^m |b_{j,\Re}| |\va_j^* \vz_0|  |\va_j^* \vv_0| \phi\xkh{\frac {|\va_j^* \vv_0|} {\beta}}  \Big | |\va_j^* \vv| \phi\xkh{\frac {|\va_j^* \vv|} {\beta}} - |\va_j^* \vv_0| \phi\xkh{\frac {|\va_j^* \vv_0|} {\beta}} \Big|  \\
&\lesssim   \frac {\beta} m \sum_{j=1}^m |b_{j,\Re}| |\va_j^* \vv| \abs{\va_j^* \vv- \va_j^* \vv_0}+ \frac {\beta^2} m \sum_{j=1}^m |b_{j,\Re}|  \abs{\va_j^* \vz- \va_j^* \vz_0}   +  \frac {\beta} m \sum_{j=1}^m  |b_{j,\Re}| |\va_j^* \vz_0|  \abs{\va_j^* \vv- \va_j^* \vv_0} \nonumber\\
&\lesssim  \frac{\beta^2\norm{\vb}}{\sqrt{m}} \norm{\vz-\vz_0} + 2\beta \xkh{\frac{ \norm{\vb} }{\sqrt m}+1}  \norm{\vv-\vv_0} \le 2 \xkh{ \frac{ \norm{\vb}}{\sqrt{m}} +1}\beta^2\delta  ,
\end{aligned}
\end{equation}
provided $ \sum_{j=1}^m  |b_j| \lesssim m$ due to the fact $\sum_{j=1}^m  |b_j|^4 \lesssim m$, where the third inequality follows from Lemma \ref{le:etalowbound}.  Here, $c_1>0$ is a universal constant.  Choosing $\delta=c_2 \epsilon/\beta^2 $ for some universal constant $c_2>0$ and taking the union bound,  we obtain that
\[
T_1=\abs{ \frac 1m \sum_{j=1}^m  b_{j,\Re} (\va_j^* \vz)_{\Re} |\va_j^* \vv|^2 \phi\xkh{\frac {|\va_j^* \vv|} {\beta}} } \le \xkh{\frac { \norm{\vb}}{ \sqrt{m}}+1}\cdot \frac{\epsilon}2  \quad \mbox{for all} \quad \vz,\vv \in \mathbb{S}_{\C}^{d-1}
\]
holds with probability at least
\[
1-2(1+\frac{2}{\delta})^{2d} \cdot \exp(-c  \epsilon^2 \beta^{-2} m) -2 \exp(-c_1 d) \ge 1-2\exp(-c_3  \epsilon^2 \beta^{-2} m) -2 \exp(-c_1 d)
\]
 provided  $m\ge C\cdot  (\beta/\epsilon)^{2} \log (\beta/\epsilon)  d \log m $. Here,  $C$ and $c_3$ are positive universal constants.
Using the method similar to the proof of Lemma \ref{le:azav2}, we can obtain the bounds for  $T_2$ and $r$.
We omit the detail here.
\end{proof}

\section{Proofs of technical results in Section \ref{se:opGD}}
\label{appendix:B}
{\noindent\it \textbf{Proof of Lemma} \ref{le:bound4x}}~
A simple calculation shows that
\[
\frac 1m \sum_{j=1}^m \abs{\va_j^* \vx +b_j}^2=\frac 1m \sum_{j=1}^m \abs{\va_j^* \vx }^2+ \frac 2m \sum_{j=1}^m \zkh{\bar{b}_j (\va_j^* \vx)}_{\Re} + \frac{\norm{\vb}^2}m.
\]
We first consider the term $\frac 1m \sum_{j=1}^m |\va_j^* \vx |^2$. We use Bernstein's inequality to obtain that
\[
(1-\epsilon) \norm{\vx}^2  \le \frac 1m \sum_{j=1}^m \abs{\va_j^* \vx }^2 \le (1+\epsilon) \norm{\vx}^2\quad \text{for any }0<\epsilon\le 1,
\]
holds with probability at least $1-2\exp(-c_1 \epsilon^2 m)$.  Here, $c_1>0$ is a universal constant.
For the second term, noting that
\[
\frac 1m \sum_{j=1}^m \zkh{\bar{b}_j (\va_j^* \vx)}_{\Re}=\frac 1m \sum_{j=1}^m {b}_{j,\Re} (\va_j^* \vx)_{\Re}+\frac 1m \sum_{j=1}^m {b}_{j,\Im} (\va_j^* \vx)_{\Im},
\]
we use Hoeffding's inequality to obtain that
\[
\abs{\frac 1m \sum_{j=1}^m b_{j,\Re} (\va_j^* \vx)_{\Re} }  \le \frac{\epsilon \norm{\vb_{\Re}}}{\sqrt m} \norm{\vx},
\]
and
\[
\abs{\frac 1m \sum_{j=1}^m b_{j,\Im} (\va_j^* \vx)_{\Im} }  \le \frac{\epsilon \norm{\vb_{\Im}}}{\sqrt m} \norm{\vx}
\]
hold with probability at least $1-2\exp(-c_2 \epsilon^2 m)$, where $c_2>0$ is a universal constant.

Recall that $\norm{\vb} \le c_3 \sqrt{m} \norm{\vx}$ for a universal constant $c_3>0$. Collecting the above results, we obtain that, with probability at least $1-4\exp(-c_4 \epsilon^2 m)$, the following  holds
\[
(1- c_0\epsilon)  \norm{\vx}^2 +\frac{\norm{\vb}^2}m\le \frac 1m \sum_{j=1}^m \abs{\va_j^* \vx +b_j}^2 \le (1+c_0\epsilon)  \norm{\vx}^2 +\frac{\norm{\vb}^2}m
\]
for a universal constant $c_0:=2\sqrt2 c_3+1$. Here, $c_4>0$ is a universal constant.
Taking $\epsilon:=\frac3{4c_0}$, we obtain that the following holds  with probability at least $1-4\exp(-c m)$:
\begin{equation*}
\frac14 \norm{\vx}^2 \le \frac 1m\sum_{j=1}^m y_j -\frac{\norm{\vb}^2}{m} \le \frac74 \norm{\vx}^2,
\end{equation*}
where $c>0$ is a universal constant, which implies
\begin{equation*} 
\frac13 R_0 \le \norm{\vx}\leq  R_0,
\end{equation*}
where $R_0:=2\xkh{\frac 1m\sum_{j=1}^m y_j -\frac{\norm{\vb}^2}{m} }^{1/2}$.
 This completes the proof.
\qed

\vspace{2em}
{\noindent\it \textbf{Proof of Lemma} \ref{le:Lipcondition}}~
For any $\vz,\vz' \in \mathcal{S}_R$, we have
\begin{equation} \label{le3.2_1}
\begin{aligned}
 &\norm{\nabla f(\vz) -\nabla f(\vz')} \\
 & =  \frac{\sqrt 2}m \left\| \sum_{j=1}^m  \xkh{|\va_j^* \vz+b_j|^2-y_j}  \xkh{\va_j^* \vz+b_j} \va_j-\sum_{j=1}^m  \xkh{|\va_j^* \vz'+b_j|^2-y_j}  \xkh{\va_j^* \vz'+b_j} \va_j   \right\|_2 \\
 & \le  \sqrt 2 \left\| \frac 1m \sum_{j=1}^m  \xkh{|\va_j^* \vz+b_j|^2- |\va_j^* \vz'+b_j|^2} \va_j\va_j^* \vz' \right\|_2+  \sqrt 2 \left\| \frac 1m \sum_{j=1}^m   |\va_j^* \vz+b_j|^2 \va_j\va_j^* (\vz-\vz') \right\|_2 \\
 & \quad + \sqrt 2 \left\| \frac 1m \sum_{j=1}^m  \xkh{|\va_j^* \vz+b_j|^2- |\va_j^* \vz'+b_j|^2} b_j \va_j \right\|_2+  \sqrt 2 \left\| \frac 1m \sum_{j=1}^m   |\va_j^* \vx+b_j|^2 \va_j\va_j^* (\vz-\vz') \right\|_2 .
 \end{aligned}
\end{equation}
Since $\va_j  \in \C^d$ are complex Gaussian random vectors, with probability at least $1-c_a m^{-d}$, the following  holds
 \[
\max_{1\le j \le m} \norm{\va_j} \le 2 \sqrt{d \log m},
\]
where $c_a>0$ is a universal constant.
Lemma \ref{le:dizwbj} implies that when $m\ge Cd$ for a universal constant $C>0$,
with probability at least $1-3\exp(-cm)$, it holds that
\begin{eqnarray*}
\frac 1m \sum_{j=1}^m \abs{|\va_j^* \vz+b_j|^2-|\va_j^* \vz'+b_j|^2 } \le 3 \xkh{ R+  \frac{ \norm{\vb}}{\sqrt m } } \norm{\vz-\vz'} \quad \mbox{for all} \; \vz,\vz' \in \mathcal{S}_R,
\end{eqnarray*}
where $c>0$ is a universal constant.
  Combining the above two estimators, we obtain that the following holds with probability at least $1-3\exp(-cm)-c_a m^{-d}$:
\begin{equation}\label{le3.2_2}
\begin{aligned}
& \left\| \frac 1m \sum_{j=1}^m  \xkh{|\va_j^* \vz+b_j|^2- |\va_j^* \vz'+b_j|^2} \va_j\va_j^* \vz' \right\|_2 \\
& \le   \max_{1\le j\le m}\norm{\va_j}^2\cdot  \norm{\vz'} \cdot \frac 1m \sum_{j=1}^m \abs{|\va_j^* \vz+b_j|^2-|\va_j^* \vz'+b_j|^2 }  \\
  & \le  12\cdot d\cdot\log m \cdot R\cdot \xkh{R+\frac{ \norm{\vb}}{\sqrt m}}\cdot\norm{\vz-\vz'},
  \end{aligned}
\end{equation}
provided  $m\ge Cd$.
Here, we use the fact that $\norm{\vz'} \le R$ due to $\vz' \in \mathcal{S}_R$. Using the same argument as above, we obtain that
\begin{equation}
\left\| \frac 1m \sum_{j=1}^m  \xkh{|\va_j^* \vz+b_j|^2- |\va_j^* \vz'+b_j|^2} b_j \va_j \right\|_2 \le 6 \norms{\vb}_{\infty} \sqrt{d\log m} \xkh{R+\frac{ \norm{\vb}}{\sqrt m}} \norm{\vz-\vz'}. \label{le3.2_3}
\end{equation}
 Noting that $\norm{\frac 1m \sum_{j=1}^m \va_j\va_j^*} \le 2 $ holds with probability at least $1-\exp(-c m)$, we obtain that
\begin{equation} \label{le3.2_4}
\begin{aligned}
 \left\| \frac 1m \sum_{j=1}^m   |\va_j^* \vz+b_j|^2 \va_j\va_j^* (\vz-\vz') \right\|_2
& \le \Bigg\| \frac 1m \sum_{j=1}^m   |\va_j^* \vz+b_j|^2 \va_j\va_j^* \Bigg\|_2 \norm{\vz-\vz'} \\
&\le  2 \xkh{\max_{1\le j\le m}\norm{\va_j}^2\cdot  \norm{\vz}^2+\norms{\vb}_{\infty}^2} \cdot  \Big\| \frac 1m \sum_{j=1}^m \va_j\va_j^*\Big\| \cdot  \norm{\vz-\vz'} \\
&\le  4\xkh{4R^2d\log m+\norms{\vb}_{\infty}^2 } \norm{\vz-\vz'}.
\end{aligned}
\end{equation}
Here, we use the inequality $ |\va_j^* \vz+b_j|^2 \le 2(|\va_j^* \vz|^2+|b_j|^2)$ for any $j$. Similarly,
\begin{equation}
\left\| \frac 1m \sum_{j=1}^m   |\va_j^* \vx+b_j|^2 \va_j\va_j^* (\vz-\vz') \right\|_2 \le 4\xkh{4\norm{\vx}^2d\log m+\norms{\vb}_{\infty}^2 } \norm{\vz-\vz'}. \label{le3.2_5}
\end{equation}
Substituting \eqref{le3.2_2}, \eqref{le3.2_3}, \eqref{le3.2_4} and \eqref{le3.2_5} into \eqref{le3.2_1}, we obtain that when $m\ge Cd$, with probability at least $1-4\exp(-cm)-c_a m^{-d}$, it holds that
\[
\norm{\nabla f(\vz) -\nabla f(\vz')} \le C_R \norm{\vz-\vz'} \quad \mbox{for all} \quad \vz,\vz' \in \mathcal{S}_R,
\]
where
\[
C_R=6\sqrt 2 \xkh{2R d\log m +\norms{\vb}_{\infty} \sqrt{d\log m} }\xkh{R+\frac{ \norm{\vb}}{\sqrt m}}+8\sqrt 2 \Big( 2d\log m (R^2+\norm{\vx}^2)+ \norms{\vb}_{\infty}^2 \Big).
\]
This completes the proof.
\qed


%
%



\end{document}